\documentclass[12pt]{article}

\usepackage[T1]{fontenc}
\usepackage{lmodern}

\usepackage{setspace}
\setdisplayskipstretch{1}

\usepackage[margin=1.25in]{geometry}

\usepackage[dvipsnames]{xcolor}
\usepackage{pdfpages}
\usepackage{float}
\usepackage{multirow}
\usepackage{caption}
\usepackage{subcaption}
\usepackage{epigraph}
\usepackage{titlesec}

\titleformat{\section}
  {\normalfont\large\bfseries}{\thesection}{1em}{}
\titleformat{\subsection}
  {\normalfont\normalsize\bfseries}{\thesubsection}{1em}{}

\AtBeginDocument{%
  \everymath{\color{Violet}}%
}
\AtBeginEnvironment{displaymath}{\color{Violet}}
\AtBeginEnvironment{equation}{\color{Violet}}
\AtBeginEnvironment{equation*}{\color{Violet}}
\AtBeginEnvironment{align}{\color{Violet}}
\AtBeginEnvironment{align*}{\color{Violet}}

\usepackage{mathtools}
\usepackage{amssymb}
\usepackage{amsthm}
\usepackage{amsfonts}
\usepackage{aliascnt}
\usepackage{accents}
\usepackage{dutchcal}
\usepackage{bbm}

\usepackage{enumitem}
\usepackage{sgame}
\usepackage{tikz}
\usepackage{tikz-cd}
\usetikzlibrary{calc,shapes,arrows}
\usepackage{tcolorbox}
\usepackage{listings}
\usepackage[normalem]{ulem}

\usepackage[round]{natbib}
\newcommand{\E}{\mathbb{E}}
\newcommand{\R}{\mathbb{R}}

\newtheorem{theorem}{Theorem}[section]

\newaliascnt{proposition}{theorem}
\newtheorem{proposition}[proposition]{Proposition}
\aliascntresetthe{proposition}

\newaliascnt{lemma}{theorem}
\newtheorem{lemma}[lemma]{Lemma}
\aliascntresetthe{lemma}

\newaliascnt{corollary}{theorem}
\newtheorem{corollary}[corollary]{Corollary}
\aliascntresetthe{corollary}

\newaliascnt{claim}{theorem}
\newtheorem{claim}[claim]{Claim}
\aliascntresetthe{claim}

\theoremstyle{definition}

\newaliascnt{definition}{theorem}
\newtheorem{definition}[definition]{Definition}
\aliascntresetthe{definition}

\newaliascnt{example}{theorem}

\aliascntresetthe{example}

\newaliascnt{assumption}{theorem}

\aliascntresetthe{assumption}

\newaliascnt{condition}{theorem}

\aliascntresetthe{condition}

\newaliascnt{question}{theorem}

\aliascntresetthe{question}

\newaliascnt{remark}{theorem}

\aliascntresetthe{remark}

\newaliascnt{remarks}{theorem}

\aliascntresetthe{remarks}

\newaliascnt{aside}{theorem}

\aliascntresetthe{aside}

\newaliascnt{note}{theorem}

\aliascntresetthe{note}

\usepackage{thmtools}
\usepackage{thm-restate}

\usepackage{hyperref}

\hypersetup{
    colorlinks=true,
    linkcolor=OrangeRed,
    filecolor=Thistle,
    urlcolor=Thistle,
    citecolor=Thistle,
}

\usepackage[nameinlink]{cleveref}

\crefname{theorem}{theorem}{theorems}
\Crefname{theorem}{Theorem}{Theorems}
\crefname{proposition}{proposition}{propositions}
\Crefname{proposition}{Proposition}{Propositions}
\crefname{lemma}{lemma}{lemmas}
\Crefname{lemma}{Lemma}{Lemmas}
\crefname{corollary}{corollary}{corollaries}
\Crefname{corollary}{Corollary}{Corollaries}
\crefname{claim}{claim}{claims}
\Crefname{claim}{Claim}{Claims}
\crefname{definition}{definition}{definitions}
\Crefname{definition}{Definition}{Definitions}
\crefname{example}{example}{examples}
\Crefname{example}{Example}{Examples}
\crefname{assumption}{assumption}{assumptions}
\Crefname{assumption}{Assumption}{Assumptions}

\makeatletter
\let\cref@old@isrefconsecutive\cref@isrefconsecutive
\def\cref@isrefconsecutive#1#2{%
  \begingroup
    \def\cref@assumptiontype{assumption}%
    \cref@gettype{#1}{\cref@typea}%
    \ifx\cref@typea\cref@assumptiontype
      \endgroup
      \@cref@refconsecutivefalse
    \else
      \endgroup
      \cref@old@isrefconsecutive{#1}{#2}%
    \fi
}
\makeatother

\crefname{condition}{condition}{conditions}
\Crefname{condition}{Condition}{Conditions}
\crefname{question}{question}{questions}
\Crefname{question}{Question}{Questions}
\crefname{remark}{remark}{remarks}
\Crefname{remark}{Remark}{Remarks}
\crefname{remarks}{remarks}{remarks}
\Crefname{remarks}{Remarks}{Remarks}
\crefname{aside}{aside}{asides}
\Crefname{aside}{Aside}{Asides}
\crefname{note}{note}{notes}
\Crefname{note}{Note}{Notes}
\crefname{appendix}{appendix}{appendices}
\Crefname{appendix}{Appendix}{Appendices}

\newcommand{\secref}[1]{\hyperref[#1]{\S\ref*{#1}}}

\newcommand{\Conditionref}[1]{\hyperlink{eqcond:#1}{Condition~(#1)}}

\definecolor{backcolour}{rgb}{0.63,0.79,0.95}
\lstdefinestyle{mystyle}{
  backgroundcolor=\color{backcolour},
  basicstyle=\ttfamily\footnotesize,
  breakatwhitespace=false,
  breaklines=true,
  captionpos=b,
  keepspaces=true,
  numbers=left,
  numbersep=5pt,
  showspaces=false,
  showstringspaces=false,
  showtabs=false,
  tabsize=2
}
\begin{document}
\title{More Goods Means More Commitment}
\author{Mark Whitmeyer\footnote{Arizona State University. \href{mailto:mark.whitmeyer@gmail.com}{mark.whitmeyer@gmail.com}. I solicited feedback from \href{refine.ink}{refine.ink} and used ChatGPT throughout the research process. For KS.}}
\date{\today}
\maketitle

A seller with many independently valued goods can predict a buyer's total willingness to pay more accurately than her value for any individual good. Scope abrogates uncertainty and the buyer's rents.

A seller of a durable good who cannot commit to future prices competes with her future selves driving prices downward. Patience and lack of commitment extinguish the monopoly power of the seller and her profits.

Set these forces opposite. Which dominates? Will \textbf{scope} grant \textbf{commitment}? 

To wit, a seller without commitment to future offers interacts with a buyer over time. There are many goods for sale. At each date, the seller jointly announces a menu of subsets of the remaining inventory and current prices. The buyer may purchase one of the offered subsets or wait. In the next period, the seller may again offer any menu of the remaining goods. Neither can commit to future behavior.

The buyer's values are i.i.d. draws from a binary distribution supported on $\{\ell,b\}$ ($0<\ell<b$, \(q \coloneqq \mathbb{P}(b)\)). In the first part of \Cref{thm:main}, I show that in any equilibrium, 
as players become maximally patient ($\delta_K\uparrow1$) and scope increases sufficiently fast compared to patience ($K(1-\delta_K)\to\infty$, with \(K\) the number of goods), revenue per good converges to its expected value \(\mu=\ell+q(b-\ell)\) and buyer utility per good converges to zero. Scope can overpower the lack of commitment.


But scope need not. Sufficiently frequent revision opportunities (vis-a-vis scope) flips the extraction result. In the second part of \Cref{thm:main}, I show that when $(1-\delta_K)\mathrm{e}^{AK}\to0$, for a finite constant $A$ that depends only on $(\ell,b,q)$, seller revenue per good approaches $\ell$ and buyer utility per good approaches $\mu-\ell$ in every equilibrium. 

\paragraph{Intuition.} The forces behind the two directions of the theorem are different. When \(K(1-\delta_K)\to\infty\), the buyer's value per good for the grand bundle becomes nearly predictable, and scope grows quickly relative to the buyer's value from waiting. If equilibrium revenue per good were appreciably below the resulting bundle-value benchmark, the seller could profitably deviate only at the current date by offering the grand bundle at a price above her equilibrium revenue but sufficiently discounted that rejection is rare, returning to the equilibrium strategy after any rejection. 

In contrast, when \((1-\delta_K)\mathrm e^{AK}\to0\), delay is sufficiently cheap that the all-high type can imitate the all-low type at little cost. This ultimately bounds the seller's excess revenue in a recursive manner. If a purchase leaves sufficient probability that all the previously remaining values were low, the current bound depends on a continuation problem with fewer goods. If rejection raises that probability enough, the current bound depends on a continuation problem with the same goods but a greater probability that all are low. Otherwise, every purchase would lower that probability too far for the fewer-goods continuation bound to apply, and Bayes' rule bounds the probability that such a purchase occurs before one of the preceding two cases is reached. Each step introduces a loss proportional to \(1-\delta_K\), while iterating the argument across inventories and beliefs amplifies this loss by at most an exponential factor in \(K\). 

\section{The Game}\label{sec:model}

There are \(K \geq 1\) goods, with \([K]\coloneqq\{1,\ldots,K\}\). There is a buyer (he), who privately observes his vector of valuations for the goods \(v=(v_1,\ldots,v_K)\in V\coloneqq\{\ell,b\}^K\), drawn from a common prior \(P\in\Delta(V)\), where \(0<\ell<b\). I assume that the components of \(v\) are i.i.d., with \(P(v_i=b)=q\in(0,1)\) and write \(\mu=\ell+q(b-\ell)\). For \(S\subseteq[K]\), write \(v(S)\coloneqq\sum_{i\in S}v_i\).

There is also a seller (she). She and the buyer interact over infinitely many periods. For each \textit{inventory} \(I\subseteq[K]\), let \(\mathcal M(I)\) denote the set of finite menus \(M\) satisfying
\[(\emptyset,0)\in M, \qquad \text{\&} \qquad M\setminus\{(\emptyset,0)\} \subseteq \{(S,p)\colon \emptyset\neq S\subseteq I, p\in\R\}.\]
I write \(0\) for the offer corresponding to rejection by the buyer, \((\emptyset,0)\). At date \(t\in\{0,1,\ldots\}\), denote the publicly observed remaining inventory by \(I_t\). If \(I_t\neq\emptyset\), the seller jointly announces a costless public message \(s_t\in[0,1]\) and a menu \(M_t\in\mathcal M(I_t)\).\footnote{I use the message as a payoff-irrelevant public label for the realization of the seller's mixed strategy and only in the equilibrium-existence proof (the two directions of the main theorem don't use it). In particular, I use a sequence of finite games as an approximation of the true game, with the complication that when I pass to the limit, distinct menus in the approximating games may converge to the same menu while carrying different buyer responses, posteriors, and continuation play. To sidestep this issue, I assign different messages whenever the limiting menus coincide but the associated buyer responses, posteriors, or continuation play differ. Note that an exogenous public sunspot uniformly distributed on \([0,1]\) and observed before the seller announces her menu would also do the trick.} The buyer observes both message and menu before responding, selecting one offer \(a_t=(S_t,p_t)\in M_t\). Following his selection, the game moves to the next period, with \(I_{t+1}=I_t\setminus S_t\). The game ends when \(I_t=\emptyset\). Costs are zero. The seller and buyer share a common discount factor \(\delta\in(0,1)\), and their discounted payoffs are
\begin{equation}\label{eq:payoffs}R=\E\left[\sum_{t\geq0}\delta^t p_t\right] \qquad \text{\&} \qquad U=\E\left[\sum_{t\geq0}\delta^t\left(v(S_t)-p_t\right)\right].\end{equation}

The public history at the beginning of date \(t\) is \(h_t\coloneqq\left((M_\tau,s_\tau,a_\tau)\right)_{\tau<t}\), where \(M_\tau\in\mathcal M(I_\tau)\), \(a_\tau=(S_\tau,p_\tau)\in M_\tau\), and \(I_{\tau+1}=I_\tau\setminus S_\tau\) for every \(\tau<t\), starting from \(I_0=[K]\). I term \(h_t\) a \textit{seller history}. A seller history \(h_t\) is \textit{terminal} if \(I_t=\emptyset\) and \textit{nonterminal} otherwise. The game ends at a terminal seller history, and the seller moves only at a nonterminal seller history. After the seller chooses \(M_t\in\mathcal M(I_t)\) and \(s_t\in[0,1]\) at a nonterminal seller history, the public history when the buyer moves is \((h_t,M_t,s_t)\). The buyer's private history when he moves is \((v,h_t,M_t,s_t)\). At a seller history \(h\), write \(I(h)\) for the remaining inventory, \(P_h\) for her posterior on \( V\), and \(m(h)\coloneqq|I(h)|\).

A behavioral strategy for the seller, \(\sigma\), assigns to each nonterminal seller history \(h\) a probability distribution \(\sigma(\cdot\mid h)\in\Delta\left(\mathcal M(I(h))\times[0,1]\right)\). For each \(v\in V\), a behavioral strategy for the buyer, \(\beta_v\), assigns to each public history \((h,M,s)\) at which the buyer moves a probability distribution \(\beta_v(\cdot\mid h,M,s)\in\Delta(M)\). Note that both strategies may depend on the entire public history, while the buyer's strategy may also depend on his privately observed type.

An \textit{assessment} consists of a seller strategy \(\sigma\), buyer strategies \((\beta_v)_{v\in V}\), and a posterior over \( V\) at every seller history, including seller histories that occur with probability zero. After the buyer responds with \(a\in M\) at \((h,M,s)\), write the posterior at the ensuing seller history as \(P_{h,M,s,a}\).

Starting from a seller history \(h\), the assessment determines a distribution over all subsequent menus, messages, and responses. Let \(R(h)\) denote the seller's expected discounted revenue from that date onward and, conditional on the buyer having type \(v\), let \(u_v(h)\) denote his expected discounted utility. Starting from a public history \((h,M,s)\) at which the buyer moves, define \(R(h,M,s)\) and \(u_v(h,M,s)\) analogously. These expected payoffs are the players' \textit{continuation payoffs}. At a terminal seller history, \(R(h)=u_v(h)=0\). An assessment is \textit{finite} if all of these expected payoffs are real numbers.

For every nonterminal seller history \(h\), every \(M\in\mathcal M(I(h))\), every \(s\in[0,1]\), and every \(a\in M\), define
\[\pi_a(h,M,s) \coloneqq \sum_{\hat v\in V} P_h(\hat v)\beta_{\hat v}(a\mid h,M,s).\]
With \(h\), \(M\), and \(s\) fixed, write this probability as \(\pi_a\).

\begin{definition}\label{def:equilibrium}
An \textit{equilibrium} is a finite assessment that satisfies:
\begin{enumerate}[nosep,noitemsep]
\item[(i)]\hypertarget{eqcond:ii}{} At every nonterminal seller history \(h\), no seller strategy delivers the seller expected discounted revenue greater than \(R(h)\) when play starts at \(h\), holding fixed the buyer strategies and the posteriors specified by the assessment.

\item[(ii)]\hypertarget{eqcond:iii}{} At every public history \((h,M,s)\) at which the buyer moves and for every \(v\in V\), no buyer strategy for type \(v\) delivers him expected discounted utility greater than \(u_v(h,M,s)\) when his type is \(v\) and play starts at \((h,M,s)\), holding fixed the seller strategy and the posteriors specified by the assessment.

\item[(iii)]\hypertarget{eqcond:iv}{} The posterior at the initial seller history is \(P\). At every nonterminal seller history \(h\), for every \(M\in\mathcal M(I(h))\), every \(s\in[0,1]\), and every \(a\in M\), the posterior at the seller history following the buyer's response satisfies
\begin{equation}\label{eq:bayes} P_h(v)\beta_v(a\mid h,M,s) = \pi_a P_{h,M,s,a}(v) \qquad\text{for every }v\in V. \end{equation}
\end{enumerate}
\end{definition}
\Conditionref{iv} imposes no signaling what you do not know and Bayes' rule whenever possible.\footnote{And so this solution concept is a standard refinement of weak Perfect Bayesian Equilibrium.} Because the seller does not observe the buyer's type, her choice of menu and message leaves the type distribution equal to \(P_h\), including after an announcement that occurs with probability zero. The buyer's response may depend on his type, so Bayes' rule determines the posterior at the ensuing seller history whenever \(\pi_a>0\), while the posterior is unrestricted when \(\pi_a=0\).


For \(x\in\R\), write \(x^+\coloneqq\max\{x,0\}\). Fix a seller history \(h\) and two types \(v,\hat v\). Type \(\hat v\) can follow the buyer strategy assigned to \(v\) after every menu and message that follows \(h\). This produces the same distribution over trades and payments as when type \(v\) follows that strategy. Accordingly, \(v\)'s discounted value for the goods purchased minus \(\hat v\)'s discounted value for those goods is at most \(\sum_{i\in I(h)}(v_i-\hat v_i)^+\). And so imitation provides type \(\hat v\) expected utility at least \(u_v(h)-\sum_{i\in I(h)}(v_i-\hat v_i)^+\). Applying \Conditionref{iii} after each menu and message that the seller may choose at \(h\), and averaging over her choice, yields
\begin{equation}\label{eq:imitation-general} u_v(h)-u_{\hat v}(h) \leq \sum_{i\in I(h)}(v_i-\hat v_i)^+. \end{equation}

\begin{proposition}\label{prop:existence}
For every finite $K$, every prior $P\in\Delta(V)$, and every $\delta\in(0,1)$, an equilibrium exists.
\end{proposition}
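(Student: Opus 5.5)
We prove \Cref{prop:existence} by induction on the size of the inventory, using finite approximations of the stage game at each step. The footnote in \Cref{sec:model} already suggests this route. The natural induction variable is the pair \((I,P_h)\). For \(I=\emptyset\) the continuation is trivial. Suppose that for every inventory \(J\subsetneq I\) and every posterior on \(V\) there is an equilibrium of the continuation game starting with inventory \(J\). Fix one such equilibrium for each pair. We can choose these selections measurably, or even upper hemicontinuously as correspondences of continuation payoff vectors \((R,(u_v)_v)\). The remaining problem is a game with fixed inventory \(I\) that can last arbitrarily long, since the buyer may reject forever. This is a standard durable-good, Coase-type stage problem with one twist: if the buyer buys \(S\neq\emptyset\), play jumps to an already-solved continuation \((I\setminus S,\cdot)\).

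To handle infinite rejection, we truncate the horizon and discretize. Let \(G_{n}\) be the game in which prices lie on a finite grid \(\{-n, -n+1/n,\dots,n\}\) with \(n\ge\) a bound above \(Kb/(1-\delta)\). In \(G_n\), after \(n\) consecutive periods with inventory \(I\) the seller must offer only \(0\). Every subgame of \(G_n\) then has finitely many actions, and a sequential equilibrium exists by Kreps--Wilson (indeed by backward induction with mixing). Payoffs are uniformly bounded: buyer utility lies in \([0,Kb]\) by individual rationality, since the buyer can always reject. Seller revenue is bounded by \(Kb\), because prices above total value are never accepted by types who then pay more than their value, and negative prices are dominated for the seller. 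We then take a subsequence along which the seller's menu distributions, the buyer responses and the posteriors converge at every history in a countable dense skeleton. This uses compactness of \(\Delta(V)\) and of bounded menus, with the message coordinate \(s\in[0,1]\) as the device that keeps track of distinct limit branches. We extend to all histories, including off-path menus with off-grid prices, by taking limits of equilibrium play at nearby grid menus. We also choose off-path posteriors so that the buyer's tie-breaking is compatible with the seller's optimal response.

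Next we verify the equilibrium conditions in the limit. \Conditionref{iv} is preserved because \eqref{eq:bayes} is a closed condition. \Conditionref{iii} passes to the limit because buyer payoffs are continuous in prices, and the truncation disappears at rate \(\delta^{n}\to0\). For \Conditionref{ii}, we use the one-shot deviation principle, which applies because payoffs are discounted and bounded. Any profitable seller deviation in the limit menu at some history yields a profitable grid deviation in \(G_n\) for large \(n\). The exception is when the buyer's limit response is discontinuous at that menu. At a menu with a tie, we resolve the tie in the seller's favor, as in the standard sales-favorable tie-breaking that makes the seller's best response a maximum rather than a supremum. Finiteness of the assessment follows from the uniform bounds.

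The main obstacle is this last step: the buyer's acceptance correspondence is only upper hemicontinuous in the menu, so limits of discretized equilibria can break seller optimality. Distinct grid menus may also converge to one menu while carrying different continuation play. We plan to handle it as the footnote indicates. We encode each limit branch with a distinct value of \(s\), and we argue that at the limit menu the buyer's limiting mixed response is itself optimal. Any seller deviation to a nearby menu is then no better than the supremum over the convergent sequence. This is the payoff-correspondence closed-graph argument of Harris--Reny--Robson style existence results for games with continuous actions and public randomization.
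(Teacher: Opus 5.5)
Your proposal has two genuine gaps, and the fix for both is the idea the paper's proof is built on.

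\textbf{The inductive step does not produce a finite game.} After a purchase of $S$, you plug in a continuation equilibrium fixed in advance for $(I\setminus S,P')$. But $P'$ is the Bayes posterior, which depends on the buyer's strategy in $G_n$. So the ``terminal'' payoffs of $G_n$ depend on endogenous beliefs through a selection $P'\mapsto e(I\setminus S,P')$. That selection is in general discontinuous, because continuation equilibria are not unique. Kreps--Wilson therefore does not apply.

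Even granting equilibria of $G_n$, another problem remains. Along your subsequence the post-purchase posteriors converge to some $P^*$, and the continuation payoffs converge to some vector. That vector need not be the payoff of $e(I\setminus S,P^*)$, or of any equilibrium under $P^*$, unless you already know the equilibrium payoff correspondence has a closed graph. That closedness is essentially what is to be proved, and your induction hypothesis (``an equilibrium exists'') is too weak to deliver it. The paper does not induct on inventory. Its finite grid games truncate the whole tree: all inventories, a global horizon $T$, grid prices, and the message fixed at $0$. They are therefore honest finite games, and the paper tracks the limit sets $\mathcal C_I$ for all $I$ at once.

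\textbf{The limit step works in strategy space and does not close.} No single subsequence makes play converge at the uncountably many histories following off-grid menus. Defining play after an off-grid menu as a limit of play at nearby grid menus needs a new subsequence for each menu. You would then have to show that the resulting continuation play is itself an equilibrium, which is the same problem one level down.

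Seller-favorable tie-breaking is asserted rather than derived, and at off-path menus it points the wrong way. To deter a deviation you need some admissible response (optimal for each type, Bayes-consistent, with equilibrium continuations) that holds the deviator's revenue to at most $r$. Resolving in her favor, including picking her preferred continuation equilibria, can exceed $r$.

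\textbf{What is missing: take limits in (prior, payoff) space.} Let $\mathcal C_I$ be the set of limits of $(P,z)$ pairs of finite-grid-game equilibria as $n,T\to\infty$; it is compact and nonempty for each $P$. Then show two properties.
\begin{enumerate}
\item[(i)] For every $(P,z)\in\mathcal C_I$ and every real-price menu and message, round the menu to the grid and pass to the limit of the finite games' responses. This yields optimal, Bayes-consistent responses with continuation pairs in $\mathcal C_{I\setminus S_a}$ and revenue at most $r$. A fresh subsequence per menu is harmless, since only membership in $\mathcal C$ is carried forward.
\item[(ii)] Every $(P,z)$ is generated by a lottery over at most $|V|+1$ menus, each earning exactly $r$ (Carath\'eodory). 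Distinct messages separate approximating menus that coincide in the limit.
\end{enumerate}
Then build the assessment recursively from Borel selections. Verify optimality by telescoping with discounting, noting that seller deviations may use unboundedly negative prices. Your closing appeal to Harris--Reny--Robson gestures at this, but that result is for complete-information games and does not handle Bayesian updating. Here this argument is the whole proof, not a final patch.
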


\section{Scope vs. patience}\label{sec:main}
For a fixed prior, write $R_{K,\delta}$ and $U_{K,\delta}$ for an arbitrary equilibrium's payoffs. Recall \(\mu=\ell+q(b-\ell)\).
\begin{theorem}\label{thm:main}
Fix $0<\ell<b$ and $q\in(0,1)$. There is a finite constant $A>0$, depending only on $(\ell,b,q)$, such that, for any sequence $\delta_K\in(0,1)$ and any sequence of equilibria,
\begin{align}
K(1-\delta_K)\rightarrow\infty &\quad\Longrightarrow\quad \left(\frac{R_{K,\delta_K}}K,\frac{U_{K,\delta_K}}K\right) \rightarrow(\mu,0),\label{eq:extraction-regime} \qquad \text{\&}\\
(1-\delta_K)\mathrm{e}^{AK}\rightarrow 0 &\quad\Longrightarrow\quad \left(\frac{R_{K,\delta_K}}K,\frac{U_{K,\delta_K}}K\right) \rightarrow(\ell,\mu-\ell).\label{eq:revision-regime}
\end{align}
In either regime, expected discounted total surplus per good converges to $\mu$.
\end{theorem}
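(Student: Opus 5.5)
I first record the accounting identities that both regimes use. Since costs are zero and every good, once sold, yields value to the buyer, total expected discounted surplus is $R+U=\E[\sum_t\delta^t v(S_t)]\le K\mu$. Every equilibrium gives each type at least $0$, because rejecting forever is feasible. The seller can guarantee nearly $\ell K$ by offering the grand bundle at $K\ell-\varepsilon$: every type strictly prefers to buy now, since waiting yields at most $\sum_i(v_i-\ell)$ net of the rent $K\ell$ he could get later, and discounting makes delay costly. I would check this carefully; the intended conclusion is $R\ge K\ell-o(K)$ uniformly in $\delta$. This lower bound gives the surplus claim in both regimes once each revenue limit is established. It also shows that $(R/K,U/K)\to(\mu,0)$ follows from $R/K\to\mu$, and that $U/K\to\mu-\ell$ follows from $R/K\to\ell$ together with $(R+U)/K\to\mu$. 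For the latter I need surplus to be near $K\mu$, which I get from the fact that the all-$\ell$ bundle price essentially sells everything early.

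\textbf{Extraction regime.} I argue by contradiction. Suppose that at the initial history $R\le K(\mu-\eta)$ for some $\eta>0$. Consider the one-shot deviation in which the seller offers only the grand bundle at price $p=K(\mu-\eta/2)$ and reverts to equilibrium play after a rejection. A type $v$ accepts whenever $v([K])-p>u_v(h')$, where $h'$ is the post-rejection history. I bound $u_v(h')\le \delta\cdot(\text{something})$ using \eqref{eq:imitation-general} relative to the all-$\ell$ type together with the feasibility bound $u_v\le v([K])-(\text{revenue})$. The key step is to show that the types rejecting have waiting value exceeding $v([K])-p$. Since waiting one period costs a fraction $(1-\delta)$ of the total eventual surplus, which is of order $K(1-\delta)\to\infty$, types with $v([K])\ge p+C$ lose more from delay than they could gain. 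By the law of large numbers (Hoeffding), $\mathbb P(v([K])<K(\mu-\eta/4))$ decays exponentially, so acceptance occurs with probability $1-o(1)$. The deviation therefore yields revenue $\gtrsim K(\mu-\eta/2)>R$, a contradiction. Then $U\le K\mu-R\to0$ per good. The main technical point in this regime is bounding the rejecters' continuation utility; I would bound it by $\delta\,(K\mu - R(h'))$ plus the imitation slack, using that $R(h')\ge K\ell$ minus $o(K)$.

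\textbf{Revision regime.} This is the hard direction and I expect it to be the main obstacle. The plan follows the recursion sketched in the introduction. I define $\Phi(m,\rho)$ as the supremum, over equilibria and histories with $m$ remaining goods and posterior probability at least $\rho$ that all remaining values equal $\ell$, of the excess revenue $R(h)-m\ell$ normalized appropriately. I then prove by induction on $m$, and within each $m$ on a discretized grid of $\rho$, a bound of the form $R(h)-m\ell\le C(1-\delta)e^{A m}/\rho$ (or similar), exploiting the fact that the all-$\ell$ type's utility is $0$ and the all-$b$ type can imitate it. Each step has three cases. In the first, a purchase keeps the all-low probability high, so the fewer-goods bound applies. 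In the second, rejection raises it by a fixed factor, so I iterate in $\rho$, and since $\rho\le1$ only boundedly many steps occur. In the third, Bayes' rule bounds the probability of the bad purchases. Each step costs $O(1-\delta)$, and compounding over $m$ and the $\rho$-levels gives the $e^{AK}$ factor. Starting from $\rho_0=(1-q)^K$ is itself exponential in $K$, which is absorbed into $A$. Given $R\le K\ell+o(1)$ and surplus near $K\mu$, the claim $U/K\to\mu-\ell$ follows.
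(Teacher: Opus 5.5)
Your extraction-regime argument fails at the step you call the key one. A type rejects the bundle at $p=K(\mu-\eta/2)$ whenever $v([K])-p\le\delta u_v(h')$. Your only handle on $u_v(h')$ is a lower bound on continuation revenue, and that is too weak.

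With $R(h')\ge K\ell$, as you propose, waiting utility is bounded only by something of order $K(\mu-\ell)$. That bound permits rejection by every type with $(1-\delta)w\le\mu-\eta/2-\delta\ell$, which covers all $w\le b$ once $\delta$ is near one. This is allowed under $K(1-\delta_K)\to\infty$, e.g.\ $1-\delta_K=K^{-1/2}$.

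Your contradiction hypothesis does not help either. It concerns one equilibrium at the i.i.d.\ prior, but the continuation after rejection is an equilibrium under an endogenous posterior $P'$. Nothing you invoke prevents that continuation from again yielding only about $K(\mu-\eta)$, for instance by selling the bundle at $K(\mu-\eta)$ next period. Then type $v$ strictly prefers to wait whenever $(1-\delta)(w-\mu)<\eta(\delta-\tfrac12)$, so every type waits when $\delta$ is near one. Comparing an order-$K(1-\delta)$ delay cost with a constant $C$ is the wrong comparison: the gain from waiting is also of order $K$.

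Hoeffding at the prior cannot rescue this. What matters is the distribution of $w$ under the post-rejection posterior, which need not concentrate.

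The missing idea is a self-referential bound, uniform over \emph{all} priors, stated through a statistic that survives conditioning.
\begin{itemize}
\item The paper sets $D=\sup_P\{c_\lambda(P)-R/K\}$ with $c_\lambda(P)=-\lambda^{-1}\log\E_P[\mathrm{e}^{-\lambda w}]$.
\item It deviates to the grand bundle at per-good price $c-\delta D-x$, only $(1-\delta)D-x$ above the worst case.
\item It uses $r'\ge c_\lambda(P')-D$ to show that rejection forces $c_\lambda(P')\le c-x$.
\item \Cref{lem:conditioning} then turns this into $\varrho\le\mathrm{e}^{-\lambda x}$, which gives $D\le b(\log t+2)/t$ with $t=\lambda(1-\delta)b$.
\end{itemize}
Concentration enters only afterwards, via $\mu-c_\lambda(P)\le\lambda b^2/(8K)$. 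The choice $t=\sqrt{K(1-\delta)}$ is where $K(1-\delta)\to\infty$ is actually used.

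The preliminaries and the revision regime have related gaps.

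Your baseline argument is circular: ``waiting yields at most $\sum_i(v_i-\ell)$'' already presupposes that the continuation charges $\ell$ per good. The paper instead bootstraps on $\alpha=\inf R(h)/(m(h)\ell)$ over all priors, equilibria and histories. It prices every subset at $c\ell$ per good with $c=\delta\alpha+(1-\delta)-\varepsilon$, and uses only the \emph{averaged} surplus bound after each response. This yields exactly $R(h)\ge m(h)\ell$ at every history and posterior, which the later argument needs.

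Your revision outline has the right architecture: three cases, a Bayes bound on uncontrolled purchases, a geometric grid in the all-low probability, and exponential compounding. Two things are missing.
\begin{itemize}
\item Nothing shows the all-$\ell$ type's utility is zero. The paper only uses $u_L\ge0$ and carries $u_L$ along.
\item $R-m\ell$ alone does not propagate through the imitation step. The all-high type's imitation of the all-low type controls the rent deficit $d=m(b-\ell)-(u_H-u_L)$, via $d(M,s)\le(1-\delta)m'(b-\ell)+\delta d'$, not revenue. The paper therefore inducts on $F=(R-m\ell)+u_L+\frac{\ell}{b-\ell}d$. Its two-sided comparability $\frac{\ell}{b-\ell}d\le F\le\frac{b}{b-\ell}d$, which comes from $R-m\ell+u_L\le d$, is what lets the three cases feed into one another.
\end{itemize}
At uncontrolled announcements the purchase term is proportional to $S_m(\delta,\eta_0)$ itself. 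So the grid must be fine, with $r=1-\ell/(4b)$, to keep the denominator in \eqref{eq:posterior-recursion} at least $\tfrac12$; a generic ``fixed factor'' does not suffice.

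Finally, surplus near $K\mu$ does not follow from the seller's \emph{ability} to sell everything early at about $K\ell$. An available deviation says nothing about delay on the equilibrium path. The paper's buyer-utility bound comes from imitating the all-high type, $U\ge\E_P[v([K])]-K\ell-d$, together with $d\le\frac{b-\ell}{\ell}F$. So again it is $d$, not $R-m\ell$, that you must control.
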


I prove the two limits in \Cref{prop:extraction-bound,prop:revision-bound}.\footnote{Note that there is an interpretation of the theorem in terms of increasingly frequent offers: let consecutive offers be $\Delta_K$ units apart, and let the common discount rate be $\rho>0$. Then $\delta_K=\mathrm{e}^{-\rho\Delta_K}$. When $\Delta_K\to0$, the condition in \eqref{eq:extraction-regime} under which the seller extracts the full surplus becomes $K\Delta_K\to\infty$, and the condition in \eqref{eq:revision-regime} under which her revenue per good converges to $\ell$ becomes $\Delta_K \mathrm{e}^{AK}\to0$.} A seller who can commit to a single grand-bundle offer can obtain revenue per good approaching \(\mu\). For each \(\varepsilon>0\), suppose that she prices the bundle at \(K(\mu-\varepsilon)\). By the law of large numbers, the buyer's total value exceeds this price with probability approaching one as \(K\to\infty\), so expected revenue per good approaches \(\mu-\varepsilon\). Letting \(\varepsilon\downarrow0\) delivers the benchmark \(\mu\). And so \eqref{eq:extraction-regime} provides the seller the same limiting revenue per good even though she cannot commit. In contrast, under \eqref{eq:revision-regime}, seller revenue per good converges to \(\ell\) and buyer utility per good converges to \(\mu-\ell\). In either regime, expected discounted total surplus per good converges to \(\mu\), i.e., any waste from delay or failure to sell vanishes per good.

\begin{corollary}\label{cor:limits}
\(\lim_{\delta\uparrow1}\lim_{K\to\infty}\frac{R_{K,\delta}}K=\mu\) and \(\lim_{K\to\infty}\lim_{\delta\uparrow1}\frac{R_{K,\delta}}K=\ell\) in any sequence of equilibria.
\end{corollary}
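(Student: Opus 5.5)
The corollary follows from \Cref{thm:main} once each iterated limit is placed in the matching regime. The idea is to sandwich the inner limit between bounds that hold uniformly over equilibria, using the quantitative statements behind \Cref{prop:extraction-bound,prop:revision-bound}.

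\textbf{First iterated limit.} Fix $\delta\in(0,1)$ and let $K\to\infty$ with $\delta_K\equiv\delta$. Then $K(1-\delta_K)=K(1-\delta)\to\infty$, so \eqref{eq:extraction-regime} applies to every sequence of equilibria. Hence $R_{K,\delta}/K\to\mu$ for each fixed $\delta$. Both $\liminf_K$ and $\limsup_K$ over arbitrary equilibrium selections equal $\mu$, so the inner limit exists and equals $\mu$ for every $\delta$. The outer limit $\delta\uparrow1$ is then trivially $\mu$. If one only reads \eqref{eq:extraction-regime} as a statement about sequences, a subsequence argument suffices. Any equilibrium selection along $K$ is a sequence with $K(1-\delta)\to\infty$, so every selection has limit $\mu$. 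Therefore $\sup$ and $\inf$ of $R_{K,\delta}/K$ over equilibria both converge to $\mu$.

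\textbf{Second iterated limit.} Fix $K$ and let $\delta\uparrow1$. The inner limit need not be known to exist, so I work with $\overline r_K\coloneqq\limsup_{\delta\uparrow1}\sup R_{K,\delta}/K$ and $\underline r_K\coloneqq\liminf_{\delta\uparrow1}\inf R_{K,\delta}/K$, with sup and inf taken over equilibria, which exist by \Cref{prop:existence}. For each $K$, choose $\delta_K$ close enough to $1$ that two things hold. First, $(1-\delta_K)\mathrm e^{AK}<1/K$. Second, the equilibrium values at $\delta_K$ are within $1/K$ of $\overline r_K$, respectively $\underline r_K$; this is possible by the definition of limsup and liminf. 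The resulting sequence satisfies $(1-\delta_K)\mathrm e^{AK}\to0$, so by \eqref{eq:revision-regime} the selected revenues per good converge to $\ell$. It follows that $\overline r_K\to\ell$ and $\underline r_K\to\ell$.

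If the corollary is read as requiring the inner limit to exist, I would instead state it with $\limsup$ and $\liminf$, both of which converge to $\ell$ as $K\to\infty$. The cleaner route is to note that \Cref{prop:revision-bound} presumably gives an explicit bound, roughly $|R_{K,\delta}/K-\ell|\le \varepsilon_K+C(1-\delta)\mathrm e^{AK}$ with $\varepsilon_K\to0$. Such a bound gives $\limsup_{\delta\uparrow1}|R_{K,\delta}/K-\ell|\le\varepsilon_K$ directly.

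\textbf{Main obstacle.} The only delicate point is the order-of-limits and diagonal selection, including the possibility that the inner limit fails to exist. Given uniformity over equilibria, this is routine.
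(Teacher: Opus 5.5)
Your first iterated limit is correct and matches the paper's reasoning. \Cref{thm:main} as stated only requires $K(1-\delta_K)\to\infty$, so you may take the constant sequence $\delta_K\equiv\delta$. That is the same content as the paper's observation that the bound in \eqref{eq:main-extraction} vanishes as $K\to\infty$ at fixed $\delta$.

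The second iterated limit has a genuine gap. Your diagonal selection is valid, but it only invokes \Cref{thm:main}, which constrains sequences indexed by $K$. It therefore yields only that $\limsup_{\delta\uparrow1}$ and $\liminf_{\delta\uparrow1}$ of $R_{K,\delta}/K$ converge to $\ell$ as $K\to\infty$. It cannot show that $\lim_{\delta\uparrow1}R_{K,\delta}/K$ exists at a fixed $K$, and the statement requires that inner limit. Indeed, \eqref{eq:revision-regime} is consistent with $R_{K,\delta}/K$ oscillating in $\delta$ at each fixed $K$, with an amplitude that vanishes only as $K\to\infty$. Restating the corollary with $\limsup$ and $\liminf$ proves a weaker claim. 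Your guessed bound with an additive slack $\varepsilon_K$ would leave exactly the same problem, since it only gives $\limsup_{\delta\uparrow1}|R_{K,\delta}/K-\ell|\le\varepsilon_K$.

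The missing observation is that \Cref{prop:revision-bound} carries no such slack. Every equilibrium satisfies $0\le R_{K,\delta}/K-\ell\le C(1-\delta)\mathrm{e}^{AK}$ for every $K$ and every $\delta$, with the lower bound coming from \Cref{lem:baseline}. Fix $K$ and let $\delta\uparrow1$: the squeeze gives $\lim_{\delta\uparrow1}R_{K,\delta}/K=\ell$ exactly, uniformly over equilibria. The outer limit is then the limit of a constant sequence. This is the paper's two-line proof, and it makes the diagonal construction unnecessary.
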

\begin{proof}
For fixed $\delta<1$, the upper bound on $\mu-R_{K,\delta}/K$ in \eqref{eq:main-extraction} converges to zero as $K\to\infty$.\footnote{Forgive the forward reference.} For fixed $K$, the upper bound on $R_{K,\delta}/K-\ell$ in \eqref{eq:main-revision} converges to zero as $\delta\uparrow1$.
\end{proof}

\subsection{Joint vs. independent negotiations}

The main game is a joint negotiation. The seller has \(K\) goods, can offer arbitrary subsets of the remaining goods in each period, and her subsequent offers may depend on the entire history of purchases, rejections, menus and messages. This ability to link is essential for the main result. Let us now investigate the case in which such linkage is forbidden; \textit{viz.}, where the parties take part in \(K\) separate single-good games. In the negotiation for good \(i\), the seller may respond to what has happened in that negotiation--i.e., the seller's menus, messages, and beliefs depend only on that negotiation's history--but not to what has occurred in the negotiations for other goods. Likewise, the buyer's strategy depends only on the value of that good
and the history of that negotiation.

So, we are now studying $K$ independent copies of the one-good sales game (with the same value distribution and discount factor $\delta$). An equilibrium of this benchmark is a collection of one-good equilibria. What happens now under condition for seller extraction in joint negotiations (\(K(1-\delta_K)\rightarrow\infty\)) plus maximal patience \(\delta_K\uparrow1\)? \(R^{\mathrm{ind}}_{K,\delta_K}\) is the sum of the seller's revenues in a one-good equilibrium.
\begin{corollary}\label{cor:independent-negotiations}
Suppose $\delta_K\uparrow1$ and $K(1-\delta_K)\to\infty$. For any equilibrium of each joint sales game and any collection of one-good equilibria,
\[\frac{R^{\mathrm{ind}}_{K,\delta_K}}K\rightarrow\ell, \qquad \text{\&} \qquad \frac{R_{K,\delta_K}-R^{\mathrm{ind}}_{K,\delta_K}}K \rightarrow\mu-\ell=q(b-\ell)>0.\]
Buyer utility per good converges to zero under joint contracting and to $\mu-\ell$ under independent negotiations. Expected discounted total surplus per good converges to $\mu$ in both cases.
\end{corollary}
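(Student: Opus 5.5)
The plan is to apply the second part of \Cref{thm:main} separately to each one-good game, and the first part to the joint game. The key observation is that the single-good game is exactly the joint game with $K=1$, with discount factor $\delta_K$. It is therefore a fixed finite problem, and the bound \eqref{eq:main-revision} from \Cref{prop:revision-bound} applies to it directly. Specializing that bound to one good should give, for any equilibrium of the one-good game with discount factor $\delta$,
\[
0\le R^{(1)}_\delta-\ell\le C(1-\delta),
\]
where $C$ depends only on $(\ell,b,q)$ (it will be of the form $c\,\mathrm e^{A}$). The lower bound $R^{(1)}_\delta\ge \ell$, up to a vanishing term, should also come out of \Cref{prop:revision-bound}. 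If it does not, I would prove it directly. Every type values the good at least $\ell$, so a price of $\ell-\varepsilon$ is accepted immediately by all types. Hence in any equilibrium the seller obtains at least $\ell-\varepsilon$ for every $\varepsilon>0$, which gives $R^{(1)}_\delta\ge\ell$.

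\textbf{Step 1: independent revenue.} Because the negotiations are independent and payoffs are additive, $R^{\mathrm{ind}}_{K,\delta_K}=\sum_{i=1}^K R^{(1),i}_{\delta_K}$, where the $i$th term is the revenue of the chosen one-good equilibrium for good $i$. The one-good bound is uniform over equilibria, so
\[
\ell\le R^{\mathrm{ind}}_{K,\delta_K}/K\le \ell+C(1-\delta_K),
\]
and the right-hand side tends to $\ell$ since $\delta_K\uparrow1$. The same reasoning applied to the buyer's side shows that buyer utility per good under independence converges to $\mu-\ell$. Here I use that in the one-good game with $(1-\delta)\mathrm e^{A}\to0$, the second part of \Cref{thm:main} at $K=1$ gives payoffs $(\ell,\mu-\ell)$ in the limit $\delta\to1$, again uniformly in the equilibrium via the explicit bound. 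Consequently total surplus per good under independence converges to $\mu$.

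\textbf{Step 2: joint game.} The first part of \Cref{thm:main}, \eqref{eq:extraction-regime}, gives $R_{K,\delta_K}/K\to\mu$ and $U_{K,\delta_K}/K\to0$ under the hypothesis $K(1-\delta_K)\to\infty$. Subtracting the two limits yields
\[
\bigl(R_{K,\delta_K}-R^{\mathrm{ind}}_{K,\delta_K}\bigr)/K\to\mu-\ell=q(b-\ell)>0.
\]
The surplus claim in the joint game is the final sentence of \Cref{thm:main}.

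\textbf{Main obstacle.} The delicate point is that the statement for the one-good game must hold uniformly over equilibria and along a sequence $\delta_K\uparrow1$, not for a fixed $\delta$. So I must invoke the explicit inequality \eqref{eq:main-revision} at $K=1$ rather than the limit statement. I also need to check that the independent benchmark's single-good equilibria are genuinely equilibria of the $K=1$ game as defined in \Cref{def:equilibrium}, which holds because each negotiation's strategies and beliefs depend only on its own history. If \eqref{eq:main-revision} is stated only for the revenue, the buyer's side follows from the total-surplus bound: surplus per good is at most $\mu$, and the single-good surplus converges to $\mu$ because delay costs are at most of order $1-\delta_K$ times a bounded number of expected rounds. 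Of these, the total-surplus convergence in the one-good case is the step I would verify most carefully.
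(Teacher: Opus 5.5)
Your proposal is correct and is essentially the paper's proof. Apply \eqref{eq:main-revision} with $K=1$ to each independent negotiation: it bounds both revenue and buyer utility on both sides, so $0\le R^{\mathrm{ind}}_{K,\delta_K}/K-\ell\le C\mathrm{e}^{A}(1-\delta_K)\to0$, and likewise for utility. Then combine this with \Cref{thm:main} for the joint game.

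Your fallbacks are therefore unnecessary. That is fortunate, because the direct claim that a price of $\ell-\varepsilon$ is accepted at once is not valid on its own: a buyer may reject in anticipation of lower future prices, which is why \Cref{lem:baseline} needs its bootstrap argument.
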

\begin{proof}
\Cref{thm:main} establishes the joint-contracting limits. Apply the estimate \eqref{eq:main-revision} with one good to each independent negotiation and average. The constants depend only on $(\ell,b,q)$, so
\[0\leq\frac{R^{\mathrm{ind}}_{K,\delta_K}}K-\ell \leq C\mathrm{e}^A(1-\delta_K)\rightarrow 0.\]
The estimate of the buyer's utility in \eqref{eq:main-revision}, also applied with one good, implies that average buyer utility converges to $\mu-\ell$. Combine with the output of \Cref{thm:main} to get the other values.
\end{proof}

Of course, there is nothing stopping the two players from following separate-sales strategies in the joint-negotiation game. But this won't be an equilibrium for sufficiently large \(K\). In particular, the seller has a profitable deviation by offering the grand bundle (all \(K\) goods together) today and, after rejection, returning next period to the prescribed separate negotiations.

Under separate sales, seller revenue per good converges to \(\ell\), while buyer utility per good converges to \(\mu-\ell\). The buyer's expected willingness to pay per good for the bundle rather than
wait is approximately
\[\mu-\delta_K(\mu-\ell) = \ell+(1-\delta_K)(\mu-\ell).\]
Under separate sales, the buyer captures all but \(O(1-\delta_K)\) of the difference in values \(b-\ell\), so his willingness to pay for immediate delivery differs by only \(O(1-\delta_K)\) between a good worth \(b\) and one worth \(\ell\). Because values are independent, the standard deviation of his average willingness to pay for the bundle is only \(O((1-\delta_K)/\sqrt K)\). Since its expectation exceeds the seller's separate-sales revenue per good by an amount of order \(1-\delta_K\), the seller can charge more than under separate sales and have the bundle accepted with probability approaching one. In the proof of \Cref{prop:exponential}, I use the same grand-bundle offer to derive a revenue bound for arbitrary equilibria.

\section{Full extraction}\label{sec:extraction}

In this section, I prove \eqref{eq:extraction-regime}, the full-extraction part of \Cref{thm:main}. I first derive preliminary bounds on payoffs in \eqref{eq:uniform-values}. I then prove in \Cref{lem:baseline} that the seller's continuation revenue is at least $\ell$ per remaining good in every equilibrium, whatever her posterior. I then consider a deviation in which the seller offers all goods together. Using the buyer's incentive to reject and the effect of rejection on the posterior (\Cref{lem:conditioning}), I derive a lower bound on equilibrium revenue that holds under any prior (\Cref{prop:exponential}). Finally, I prove that any random variable $X\in[0,b]$ satisfies the bound $\log\E[\mathrm e^{s(X-\E X)}]\leq s^2b^2/8$ for every $s\in\R$. 

Applying this bound to each valuation and using their independence allows me to bound a statistic for the buyer's value per good $w=v([K])/K$, $\log\E[e^{-\lambda w}]$, above by $-\lambda\mu+\lambda^2b^2/(8K)$ \eqref{eq:mgf}. As a result, the statistic $c_\lambda(P) \coloneqq -\lambda^{-1}\log\E[e^{-\lambda w}]$ falls below $\mu$ by at most $\lambda b^2/(8K)$. And the earlier bound in \Cref{prop:exponential} limits how far equilibrium revenue per good can fall below $c_\lambda(P)$. Combining these bounds yields \eqref{eq:main-extraction}, an upper bound on $\mu-R_{K,\delta}/K$, the difference between expected value per good and the seller's equilibrium revenue per good. This delivers \Cref{prop:extraction-bound} and taking $K(1-\delta_K)\to\infty$ then establishes its corollary \eqref{eq:extraction-regime}.

Fix a seller history \(h\), and write \(I=I(h)\), \(m=m(h)\), \(W=v(I)\), \(R=R(h)\), and \(u_v=u_v(h)\). The seller can choose the rejection-only menu forever, and the buyer can choose rejection forever. Hence, \(R\geq0\) and \(u_v\geq0\). Since both players have the same discount factor,
\begin{equation}\label{eq:surplus} R+\E_{P_h}[u_v] = \E_{P_h}\left[\sum_{t\geq0}\delta^t v(S_t)\right] \leq \E_{P_h}[W], \end{equation}
which implies \(\E_{P_h}[u_v]\leq mb\), so some \(\bar v\) with \(P_h(\bar v)>0\) satisfies \(u_{\bar v}\leq mb\). Applying \eqref{eq:imitation-general} with \(\hat v=\bar v\) delivers
\[u_v \leq u_{\bar v}+\sum_{i\in I}(v_i-\bar v_i)^+ \leq mb+m(b-\ell) = m(2b-\ell)\]
for every \(v\in V\), including types with \(P_h(v)=0\), so
\begin{equation}\label{eq:uniform-values}
0\leq R\leq mb, \qquad \text{\&} \qquad 0\leq u_v\leq m(2b-\ell) \quad\text{for every }v.
\end{equation}

\begin{lemma}\label{lem:baseline}
At every seller history \(h\), every equilibrium satisfies \(R(h)\geq m(h)\ell\), whatever the posterior \(P_h\).
\end{lemma}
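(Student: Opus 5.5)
The plan is an infimum (``no-gap'') argument carried out inside a fixed equilibrium. Define
\[D\coloneqq\inf\bigl\{R(h)-m(h)\ell\colon h \text{ a nonterminal seller history}\bigr\},\]
where the infimum runs over all nonterminal seller histories, on or off path, with any inventory size and any posterior. By \eqref{eq:uniform-values}, \(R(h)\geq0\), so \(D\geq -K\ell\) is finite. I suppose \(D<0\) and derive a contradiction from a one-shot deviation by the seller that sells the grand bundle of the remaining inventory.

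First I pick a nonterminal seller history \(h\) that nearly attains the infimum, in the sense that
\[R(h)-m\ell<D+|D|\bigl(1-\tfrac{\delta}{2}\bigr),\]
and I write \(I=I(h)\), \(m=m(h)\), \(W=v(I)\). At \(h\) the seller deviates to the menu \(M=\{0,(I,p)\}\) with some message \(s\), at price \(p=m\ell-\eta\) where \(\eta=\delta|D|/2>0\). After this date she returns to the equilibrium strategy. The deviation leaves the buyer strategies and the posteriors fixed, and any acceptance ends the game.

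Let \(h'=(h,M,s,0)\) be the seller history after a rejection. \Conditionref{iii} implies that type \(v\) rejects with positive probability only if
\[\delta u_v(h')\geq W-p.\]
Suppose, toward a contradiction, that \(\pi_0>0\). Bayes' rule \eqref{eq:bayes} then makes \(P_{h'}\) the distribution of \(v\) conditional on rejection. Every type in the support of \(P_{h'}\) satisfies the inequality above, so averaging it over \(P_{h'}\) gives \(\delta\E_{P_{h'}}[u_v(h')]\geq \E_{P_{h'}}[W]-p\). Now combine the surplus identity \eqref{eq:surplus} at \(h'\) with the definition of \(D\), which gives \(R(h')\geq m\ell+D\):
\[\E_{P_{h'}}[u_v(h')]\leq \E_{P_{h'}}[W]-R(h')\leq \E_{P_{h'}}[W]-m\ell-D.\]
Substituting and rearranging yields
\[-(1-\delta)\bigl(\E_{P_{h'}}[W]-m\ell\bigr)\geq \eta+\delta D=-\tfrac{\delta}{2}|D|.\]
This is consistent only if the left side is not too negative. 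I will instead use the slightly cruder route of choosing \(\eta<-\delta D\) strictly. Because \(W\geq m\ell\) pointwise, the left side is \(\leq 0\), while the requirement \(\eta+\delta D<0\) is exactly this choice. I will double-check the sign bookkeeping here. The point is that the inequality forces \(\eta\leq-\delta D-(1-\delta)(\E W-m\ell)\leq -\delta D\). To obtain a strict contradiction I therefore choose \(\eta\) so that the averaged rejection condition fails. If this sign bookkeeping turns out to be too tight, I will set \(p\) below \(m\ell+\delta D\) by a small margin, which is still above \(m\ell+D\).

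Once rejection has probability zero, the deviation earns exactly \(p\), and \(p-m\ell=-\eta>D+|D|(1-\delta/2)>R(h)-m\ell\). This contradicts \Conditionref{ii}, so \(D\geq0\), which is the claim. The deviation's value does not depend on the unrestricted posterior after the zero-probability rejection, so off-path beliefs cause no trouble.

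I expect the main obstacle to be controlling the rejectors' continuation utility. No single type's utility can be bounded, and the low type may well earn positive rents. The trick is to average the rejection incentive over the rejection posterior and then invoke the surplus identity \eqref{eq:surplus} at \(h'\) together with the infimum bound on \(R(h')\). The margins have to be chosen so that this averaged incentive fails strictly and, at the same time, \(p\) still exceeds \(R(h)\) at the near-infimal history. This interplay is why the infimum is taken over all histories and posteriors rather than by induction on \(m\).
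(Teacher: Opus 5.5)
Your strategy is the paper's: bootstrap an infimum floor through a one-period pricing deviation, average the rejection incentive over $P_{h'}$, and combine it with \eqref{eq:surplus} and the floor at $h'$. But the step that rules out rejection fails for the parameters you actually fix. With $\eta=\delta|D|/2$, the two inequalities combine to
\[
(1-\delta)\bigl(\E_{P_{h'}}[W]-m\ell\bigr)\leq\delta|D|-\eta=\tfrac{\delta}{2}|D|,
\]
which is consistent with $W\geq m\ell$. So nothing forces $\pi_0=0$, and your final revenue comparison rests on an unproven premise. Your ``cruder route'' sentence also inverts the needed inequality. Your own bound $\eta\leq-\delta D-(1-\delta)(\E W-m\ell)$ shows that rejection is ruled out only if $\eta>-\delta D$, i.e., only if the bundle price is strictly below $m\ell+\delta D$.

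Your fallback is the right repair, but it requires re-selecting the near-infimal history. Once $p<m\ell+\delta D$, your original choice $R(h)-m\ell<\delta D/2$ no longer guarantees $p>R(h)$. Since $D<\delta D$ when $D<0$, choose $h$ with $R(h)-m\ell<D+\tfrac13(1-\delta)|D|$ and set $p=m\ell+D+\tfrac23(1-\delta)|D|$, which equals $m\ell+\delta D-\tfrac13(1-\delta)|D|$. A positive rejection probability would then give $(1-\delta)(\E_{P_{h'}}[W]-m\ell)\leq-\tfrac13(1-\delta)|D|<0$, which is impossible. So the bundle sells for sure at $p>R(h)$, contradicting \Conditionref{ii}. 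This amounts to $D\geq\delta D$, the additive counterpart of the paper's $\alpha\geq\delta\alpha+(1-\delta)$.

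Once repaired, your route differs from the paper only in bookkeeping. The paper prices every subset at $c\ell$ per good, so partial purchases change the inventory and it needs a per-good (multiplicative) floor taken over all priors and equilibria. Your grand bundle's only continuation keeps all $m$ goods, so an additive gap taken within a single equilibrium suffices.
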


The lemma reveals that in any equilibrium, at any seller history, regardless of the seller's posterior, the seller earns at least \(\ell\) per remaining good. The idea behind the proof is to bootstrap any weaker (uniform) revenue floor. To that end, let \(\alpha\) be the largest number in \([0,1]\) such that, under every prior and equilibrium, the seller's continuation revenue is at least \(\alpha m'\ell\) at every seller history with \(m'\) remaining goods.

To improve upon this lower bound, fix any history with \(m\) remaining goods and consider the seller's deviation that prices every subset at \(c\ell\) per good, where \(c=\delta\alpha+(1-\delta)-\varepsilon\). If the buyer leaves \(m'\) goods unsold, with total value \(W'\) and continuation utility \(u_v'\), then choosing the offer for all \(m\) goods would give him the \(m'\) remaining goods as well at an additional cost of \(c\ell m'\). As a result, his optimal behavior requires \(W'-c\ell m'\leq\delta u_v'\). But at the next seller history, the seller's expected revenue from the \(m'\) unsold goods is at least \(\alpha m'\ell\), so applying
\eqref{eq:surplus} after each response and averaging over responses implies \(\E[u_v']\leq\E[W']-\alpha\ell\E[m']\). 

Combining these inequalities yields \((1-\delta)\left(\E[W']-\ell\E[m']\right) +\varepsilon\ell\E[m']\leq0\), and as both terms on the left are nonnegative, \(m'=0\) almost surely. This deviation results in all \(m\) goods selling for \(cm\ell\), which implies \(\alpha\geq c\). Taking \(\varepsilon\downarrow0\) produces \(\alpha\geq\delta\alpha+(1-\delta)\), and hence \(\alpha=1\).

For \Cref{prop:exponential}, consider the game with initial inventory $[K]$ and an arbitrary prior $P\in\Delta(V)$. Write \(w(v)=v([K])/K\). For \(\lambda>0\), define
\begin{equation}\label{eq:exponential-statistic}
c_\lambda(P)\coloneqq-\frac1\lambda\log\E_P[\mathrm{e}^{-\lambda w}].
\end{equation}
Jensen's inequality implies $c_\lambda(P)\leq\E_P[w]$.

\begin{proposition}\label{prop:exponential}
Fix $K,\delta$, and $\lambda>0$ such that $t\coloneqq\lambda(1-\delta)b\geq1$. Every equilibrium under any prior $P$ satisfies \(\frac RK\geq c_\lambda(P)-b\frac{\log t+2}{t}\).
\end{proposition}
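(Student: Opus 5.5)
The plan is to bound a worst-case shortfall by a one-period grand-bundle deviation and then solve the resulting recursive inequality. Fix $K$ and $\delta$. Define
\[D\coloneqq\sup\Bigl\{c_\lambda(P_h)-\tfrac{R(h)}{K}\Bigr\},\]
where the supremum runs over all priors, all equilibria, and all seller histories $h$ with $I(h)=[K]$. By \eqref{eq:uniform-values} and $c_\lambda\le \E[w]\le b$, we have $D\le b<\infty$. The goal is to show
\[D\le \delta D+y+b\mathrm e^{-\lambda y}\quad\text{for every } y>0 .\]
This gives $D\le (y+b\mathrm e^{-\lambda y})/(1-\delta)$. Choosing $y=\log t/\lambda$ and using $t=\lambda(1-\delta)b$ yields $D\le b(\log t+1)/t$, within the claimed bound; the extra $b/t$ absorbs an $\varepsilon$ of slack from the supremum not being attained.

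\textbf{The deviation.} Fix such an $h$ with posterior $P=P_h$. Let the seller deviate for one period to the menu $\{0,([K],Kp)\}$ with
\[p=c_\lambda(P)-\delta D-y,\]
and return to the equilibrium strategy after rejection. Acceptance ends the game. Rejection leads to a seller history $h'$ with the same inventory $[K]$, posterior $P'=P(\cdot\mid\text{reject})$, and continuation play that is still an equilibrium from $h'$. Hence $R(h')/K\ge c_\lambda(P')-D$. Writing $\rho$ for the rejection probability, \Conditionref{ii} gives
\[R(h)\ge Kp(1-\rho)+\delta\rho R(h')\ge Kp(1-\rho).\]
Since $p\le b$, this implies $R/K\ge c_\lambda(P)-\delta D-y-b\rho$. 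So it suffices to show $\rho\le \mathrm e^{-\lambda y}$.

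\textbf{Conditioning bound (the analogue of \Cref{lem:conditioning}).} For any event $E$ with probability $1-\pi$, we have $\E[\mathrm e^{-\lambda w}]\ge(1-\pi)\E[\mathrm e^{-\lambda w}\mid E]$. Taking logarithms gives
\[c_\lambda(P\mid E)\ge c_\lambda(P)+\tfrac1\lambda\log(1-\pi).\]
Applied to the rejection event (so $1-\pi=\rho$), this reads $c_\lambda(P')\ge c_\lambda(P)+\lambda^{-1}\log\rho$.

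\textbf{Incentive bound, the main obstacle.} Suppose $\rho>0$. Every type that rejects with positive probability must weakly prefer rejection, so $K(w(v)-p)\le\delta u_v(h')$. Averaging over $P'$ and applying \eqref{eq:surplus} at $h'$ gives
\[\E_{P'}[w]-p\le\delta\bigl(\E_{P'}[w]-R(h')/K\bigr).\]
Equivalently, $(1-\delta)\E_{P'}[w]+\delta R(h')/K\le p$. Now use $\E_{P'}[w]\ge c_\lambda(P')$ (Jensen) and $R(h')/K\ge c_\lambda(P')-D$. These give $c_\lambda(P')-\delta D\le p=c_\lambda(P)-\delta D-y$. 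Combining with the conditioning bound yields $\lambda^{-1}\log\rho\le -y$, that is, $\rho\le \mathrm e^{-\lambda y}$.

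This step is where the care lies. One must handle types indifferent between accepting and rejecting, which is why the inequality is weak. One must also ensure the buyer's continuation utilities are evaluated at the genuine equilibrium continuation after the off-path menu, which holds by \Conditionref{iii}. Finally, one must check that $h'$ lies in the class over which $D$ is taken.

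Plugging $\rho\le\mathrm e^{-\lambda y}$ back into the deviation bound and taking the supremum over $h$ gives the recursive inequality $D\le\delta D+y+b\mathrm e^{-\lambda y}$. The choice of $y$ then finishes the argument. The condition $t\ge1$ guarantees $\log t\ge0$, so $y\ge 0$ and the bound is meaningful.
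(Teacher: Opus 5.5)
Your setup, the one-period grand-bundle deviation, the conditioning bound, and the incentive argument yielding $\rho\le\mathrm e^{-\lambda y}$ all match the paper. Your recursion $D\le\delta D+y+b\mathrm e^{-\lambda y}$ is also valid.

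The gap is in solving that recursion. Dividing by $1-\delta$ also divides the rejection term, so at $y=\log t/\lambda$ you get
\[
D\le\frac{y+b\mathrm e^{-\lambda y}}{1-\delta}=\frac{b\log t}{t}+\frac{b}{(1-\delta)t},
\]
not $b(\log t+1)/t$. No other choice of $y$ rescues this. The optimum, $y=\log(\lambda b)/\lambda$, gives $D\le\frac bt\bigl(\log t+\log\frac1{1-\delta}+1\bigr)$. This fails to imply the stated bound once $\delta>1-1/\mathrm e$, and it overshoots by an amount that is unbounded as $\delta\uparrow1$ with $t$ fixed. The loss is substantive: used in place of the proposition in the proof of \Cref{prop:extraction-bound}, it would not deliver $R_{K,\delta_K}/K\to\mu$ when $K(1-\delta_K)$ grows only like $\log K$.

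The missing step is that you discard $\delta\rho R(h')\ge0$ in the revenue comparison. A loss of order $b\rho$ per good on the rejection branch is then amplified by $1/(1-\delta)$ in the fixed-point step. The paper keeps that term and bounds it with the two facts you already use in the incentive step:
\[
R(h')/K\ge c_\lambda(P')-D\ge c_\lambda(P)-\lambda^{-1}\log(1/\rho)-D .
\]
Substitute this into $R(h)/K\ge(1-\rho)p+\delta\rho R(h')/K$. The $D$-terms combine to exactly $(1-\rho)\delta D+\rho\delta D=\delta D$, and you obtain
\[
c_\lambda(P)-\frac{R(h)}{K}\le\delta D+y+(1-\delta)\rho\,c_\lambda(P)+\frac{\delta}{\lambda}\rho\log\frac1\rho .
\]
The rejection branch now costs only two terms. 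The first is $(1-\delta)\rho\,c_\lambda(P)\le(1-\delta)b\mathrm e^{-\lambda y}$, whose factor $1-\delta$ cancels in the fixed-point step. The second is $\frac\delta\lambda\rho\log\frac1\rho\le\frac{\delta}{\mathrm e\lambda}$. Taking the supremum, dividing by $1-\delta$, and setting $y=\log t/\lambda$ gives
\[
D\le\frac{b\log t}{t}+\frac bt+\frac{\delta b}{\mathrm e t}\le b\frac{\log t+2}{t}.
\]
So the second $b/t$ in the statement absorbs this entropy term, not $\varepsilon$-slack. Your per-history bound never needed near-supremum slack at all.
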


The role of \Cref{prop:exponential} is to connect equilibrium revenue to \(c_\lambda(P)\), which is a particularly useful statistic because, under the independent prior, its difference from the buyer's expected value per good, \(\mu\), is at most \(\lambda b^2/(8K)\). Accordingly, bounding how far equilibrium revenue per good can lie below \(c_\lambda(P)\) will yield the extraction result.

Let \(D\) be the largest difference between \(c_\lambda(P)\) and equilibrium revenue per good, across all priors and equilibria, and consider an equilibrium for which this difference is nearly \(D\). Write \(c=c_\lambda(P)\) and, to test whether the equilibrium can give the seller so little revenue, consider what she could earn by changing only her current offer. Suppose she offers the grand bundle at the per-good price \(p=c-\delta D-x\) (\(x > 0\)) and, after rejection, follows the strategy prescribed by the equilibrium. 

If the buyer rejects with positive probability, let \(P'\) denote the seller's posterior after observing the rejection. The definition of \(D\) guarantees that the seller's continuation revenue per good is at least \(c_\lambda(P')-D\). Moreover, if the buyer is rejecting with probability \(\varrho\), his incentive to reject, together with the bound on continuation surplus, implies \(p\geq c_\lambda(P')-\delta D)\), and since \(p=c-\delta D-x\), rejection can occur only if \(c_\lambda(P')\leq c-x\). In short, rejection must lower \(c_\lambda\) by at least \(x\). 

But then in \Cref{lem:conditioning}, I prove the easy bound that an event with probability \(\varrho\) can lower \(c_\lambda\) by at most \(\log(1/\varrho)/\lambda\). Hence, \(\varrho\leq\mathrm{e}^{-\lambda x}\). So a larger \(x\) reduces the probability of rejection as well as lowering the price the seller receives upon acceptance. Comparing the revenue from this deviation with equilibrium revenue produces
\[D \leq \frac{x}{1-\delta}+b\mathrm{e}^{-\lambda x} +\frac{\delta}{\mathrm{e}\lambda(1-\delta)};\]
and choosing \(x=(\log t)/\lambda\) yields the proposition's bound.

Recall that \(\mu=\ell+q(b-\ell)\).

\begin{proposition}\label{prop:extraction-bound}
Fix \(0<\ell<b\) and \(q\in(0,1)\). Every equilibrium satisfies
\begin{equation}\label{eq:main-extraction}
0\leq\mu-\frac{R_{K,\delta}}K \leq b\inf_{t\geq1}\left\{ \frac{t}{8K(1-\delta)}+\frac{\log t+2}{t}\right\}.
\end{equation}
\end{proposition}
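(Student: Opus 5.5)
The lower bound \(0\leq\mu-R_{K,\delta}/K\) comes straight from \eqref{eq:surplus}. At the initial history the buyer's utility is nonnegative, so \(R_{K,\delta}\leq\E_P[v([K])]=K\mu\).

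For the upper bound, the plan is to combine \Cref{prop:exponential} with a sub-Gaussian estimate of \(c_\lambda(P)\) under the i.i.d.\ prior, and then reparametrize \(\lambda\) in terms of \(t\). The first step is Hoeffding's lemma. For any random variable \(X\in[0,b]\) and any \(s\in\R\),
\[\log\E[\mathrm e^{s(X-\E X)}]\leq s^2b^2/8.\]
I would prove this in the standard way. Let \(\psi(s)=\log\E[\mathrm e^{sX}]\). Then \(\psi(0)=0\) and \(\psi'(0)=\E X\), and \(\psi''(s)\) is the variance of \(X\) under the exponentially tilted law. That law is still supported on \([0,b]\), so its variance is at most \(b^2/4\). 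A second-order Taylor expansion then gives the claim.

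The second step applies this with \(X=v_i\) and \(s=-\lambda/K\), summing over \(i\) by independence. Since \(w=\frac1K\sum_i v_i\),
\[\log\E_P[\mathrm e^{-\lambda w}]=\sum_{i=1}^K\log\E[\mathrm e^{-(\lambda/K)v_i}]\leq -\lambda\mu+K\cdot\frac{\lambda^2b^2}{8K^2}=-\lambda\mu+\frac{\lambda^2b^2}{8K}.\]
This is \eqref{eq:mgf}, and dividing by \(-\lambda\) yields
\[c_\lambda(P)\geq\mu-\frac{\lambda b^2}{8K}.\]

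The third step is to fix any \(t\geq1\) and set \(\lambda=t/((1-\delta)b)\), which is exactly the hypothesis of \Cref{prop:exponential}. That proposition gives
\[\frac RK\geq c_\lambda(P)-b\,\frac{\log t+2}{t}\geq\mu-\frac{tb}{8K(1-\delta)}-b\,\frac{\log t+2}{t},\]
since \(\lambda b^2/(8K)=tb/(8K(1-\delta))\). This holds for every \(t\geq1\), so taking the infimum over \(t\) gives \eqref{eq:main-extraction}.

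The only substantive input beyond earlier results is Hoeffding's lemma, specifically the bound on the variance of the tilted distribution; the rest is algebra. The extraction regime \eqref{eq:extraction-regime} then follows by choosing \(t=\sqrt{K(1-\delta_K)}\to\infty\). With this choice both terms in the infimum vanish, so \(R/K\to\mu\). Finally, \(U/K\to0\) because \eqref{eq:surplus} gives \(R+\E[u_v]\leq K\mu\).
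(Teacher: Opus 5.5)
Your proposal is correct and follows the paper's proof step for step: Hoeffding's lemma via the $b^2/4$ bound on the variance of the tilted law, then summing over the independent $v_i$ to get \eqref{eq:mgf} and $c_\lambda(P)\geq\mu-\lambda b^2/(8K)$, then \Cref{prop:exponential} with $\lambda=t/((1-\delta)b)$ and an infimum over $t\geq1$, with the lower bound from \eqref{eq:surplus} and $u_v\geq0$. The only difference is cosmetic: you Taylor-expand the uncentered $\log\E[\mathrm e^{sX}]$ rather than the centered version, which changes nothing.
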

Under the independent prior, the buyer's value per good for the grand bundle, \(w\), is the average of \(K\) independent values. High values on some goods offset low values on others, so \(w\) becomes increasingly predictable and concentrates around \(\mu\). The statistic \(c_\lambda(P)\) lies below \(\mu\) to the extent that low realizations of \(w\) remain possible. Indeed, \(\mu-c_\lambda(P) = \frac1\lambda\log\E \left[\mathrm e^{-\lambda(w-\mu)}\right]\), and because \(w\) is the average of \(K\) independent bounded values, the logarithm on the right is at most \(\lambda^2b^2/(8K)\). Hence, \(\mu-c_\lambda(P)\leq\frac{\lambda b^2}{8K}\), so this difference shrinks as the number of goods grows.

\Cref{prop:exponential} establishes, in turn, that equilibrium revenue per good cannot lie far below \(c_\lambda(P)\), even though the seller cannot commit to her future offers. Writing \(t=\lambda(1-\delta)b\) and combining the two comparisons,
\[\mu-\frac{R_{K,\delta}}K \leq b\left\{\frac{t}{8K(1-\delta)} + \frac{\log t+2}{t} \right\}.\]
The first term reflects the remaining uncertainty about the buyer's average value for the bundle, whereas the second reflects the buyer's ability to reject the bundle and continue bargaining. Increasing \(t\) reduces the second term but increases the first, so taking the infimum over \(t\geq1\) balances the two. In particular, when \(K(1-\delta)\to\infty\), choosing \(t=\sqrt{K(1-\delta)}\) makes both terms converge to zero. As a result, equilibrium revenue per good converges to \(\mu\).

\begin{corollary}
\eqref{eq:extraction-regime} holds, and expected discounted total surplus per good converges to \(\mu\).
\end{corollary}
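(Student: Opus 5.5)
The plan is to derive the corollary directly from \Cref{prop:extraction-bound} and the surplus identity \eqref{eq:surplus}, taking the sequence $\delta_K$ with $K(1-\delta_K)\to\infty$.

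\textbf{Step 1: revenue per good.} By \eqref{eq:main-extraction}, $0\leq\mu-R_{K,\delta_K}/K$ is bounded by the infimum on the right-hand side. Since $K(1-\delta_K)\to\infty$, for all large $K$ we have $K(1-\delta_K)\geq1$. We may then choose $t_K=\sqrt{K(1-\delta_K)}\geq1$. With this choice the bound reads
\[b\left\{\frac{1}{8\sqrt{K(1-\delta_K)}}+\frac{\tfrac12\log\bigl(K(1-\delta_K)\bigr)+2}{\sqrt{K(1-\delta_K)}}\right\},\]
and both terms vanish as $K(1-\delta_K)\to\infty$. Hence $R_{K,\delta_K}/K\to\mu$.

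\textbf{Step 2: buyer utility and total surplus.} Apply \eqref{eq:surplus} at the initial history, where $P_h=P$, $I=[K]$, and $\E_P[W]=K\mu$. This gives
\[R_{K,\delta_K}+U_{K,\delta_K}=\E_P\Bigl[\sum_t\delta_K^t v(S_t)\Bigr]\leq K\mu.\]
Combining this with $U_{K,\delta_K}\geq0$ from \eqref{eq:uniform-values}, and dividing by $K$:
\[0\leq \frac{U_{K,\delta_K}}K\leq \mu-\frac{R_{K,\delta_K}}K\to0.\]
Expected discounted total surplus per good equals $(R+U)/K$. It is sandwiched between $R_{K,\delta_K}/K\to\mu$ and $\mu$, so it converges to $\mu$.

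The only step needing care is checking that the choice $t=\sqrt{K(1-\delta_K)}$ is admissible, namely $t\geq1$, which holds eventually. The note $\log t/t\to0$ is routine. No separate assumption $\delta_K\uparrow1$ is needed, since the bound in \Cref{prop:extraction-bound} holds for every $\delta$.
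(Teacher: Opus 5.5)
Your proposal is correct and matches the paper's proof: you choose $t_K=\sqrt{K(1-\delta_K)}$, which is admissible for large $K$, in \eqref{eq:main-extraction}, and then use \eqref{eq:surplus} together with $U_{K,\delta_K}\geq0$ to squeeze $U_{K,\delta_K}/K$ to zero. Your explicit sandwich $R_{K,\delta_K}/K\leq (R_{K,\delta_K}+U_{K,\delta_K})/K\leq\mu$ for total surplus just spells out a step the paper leaves implicit.
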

\begin{proof}
Suppose that \(N_K\coloneqq K(1-\delta_K)\to\infty\), and take \(t_K=\sqrt{N_K}\) for all sufficiently large \(K\). Evaluating \eqref{eq:main-extraction} at \(t=t_K\) shows that \(\mu-\frac{R_{K,\delta_K}}K\rightarrow 0\). Moreover, \eqref{eq:surplus} and \(U_{K,\delta_K}\geq0\) imply \(0 \leq \frac{U_{K,\delta_K}}K \leq \mu-\frac{R_{K,\delta_K}}K\), so \(U_{K,\delta_K}/K\to0\).\end{proof}

\section{Proving the non-extraction part of \texorpdfstring{\Cref{thm:main} }{theorem}}\label{sec:revision}

In this section, I prove \eqref{eq:revision-regime}. I first derive an upper bound on how much the seller's equilibrium revenue per good can exceed $\ell$, using \Cref{lem:F-bounds,lem:posterior-recursion}. This bound tends to zero as $\delta\uparrow1$ for a fixed number of goods and a fixed positive prior probability that all valuations equal $\ell$ \eqref{eq:general-revision}. Finally, I use the independence of each good's valuation to prove \Cref{prop:revision-bound}, which bounds both $R_{K,\delta}/K-\ell$, the seller's revenue per good above $\ell$, and $(\mu-\ell)-U_{K,\delta}/K$, the difference between $\mu-\ell$ and the buyer's utility per good, by $C(1-\delta)\mathrm e^{AK}$ \eqref{eq:main-revision}. Taking $(1-\delta_K)\mathrm e^{AK}\to0$ then establishes \eqref{eq:revision-regime} and completes \Cref{thm:main}.

Fix an equilibrium and a seller history \(h\), and write \(I=I(h)\), \(m=m(h)\), and \(R=R(h)\). By \eqref{eq:imitation-general}, \(u_v(h)=u_{\hat v}(h)\) whenever \(v_i=\hat v_i\) for every \(i\in I\). Let \(u_L\) denote this common utility when \(v_i=\ell\) for every \(i\in I\), and let \(u_H\) denote it when \(v_i=b\) for every \(i\in I\). Define
\begin{equation}\label{eq:d-and-F}
d\coloneqq m(b-\ell)-(u_H-u_L),\quad \text{\&} \quad F\coloneqq R+\frac{bu_L-\ell u_H}{b-\ell}=(R-m\ell)+u_L+\frac{\ell}{b-\ell}d.
\end{equation}
At a public history \((h,M,s)\), let \(u_L(M,s)\) denote \(u_v(h,M,s)\) when \(v_i=\ell\) for every \(i\in I\), and let \(u_H(M,s)\) denote \(u_v(h,M,s)\) when \(v_i=b\) for every \(i\in I\).\footnote{\Conditionref{iii} of \Cref{def:equilibrium} implies that \(u_v(h,M,s)=u_{\hat v}(h,M,s)\) whenever \(v_i=\hat v_i\) for every \(i\in I\) so that \(u_L(M,s)\) and \(u_H(M,s)\) are well defined.} Write \(R(M,s)\coloneqq R(h,M,s)\), and define
\[d(M,s)\coloneqq m(b-\ell)-\left(u_H(M,s)-u_L(M,s)\right) \quad \text{\&} \quad F(M,s)\coloneqq R(M,s)+\frac{bu_L(M,s)-\ell u_H(M,s)}{b-\ell}.\]
\begin{lemma}\label{lem:F-bounds}
At every seller history,
\begin{equation}\label{eq:F-bounds}
0\leq\frac{\ell}{b-\ell}d\leq F \leq\frac{b}{b-\ell}d\leq mb \qquad \text{\&} \qquad 0\leq R-m\ell\leq F.
\end{equation}
After any \((M,s)\) satisfying \(R(M,s)=R\),
\[0\leq\frac{\ell}{b-\ell}d(M,s)\leq F(M,s) \leq\frac{b}{b-\ell}d(M,s)\leq mb \quad \text{\&} \quad 0\leq R(M,s)-m\ell\leq F(M,s).\]
For \((M,s)\) in support of \(\sigma(\cdot\mid h)\), \(R(M,s)=R\) almost surely and \(F=\E_{M,s}[F(M,s)]\).\end{lemma}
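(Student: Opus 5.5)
The plan is to derive every inequality in \eqref{eq:F-bounds} from three facts already available: the imitation bound \eqref{eq:imitation-general}, the surplus identity \eqref{eq:surplus}, and \Cref{lem:baseline}. Each of the six inequalities reduces, using the second expression for \(F\) in \eqref{eq:d-and-F}, to a statement about \(R\), \(u_L\) and \(u_H\).

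\emph{Bounds on \(d\).} Apply \eqref{eq:imitation-general} with \(v\) all-high on \(I\) and \(\hat v\) all-low on \(I\). This gives \(u_H-u_L\leq m(b-\ell)\), so \(d\geq0\). Applying it in the reverse direction gives \(u_L-u_H\leq0\), hence \(d\leq m(b-\ell)\), which yields \(\frac{b}{b-\ell}d\leq mb\).

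\emph{Inequalities involving \(F\).} Write \(F=(R-m\ell)+u_L+\frac{\ell}{b-\ell}d\).
\begin{itemize}[nosep]
\item \Cref{lem:baseline} gives \(R-m\ell\geq0\), and \(u_L\geq0\) because the buyer can always reject forever. Together these yield \(F\geq\frac{\ell}{b-\ell}d\).
\item Since \(u_L\geq0\) and \(d\geq0\), we get \(R-m\ell\leq F\).
\item The upper bound \(F\leq\frac{b}{b-\ell}d\) is equivalent, after substituting the definition of \(d\), to \(R-m\ell+u_L\leq m(b-\ell)-u_H+u_L\), that is, to \(R+u_H\leq mb\).
\end{itemize}

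\emph{The main step: \(R+u_H\leq mb\) under an arbitrary posterior.} I expect this to be the only nontrivial point, because \(P_h\) may put no weight on the all-high type. First, use \eqref{eq:imitation-general} with \(v\) all-high and \(\hat v\) arbitrary. This gives
\[u_{\hat v}\geq u_H-\sum_{i\in I}(b-\hat v_i)=u_H-mb+\hat v(I)\quad\text{for every }\hat v.\]
Then average this over \(\hat v\sim P_h\) and insert the result into \eqref{eq:surplus}, which reads \(R+\E_{P_h}[u_{\hat v}]\leq\E_{P_h}[v(I)]\). The \(\E_{P_h}[v(I)]\) terms cancel, leaving \(R+u_H\leq mb\) as required.

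\emph{After a menu and message with \(R(M,s)=R\).} The same argument goes through verbatim at the public history \((h,M,s)\). The type distribution there is still \(P_h\), and \Conditionref{iii} gives the imitation inequality directly at \((h,M,s)\), because type \(\hat v\) can mimic \(v\) from that point on. The surplus identity holds there for the same reason as \eqref{eq:surplus}, and \(u_L(M,s)\geq0\) by rejecting forever. The one place \Cref{lem:baseline} was used is \(R(M,s)-m\ell\geq0\), and this follows from \(R(M,s)=R\geq m\ell\).

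\emph{Last claim.} By \Conditionref{ii}, \(R(M,s)\leq R\) for every \((M,s)\). Otherwise the seller could deviate at \(h\) to announce \((M,s)\) and then continue with \(\sigma\), holding the buyer strategies and posteriors fixed, which would beat \(R\). On the other hand, \(R=\E_{M,s}[R(M,s)]\) under \(\sigma(\cdot\mid h)\). Hence \(R(M,s)=R\) almost surely on the support.

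Moreover, \(u_v(h)=\E_{M,s}[u_v(h,M,s)]\) for each \(v\). Since \(F(M,s)\) is an affine function of \((R(M,s),u_L(M,s),u_H(M,s))\) with coefficients independent of \((M,s)\), linearity of expectation gives \(F=\E_{M,s}[F(M,s)]\). Integrability holds because all the quantities involved are bounded by the uniform bounds in \eqref{eq:uniform-values}, applied at public histories in the same way.
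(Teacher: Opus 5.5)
Your proposal is correct and follows the paper's proof essentially step for step. The only cosmetic difference is that you obtain $R+u_H\le mb$ by inserting the all-high imitation bound into \eqref{eq:surplus}, whereas the paper bounds each type's payments by $v(I)-u_v\le mb-u_H$ and then averages, which is the same computation. One small precision on your integrability remark: the bounds in \eqref{eq:uniform-values} can fail at an arbitrary public history $(h,M,s)$, since $M$ may contain negative prices, but they hold almost surely on the support of $\sigma(\cdot\mid h)$ because $R(M,s)=R\geq 0$ there, and that is all you need.
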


The bounds all follow from the buyer's ability to imitate other types. Since the all-high buyer can imitate the all-low buyer, \(u_H\geq u_L\). Conversely, the all-low buyer can imitate the all-high buyer, losing at most \(m(b-\ell)\) in value. And so \(0\leq d\leq m(b-\ell)\). More importantly, any type can imitate the all-high type. Since his expected discounted value from purchases cannot exceed \(v(I)\), his expected discounted payments cannot exceed \(mb-u_H=m\ell-u_L+d\). Averaging across types then produces \(R-m\ell+u_L\leq d\).

We already have the lower bound for seller revenue from \Cref{lem:baseline}, \(m\ell\), while the all-low buyer can guarantee himself zero utility. Consequently, \(R-m\ell\) and \(u_L\) are nonnegative, and their sum cannot exceed \(d\). \(F\) is simply this sum plus \(\ell d/(b-\ell)\), and so these inequalities establish both bounds on \(F\), as well as \(R-m\ell\leq F\). The bound \(d\leq m(b-\ell)\) also implies \(F\leq mb\), completing \eqref{eq:F-bounds}. The same argument applies after any announcement that preserves the seller's revenue, since the announcement leaves her posterior unchanged. Finally, seller optimality implies \(R(M,s)\leq R\) for every announcement and since the seller's equilibrium announcements yield expected revenue \(R\), it must be an equality almost surely. Averaging the continuation payoffs in the definition of \(F\) then establishes the final equality.

For $\eta\in(0,1]$, let $S_m(\delta,\eta)$ be the supremum of the values $F$ and $F(M,s)$ over all priors, equilibria, seller histories $h$ with $m$ remaining goods and posterior probability at least $\eta$ that all their values equal $\ell$, and announcements $(M,s)$ satisfying $R(h,M,s)=R(h)$. By \Cref{lem:F-bounds},\footnote{To establish the last equality in this expression, suppose the posterior assigns probability one to all remaining values being $\ell$. By \Cref{lem:baseline}, $R\geq m\ell$, while \eqref{eq:surplus} implies $R+u_L\leq m\ell$. Since $u_L\geq0$, we have $R=m\ell$ and $u_L=0$. The expected discounted value of the goods purchased must, therefore, equal $m\ell$. Because $\ell>0$ and $\delta<1$, this requires the buyer to purchase all $m$ goods at the current date with probability one. The all-high type can follow an all-low type's strategy, purchase those goods at the same date, and make the same expected payments, $m\ell$. Thus, $u_H\geq m(b-\ell)$. Combine this with $u_H-u_L\leq m(b-\ell)$ (recalling $u_L=0$) to get equality, so $d=0$ and $F=0$ by \eqref{eq:F-bounds}. The same argument applies after any announcement with $R(M,s)=R=m\ell$, establishing $S_m(\delta,1)=0$.}
\begin{equation}\label{eq:S-initial}
0\leq S_m(\delta,\eta)\leq mb, \qquad S_0(\delta,\eta)=0\qquad  \text{\&} \qquad S_m(\delta,1)=0.
\end{equation}

\begin{lemma}\label{lem:posterior-recursion}
Fix $m\geq1$ and $0<\kappa<\eta_0<\eta_1\leq1$, and suppose $\frac b\ell\frac{\eta_1-\eta_0}{\eta_1-\kappa}<1$. Then
\begin{equation}\label{eq:posterior-recursion}
S_m(\delta,\eta_0) \leq \frac{bm(1-\delta)+\dfrac b\ell \max\left\{S_m(\delta,\eta_1), \max_{j<m}S_j(\delta,\kappa)\right\}}{1-\dfrac b\ell\dfrac{\eta_1-\eta_0}{\eta_1-\kappa}}.
\end{equation}
\end{lemma}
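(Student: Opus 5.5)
The plan is to fix a seller history \(h\) with \(m\) remaining goods, posterior probability \(\eta\geq\eta_0\) that all remaining values equal \(\ell\), and an announcement \((M,s)\) with \(R(M,s)=R\). I will bound \(F(M,s)\) by the right-hand side of \eqref{eq:posterior-recursion}. By \Cref{lem:F-bounds}, \(F=\E_{M,s}[F(M,s)]\), so this also covers \(F\) itself. Taking the supremum then gives an inequality of the form
\[S_m(\delta,\eta_0)\leq bm(1-\delta)+\tfrac b\ell\max\{\cdots\}+\tfrac b\ell\tfrac{\eta_1-\eta_0}{\eta_1-\kappa}S_m(\delta,\eta_0),\]
and rearranging yields \eqref{eq:posterior-recursion}. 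The rearrangement is legitimate because \(S_m\leq mb<\infty\) by \eqref{eq:S-initial}.

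\textbf{Step 1: a one-period recursion for \(F(M,s)\).} Write \(R(M,s)\), \(u_L(M,s)\) and \(u_H(M,s)\) as the current payment or value plus \(\delta\) times the continuation values at the seller history following each response \(a\in M\). For \(u_L\) I use the all-low type's optimal responses. For \(u_H\) I use the all-high type's optimal responses, which I bound from below via imitation of the all-low type, as in \eqref{eq:imitation-general}.

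At each continuation history \(h_a\), I control the continuation contribution by the continuation \(F(h_a)\) through the sandwich \(\frac{\ell}{b-\ell}d\leq F\leq\frac b{b-\ell}d\) of \Cref{lem:F-bounds}. Passing between \(d\)-type quantities and \(F\) costs the factor \(b/\ell\) that multiplies the continuation terms in \eqref{eq:posterior-recursion}.

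The discounting terms are of the form \((1-\delta)\times\) (values of goods not sold today). They are bounded using \eqref{eq:uniform-values}, which gives the additive loss \(bm(1-\delta)\). The aim of this step is
\[F(M,s)\leq bm(1-\delta)+\tfrac b\ell\textstyle\sum_a\pi_a F(h_a),\]
possibly with the all-low type's response probabilities in place of \(\pi_a\).

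\textbf{Step 2: classify responses by the posterior all-low probability \(\eta_a\) at \(h_a\).} There are three cases.

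\emph{(a)} The response is a purchase leaving \(j<m\) goods with \(\eta_a\geq\kappa\). Then \(F(h_a)\leq S_j(\delta,\kappa)\).

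\emph{(b)} The response is rejection with \(\eta_a\geq\eta_1\). Then \(F(h_a)\leq S_m(\delta,\eta_1)\).

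\emph{(c)} All remaining responses. These are purchases with \(\eta_a<\kappa\), and possibly rejection with \(\eta_a<\eta_1\). Each such continuation is bounded crudely by a quantity controlled by \(S_m(\delta,\eta_0)\).

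\textbf{Step 3: Bayes' rule \eqref{eq:bayes} bounds the weight of case (c).} Bayes' rule gives \(\eta=\sum_a\pi_a\eta_a\). Suppose rejection falls in case (c). Then
\[\eta_0\leq\eta\leq\pi_{\mathrm{good}}+\pi_{\mathrm{rej}}\,\eta_1+\pi_{\mathrm{bad}}\,\kappa .\]
Here \(\pi_{\mathrm{good}}\) is the total probability of case (a), \(\pi_{\mathrm{rej}}\) that of rejection, and \(\pi_{\mathrm{bad}}\) that of the case-(c) purchases; these three sum to one. The inequality gives a bound on \(\pi_{\mathrm{bad}}\) relative to the case-(b)/(c) mass, and the target is
\[\pi_{\mathrm{bad}}\leq\frac{\eta_1-\eta_0}{\eta_1-\kappa}.\]
The cases (a) and (b) contribute at most \(\max\{S_m(\delta,\eta_1),\max_{j<m}S_j(\delta,\kappa)\}\).

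\textbf{Main obstacle.} The delicate point is case (c). A bad purchase lands at a history with fewer goods but low all-low probability, so it is not directly covered by any \(S\). Likewise, a rejection with \(\eta_a<\eta_1\) returns to \(m\) goods, and there one must iterate rather than bound.

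What I will try first is to treat rejection with \(\eta_a\in[\eta_0,\eta_1)\) as a return to the same class, so that its continuation is bounded by \(S_m(\delta,\eta_0)\). I will use the all-low type's accounting, where \(F\geq u_L+\frac{\ell}{b-\ell}d\), to show that the bad purchases contribute at most \(\frac b\ell\) times their probability times \(S_m(\delta,\eta_0)\). The reason this should work is that a purchase lowering \(\eta_a\) below \(\kappa\) is mostly made by non-all-low types, whose rents are already counted in \(d\).

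Getting this accounting exactly right, so that the coefficient on \(S_m(\delta,\eta_0)\) is precisely \(\frac b\ell\frac{\eta_1-\eta_0}{\eta_1-\kappa}<1\) rather than something larger, is where I expect the real work to lie.
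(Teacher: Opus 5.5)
Your outer reduction is the paper's: bound $F$ and $F(M,s)$ at histories with $\eta\in[\eta_0,\eta_1)$, use $S_m(\delta,\eta_1)$ when $\eta\ge\eta_1$, take the supremum, and rearrange. But the core argument is missing, and two of your concrete steps fail.

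First, the rejection continuation cannot be bounded by $S_m(\delta,\eta_0)$ in one step. Your Step~1 can be justified with the all-low type's weights, since $d(M,s)\le(1-\delta)m'(b-\ell)+\delta d'$ for each response he uses. It then puts the factor $\delta b/\ell$ on every continuation, and this factor exceeds one once $\delta>\ell/b$. That includes the rejection continuation, whose weight can be close to one, so the coefficient on $S_m(\delta,\eta_0)$ can exceed one. Iterating Step~1 along a rejection path compounds to $(\delta b/\ell)^T$ and adds a $(1-\delta)m(b-\ell)$ loss at every date. Even a sharper one-step bound leaves a term $\delta\varrho S_m(\delta,\eta_0)$. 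No rearrangement of that yields the denominator in \eqref{eq:posterior-recursion}, which involves neither $\delta$ nor $\varrho$.

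The paper does two things instead. (i) When some all-low type rejects, it plugs $u_L(M,s)=\delta u_L'$ and $u_H(M,s)\ge\delta u_H'$ directly into the definition of $F$, getting $F(M,s)\le R-\delta R'+\delta F'$: coefficient $\delta$ and no additive loss. (ii) It iterates along the whole rejection path up to the first purchase or the first ``controlled'' announcement, meaning one at which some positive-probability response lands in your case (a) or (b). It applies Bayes' rule to the probability $p_*$ of a purchase before that stop: $\eta\le\kappa p_*+\eta_1(1-p_*)$, so $p_*\le\frac{\eta_1-\eta_0}{\eta_1-\kappa}$, while $\delta^T$ kills the tail of indefinitely many rejections. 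This multi-period bound is what is needed, not a single-period bound on $\pi_{\mathrm{bad}}$. Your single-period inequality does not even give that bound when $\pi_{\mathrm{good}}>0$ and $\eta_1<1$.

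Second, you have no working mechanism for bad purchases. Their continuations have fewer goods and posterior below $\kappa$, so no $S$ covers them, and any bound that averages continuation values over responses is stuck. The paper never touches those continuations; it charges purchases to the current period. Let $Y$ be the seller's expected revenue conditional on a purchase. Then $R=\varrho\delta R'+(1-\varrho)Y$, and each type's option to imitate the all-high type gives $Y\le mb-u_H(M,s)$, so
\[
\varrho(R-\delta R')=(1-\varrho)(Y-R)\le(1-\varrho)\Bigl[\tfrac{b}{b-\ell}d(M,s)-F(M,s)\Bigr].
\]
Combined with (i), this yields
\[
F(M,s)\le\delta\varrho F'+(1-\varrho)\tfrac{b}{b-\ell}d(M,s)\le\delta\varrho F'+(1-\varrho)\tfrac b\ell S_m(\delta,\eta_0).
\]
Here $S_m(\delta,\eta_0)$ bounds the \emph{current} $F(M,s)$, not any continuation.

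Finally, when a good response exists you should not average at all. The all-low type is indifferent among the responses he uses, so a single one landing in case (a) or (b) pins down $u_L(M,s)$. The all-high type can imitate that same response, and then $F(M,s)$ is at most the numerator of \eqref{eq:posterior-recursion} outright, with no Bayes weights. This is why the paper classifies announcements rather than responses.
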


The argument behind this result exploits what the buyer's purchases and rejections reveal about his valuations. Consider a history at which the seller assigns probability between \(\eta_0\) and \(\eta_1\) to all \(m\) remaining goods having value \(\ell\). If some purchase leaves posterior probability at least \(\kappa\) that all these goods had value \(\ell\), an all-low buyer must sometimes make that purchase. The all-high buyer can imitate him, which bounds \(F\) using the continuation problem with fewer goods, together with a term proportional to \(1-\delta\). Similarly, if rejection raises the probability of the all-low buyer to at least \(\eta_1\), the same imitation reasoning bounds \(F\) using the continuation problem with the same number of goods but a higher posterior probability.

Otherwise, every possible purchase reduces the probability of the all-low buyer below \(\kappa\), so rejection must increase it while keeping it below \(\eta_1\). Bayes' rule then bounds the probability of a purchase before the first announcement satisfying either of the preceding conditions by \((\eta_1-\eta_0)/(\eta_1-\kappa)\). At announcements satisfying neither condition, the imitation logic also bounds \(F\) by its discounted continuation after rejection plus \((b/\ell)S_m(\delta,\eta_0)\) times the probability of purchase. Iterating this inequality, the contribution from indefinitely many rejections vanishes, while the total contribution from purchases remains bounded via the preceding probability bound. Combining these observations with the bounds from the first two cases and rearranging establishes \eqref{eq:posterior-recursion}.

\begin{proposition}\label{prop:revision-bound}
Fix $0<\ell<b$ and $q\in(0,1)$. There are finite constants $C,A>0$, depending only on $(\ell,b,q)$, such that every equilibrium satisfies
\begin{equation}\label{eq:main-revision}
0\leq\frac{R_{K,\delta}}K-\ell \leq C(1-\delta)\mathrm{e}^{AK}, \qquad \text{\&} \qquad 0\leq(\mu-\ell)-\frac{U_{K,\delta}}K\leq C(1-\delta)\mathrm{e}^{AK}.
\end{equation}
\end{proposition}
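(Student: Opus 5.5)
The plan is to iterate \Cref{lem:posterior-recursion} over a geometric ladder of posterior thresholds. The resulting bound is of the form $S_K(\delta,\eta)\lesssim (1-\delta)\times(\text{exponential in }K)$, with $\eta=(1-q)^K$ the prior probability that all $K$ values equal $\ell$. I would then translate this into the two estimates using \Cref{lem:F-bounds}.

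\textbf{Step 1: a one-parameter ladder.} Fix a constant $\theta\in(0,1)$, depending only on $b/\ell$, chosen so that for thresholds $\kappa=\theta^2\eta$, $\eta_0=\eta$, $\eta_1=\eta/\theta$ (capped at $1$) the ratio
\[\frac b\ell\,\frac{\eta_1-\eta_0}{\eta_1-\kappa}=\frac b\ell\,\frac{1-\theta}{1-\theta^3}\]
is at most some $\rho<1$. Taking $\theta$ close to $1$ works, since the ratio tends to $\frac{b}{3\ell}$, which is $<1$ only when $b<3\ell$. This is a problem, so I would instead let $\eta_1$ be close to $\eta_0$ while $\kappa$ is much smaller: take $\eta_1=\eta/\theta$ and $\kappa=\theta^{r}\eta$ with $r$ large. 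The ratio is then about $\frac b\ell\frac{1-\theta}{1-\theta^{r+1}}$, and this is at most $\rho<1$ for $\theta$ near $1$ and $\theta^{r}$ small enough. Both $\theta$ and $r$ depend only on $b/\ell$.

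\textbf{Step 2: double induction.} Define $T_m(n)=S_m(\delta,\theta^{n})$ for integer levels $n\ge0$, where $T_m(0)=S_m(\delta,1)=0$ by \eqref{eq:S-initial}. \Cref{lem:posterior-recursion} gives
\[T_m(n)\le \frac{bm(1-\delta)+\frac b\ell\max\{T_m(n-1),\max_{j<m}T_j(n+r)\}}{1-\rho}.\]
Set $\Gamma=\frac{b}{\ell(1-\rho)}>1$. I would prove by induction on $m$, and within each $m$ by induction on $n$, a bound of the form
\[T_m(n)\le B\,(1-\delta)\,\Gamma^{n+c m}\]
for suitable constants $B,c$. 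The within-$m$ step uses $\Gamma\cdot\Gamma^{n-1+cm}=\Gamma^{n+cm}$. The cross-$m$ step needs $\Gamma\cdot\Gamma^{n+r+c(m-1)}\le \Gamma^{n+cm}$, which forces $c\ge r+1$. The additive $bm(1-\delta)$ terms are absorbed by taking $B$ large; a factor of $m$ may need absorbing into the exponent, which is fine. Monotonicity of $S_m$ in $\eta$ (larger $\eta$ gives a sup over fewer histories) lets us round arbitrary $\eta$ to the ladder.

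\textbf{Step 3: specialize and conclude.} Take $\eta=(1-q)^K$, which lies at level $n\approx K\log(1/(1-q))/\log(1/\theta)$. This yields
\[S_K(\delta,(1-q)^K)\le B(1-\delta)e^{A'K}.\]
At the initial history, \Cref{lem:F-bounds} gives $R-K\ell\le F\le S_K$, which is the first estimate with $C=B$ and $A=A'$. For the buyer, surplus identity \eqref{eq:surplus} gives $R+U=\E[\sum_t\delta^t v(S_t)]$. I need a lower bound on total discounted surplus: $F\ge u_L+\frac{\ell}{b-\ell}d$ bounds $d$ and hence the welfare loss.

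The main obstacle is this welfare step. I would try to show that the expected undiscounted-value shortfall is at most a multiple of $d+u_L+(R-K\ell)$, via the all-high imitation argument in \Cref{lem:F-bounds}: purchases by the all-high type at value $b$ versus payments give $u_H\le$ delivered value minus payments. Alternatively, I would use $U\ge \E[u_v]$ with $u_v\ge u_L+\sum_i(v_i-\ell)-d$, which follows from the all-high type bound combined with imitation of types. That yields $U\ge K(\mu-\ell)-d$, with $d\le\frac{b-\ell}{\ell}F$. Combined with $U\le K\mu-R\le K(\mu-\ell)$ from \Cref{lem:baseline}, this gives the second estimate after enlarging $C$.
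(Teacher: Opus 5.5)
Your route is essentially the paper's. It uses the geometric ladder $\eta_0=\theta^n$, $\eta_1=\theta^{n-1}$, $\kappa=\theta^{n+r}$ with $\theta$ near $1$, and it inducts over an index of the form $n+cm$ with $c\geq r+1$ (the paper's $H=n+(s+1)m$). It rounds $\eta=(1-q)^K$ to the ladder by monotonicity of $S_m$ in $\eta$. Finally, it bounds the buyer via $U\geq K(\mu-\ell)-d$ from all-high imitation together with $d\leq\frac{b-\ell}{\ell}F$.

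One step fails as literally stated: the induction hypothesis $T_m(n)\leq B(1-\delta)\Gamma^{n+cm}$ cannot be closed by enlarging $B$. The step in $n$ multiplies $T_m(n-1)$ by exactly $\Gamma$, which already uses up the whole bound $B(1-\delta)\Gamma^{n+cm}$. It then still adds $\frac{bm(1-\delta)}{1-\rho}>0$ on top. The fix is either to induct on the geometric-sum form $T_m(n)\leq\frac{bm(1-\delta)}{1-\rho}\cdot\frac{\Gamma^{n+cm}-1}{\Gamma-1}$, as the paper does with $B_0$, or to use a base strictly larger than $\Gamma$. After that, the rest of your argument goes through.
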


To obtain a bound that remains valid as the number of goods grows, I consider a geometric sequence \(r^n\) of lower bounds on the posterior probability of all-low valuations, with \(r=1-\ell/(4b)\). By choosing \(\eta_0=r^n\), \(\eta_1=r^{n-1}\), and \(\kappa=r^{n+s}\), for a sufficiently large integer \(s\), I can ensure that the denominator in \eqref{eq:posterior-recursion} is at least \(1/2\). The resulting inequality bounds \(S_m(\delta,r^n)\) in terms of a multiple of \(m(1-\delta)\) and continuation bounds with either a higher probability of all-low valuations or fewer remaining goods.  In the first case, observing the buyer's rejection raises the posterior probability of all-low valuations to at least \(r^{n-1}\), reducing \(n\) by one. In the second, the buyer purchases some goods, reducing \(m\) by at least one, although the posterior probability bound may fall to \(r^{n+s}\). In any case \(n+(s+1)m\) decreases.

I now repeatedly apply the bound in \Cref{lem:posterior-recursion} to \(S_m(\delta,r^n)\). Since each application reduces \(n+(s+1)m\) by at least one, we eventually reach either \(n=0\) or \(m=0\). In the first case, the probability of all-low valuations is one, so \(S_m(\delta,1)=0\). In the second, there are no goods left to sell, so \(S_0(\delta,\eta)=0\). Each application of the recursion adds at most \(2bm(1-\delta)\) and multiplies the remaining bound by \(B_0\coloneqq2b/\ell\). Summing the resulting geometric produces \(S_m(\delta,r^n)\leq 2bm(1-\delta)(B_0^{n+(s+1)m}-1)/(B_0-1)\). Accordingly, the bound grows exponentially in \(n+m\).

For an arbitrary probability \(\eta\in(0,1]\), choose \(n=\lceil\log(1/\eta)/\log(1/r)\rceil\), so \(r^n\leq\eta\). Since \(S_K(\delta,\eta)\leq S_K(\delta,r^n)\) and \(B_0^n\leq B_0\eta^{-a}\), where \(a\coloneqq\log B_0/\log(1/r)\), I obtain \(S_K(\delta,\eta)/K\leq [2bB_0/(B_0-1)](1-\delta)B_0^{(s+1)K}\eta^{-a}\). Under the independent prior, the probability that all \(K\) goods have value \(\ell\) is \((1-q)^K\), and substituting in this probability, I obtain an upper bound on \(S_K(\delta,(1-q)^K)/K\) of \([2bB_0/(B_0-1)](1-\delta)\mathrm{e}^{AK}\), where \(A=(s+1)\log B_0+a\log(1/(1-q))\). At the initial history, \Cref{lem:F-bounds} and the definition of \(S_K\) imply \(0\leq R_{K,\delta}/K-\ell\leq F/K\leq S_K(\delta,(1-q)^K)/K\). This establishes the bound on the seller's revenue per good.

To bound the buyer's utility, recall that any buyer type can imitate the all-high type and obtain at least his total valuation minus \(Kb-u_H\).  Averaging across types and recalling that \(u_L\geq0\), we obtain \(U_{K,\delta}\geq K(\mu-\ell)-d\). Conversely, the seller's revenue is at least \(K\ell\), while total expected discounted surplus cannot exceed \(K\mu\), so \(U_{K,\delta}\leq K(\mu-\ell)\). Consequently, \(0\leq(\mu-\ell)-U_{K,\delta}/K\leq d/K\). By \Cref{lem:F-bounds}, \(d\leq(b-\ell)F/\ell\), so the same bound on \(F/K\), multiplied by \((b-\ell)/\ell\), also bounds the difference between the buyer's utility per good and \(\mu-\ell\).

\section{Discussion}

This paper compares two forces that operate in opposite directions. In a static pricing game with many products, aggregating independent values makes a buyer's total willingness to pay predictable, engendering (at least virtual) full extraction of consumer surplus.\footnote{See \citet{BakosBrynjolfsson1999} and \citet{Armstrong1999}. Also related is \citet{JacksonSonnenschein2007}, where linking independent decision problems relaxes incentive constraints.}  But these analyses grant the seller full commitment. On the flip side, there is a sizable literature that studies a durable-good monopolist's inability to commit not to reduce future prices.\footnote{\citet{Coase1972} introduces the eponymous conjecture, and \citet{GulSonnenscheinWilson1986} provide the foundational analysis.} Notably, the Coasian conclusion can fail for a number of distinct reasons.\footnote{\citet{AusubelDeneckere1989} show that, as the time between offers vanishes, equilibria can support every seller payoff between zero and the static monopoly profit. Granularity can also produce non-Coasian outcomes \citep{BagnoliSalantSwierzbinski1989,VonderFehrKuhn1995}. \citet{McAfeeWiseman2008}, \citet{BoardPycia2014}, and \citet{BrzustowskiGeorgiadisHarrisSzentes2023} overturn the Coasian limit through capacity choice, an irreversible outside option, and ``smart'' contracts, respectively.}

Finally, \citet{NavaSchiraldi2019} study a monopolist selling multiple durable varieties to a continuum of unit-demand consumers and show that, in stationary equilibria, profits converge to the optimal static market-clearing level as price revisions become instantaneous. Their consumers choose at most one variety, so products are alternatives, whereas here there is just a single buyer, who may purchase every good, removing goods from future negotiation. \citet[Theorem~2]{Strulovici2017} fixes a multidimensional contract environment and shows that, as negotiation frictions vanish, every equilibrium converges to type-specific efficient contracts--because these contracts generically differ across types, frequent renegotiation need not eliminate the principal's surplus.\footnote{\citet{HartTirole1988} also link renegotiation-proof contracting to Coasian dynamics.} In my model, private information has \(K\) dimensions, but the efficient allocation assigns every good to every type.

\appendix

\section{\texorpdfstring{\secref{sec:extraction}}{sec} Proofs}

\begin{proof}[Proof of \Cref{lem:baseline}]
Fix \(K\) and \(\delta\). Let \(\alpha\) be the minimum of one and the infimum of \(R(h)/(m(h)\ell)\), where the infimum ranges over all priors \(P\in\Delta(V)\), all equilibria under \(P\), and all nonterminal seller histories \(h\).\footnote{The class over which the infimum is taken is nonempty by \Cref{prop:existence}} Obviously, \(0\leq\alpha\leq1\).

For \(\varepsilon\in(0,1-\delta)\), set \(c=\delta\alpha+(1-\delta)-\varepsilon\in(0,1)\). Fix any prior, an equilibrium under that prior, and a nonterminal seller history \(h\), and write \(I=I(h)\), \(m=m(h)\), and \(W=v(I)\). Consider the seller strategy that differs from \(\sigma\) only at \(h\): at \(h\), she announces \(s=0\) and the menu \(M=\{(S,c\ell|S|)\colon S\subseteq I\}\). The menu contains the offer \((I,cm\ell)\). By selecting it, type \(v\) buys all remaining goods and obtains utility \(W-cm\ell\).

For each \(a=(S,c\ell|S|)\in M\) with \(\pi_a>0\), let \(h_a\) be the seller history at the next date, and write \(m'=m(h_a)\), \(W'=v(I(h_a))\), and \(u_v'=u_v(h_a)\). At \(h_a\), the seller's expected revenue satisfies \(R(h_a)\geq\alpha m'\ell\), by the definition of \(\alpha\) when \(m' > 0\) and by \(R(h_a)=0\) when \(m'=0\). Using this, plus plugging \(h_a\) for \(h\) in \eqref{eq:surplus} produces
\begin{equation}\label{eq:baseline-response-utility}
\E_{P_{h_a}}[u_v'] \leq \E_{P_{h_a}}[W']-\alpha m'\ell.
\end{equation}

For every \(v\) with \(\beta_v(a\mid h,M,0)>0\), type \(v\)'s utility from selecting \(a\) is \(W-W'-c\ell(m-m')+\delta u_v'\). Because he can instead select \((I,cm\ell)\), optimality implies
\[W-W'-c\ell(m-m')+\delta u_v' \geq W-cm\ell,\]
or, equivalently, \(W'-c\ell m'\leq\delta u_v'\). Multiplying this inequality by \(P_h(v)\beta_v(a\mid h,M,0)\) and summing over \(v\) and \(a\) delivers
\[\E[W']-c\ell\E[m'] \leq \delta\E[u_v'].\]
By \eqref{eq:bayes}, conditional on the buyer selecting any \(a\in M\) with \(\pi_a>0\), the distribution of \(v\) is \(P_{h_a}\). Using \eqref{eq:baseline-response-utility} for each such \(a\) and averaging over the buyer's selection yields
\[\E[u_v'] \leq \E[W']-\alpha\ell\E[m'].\]
Combining these inequalities yields
\[
(1-\delta)\E[W']
\leq
(c-\delta\alpha)\ell\E[m']
=
(1-\delta-\varepsilon)\ell\E[m'],
\]
and so \(\varepsilon\ell\E[m']\leq0\), since \(W'\geq m'\ell\). Hence, \(m'=0\) almost surely. Consequently, after the seller offers \(M\) at \(h\), the buyer selects \((I,cm\ell)\) almost surely, so the seller receives \(cm\ell\). \Conditionref{ii} of \Cref{def:equilibrium} thus implies \(R(h)\geq cm\ell\). Because the prior, equilibrium, and seller history were arbitrary, the infimum used to define \(\alpha\) is at least \(c\). Since \(c<1\), it follows that \(\alpha\geq c\). Letting \(\varepsilon\downarrow0\) yields \(\alpha\geq\delta\alpha+(1-\delta)\), so \(\alpha=1\).
\end{proof}

Recall that in  \Cref{prop:exponential}, we consider the game with initial inventory $[K]$ and an arbitrary prior $P\in\Delta(V)$, and recall also \(c_\lambda(P)\coloneqq
-\frac1\lambda\log\E_P[\mathrm{e}^{-\lambda w}]\) (with \(w(v)=v([K])/K\)). Before proving the proposition, I note the following lemma:

\begin{lemma}\label{lem:conditioning}
Fix an equilibrium and a seller history \(h\). Suppose that, after the seller chooses \((M,s)\), the buyer selects \(0\) with probability \(\varrho>0\), and let \(P'\coloneqq P_{h,M,s,0}\). Then
\begin{equation}\label{eq:conditioning}
c_\lambda(P')\geq c_\lambda(P_h)+\frac{\log\varrho}{\lambda}.
\end{equation}
\end{lemma}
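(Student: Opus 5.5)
The plan is to use Bayes' rule \eqref{eq:bayes} to write the posterior after rejection as a reweighting of \(P_h\) with bounded likelihood ratio. Then I bound the exponential moment that defines \(c_\lambda\) directly.

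First I express \(P'\) in terms of \(P_h\). Write \(\beta_v\coloneqq\beta_v(0\mid h,M,s)\in[0,1]\). Since \(\varrho=\pi_0=\sum_{\hat v}P_h(\hat v)\beta_{\hat v}>0\), \eqref{eq:bayes} gives
\[
P'(v)=\frac{P_h(v)\beta_v}{\varrho}\leq\frac{P_h(v)}{\varrho}\qquad\text{for every }v\in V.
\]
The statement refers to \(c_\lambda\) with \(w(v)=v([K])/K\), the full vector of valuations. Note that \(P'\) and \(P_h\) are both distributions over \(V\), so \(w\) is the same function of \(v\) under either, regardless of the current inventory. No further care is needed here.

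Next, since \(\mathrm e^{-\lambda w}>0\), I would bound
\[
\E_{P'}[\mathrm e^{-\lambda w}]=\sum_v P'(v)\mathrm e^{-\lambda w(v)}\leq\frac1\varrho\sum_v P_h(v)\mathrm e^{-\lambda w(v)}=\frac1\varrho\E_{P_h}[\mathrm e^{-\lambda w}].
\]
Taking logarithms and multiplying by \(-1/\lambda<0\) reverses the inequality:
\[
c_\lambda(P')\geq-\frac1\lambda\log\E_{P_h}[\mathrm e^{-\lambda w}]+\frac1\lambda\log\varrho=c_\lambda(P_h)+\frac{\log\varrho}{\lambda},
\]
which is \eqref{eq:conditioning}.

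There is no real obstacle. The only points to check are that \(\varrho>0\), so that Bayes' rule pins down \(P'\), and that the sign flips correctly when the logarithm is multiplied by \(-1/\lambda\). The key step is the likelihood-ratio bound \(P'\leq P_h/\varrho\), which holds because response probabilities are at most one. It uses nothing about equilibrium beyond \Conditionref{iv}.
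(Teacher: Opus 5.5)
Your proposal is correct and matches the paper's proof: both use Bayes' rule \eqref{eq:bayes} to get \(P'(v)\leq P_h(v)/\varrho\), then multiply by \(\mathrm e^{-\lambda w(v)}\), sum over \(v\), and take logarithms. Your handling of the sign flip under \(-1/\lambda\) is also right.
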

\begin{proof}
By \eqref{eq:bayes},
\[
P'(v)
= \frac{P_h(v)\beta_v(0\mid h,M,s)}{\varrho}
\leq
\frac{P_h(v)}{\varrho}.
\]
Multiply by \(\mathrm{e}^{-\lambda w(v)}\), sum over \(v\), and take logarithms.
\end{proof}

\begin{proof}[Proof of \Cref{prop:exponential}]
Let \(D\coloneqq \sup\left\{c_\lambda(P)-R/K\right\}\), where the supremum ranges over all \(P\in\Delta(V)\) and all equilibria under \(P\). Note that \(0 \leq D \leq b\).\footnote{By \eqref{eq:uniform-values}, \(R\geq0\), while \(c_\lambda(P)\leq\E_P[w]\leq b\), so \(D\leq b\). Under a degenerate prior, \(c_\lambda(P)=\E_P[w]\), while \eqref{eq:surplus} and \(u_v\geq0\) imply \(R/K\leq\E_P[w]\). Hence, \(D\geq0\).}

Fix \(\varepsilon>0\), and choose a prior \(P\) and an equilibrium under \(P\) such that
\begin{equation}\label{eq:near-supremum}
r\coloneqq R/K=c-D+\xi,
\qquad
c\coloneqq c_\lambda(P),
\qquad \text{\&} \qquad
0\leq\xi\leq\varepsilon.
\end{equation}
For \(x>0\), consider the seller strategy that differs from her equilibrium strategy only at the initial seller history: she chooses \(s=0\) and the menu \(M=\{0,([K],Kp)\}\) with \(p\coloneqq c-\delta D-x\). Let \(\varrho\) denote the probability that the buyer selects \(0\).
\begin{claim}\label{claim:exponential-rejection}
If \(p>0\), then \(\varrho\leq\mathrm{e}^{-\lambda x}\).
\end{claim}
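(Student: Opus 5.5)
Suppose \(\varrho>0\); otherwise there is nothing to prove. Let \(P'\coloneqq P_{h_0,M,0,0}\) be the seller's posterior after rejection, where \(h_0\) is the initial history. The plan is to show \(c_\lambda(P')\leq c-x\) and then apply \Cref{lem:conditioning}, which gives \(c_\lambda(P')\geq c+\log\varrho/\lambda\). Combining the two yields \(\log\varrho/\lambda\leq -x\), that is, \(\varrho\leq\mathrm e^{-\lambda x}\).

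The inequality \(c_\lambda(P')\leq c-x\) will come from the buyer's incentive to reject. After rejection the inventory is still \([K]\) and the posterior is \(P'\). Continuation play from the next date is the prescribed equilibrium play, so it is an equilibrium of the game with prior \(P'\): every history after rejection is a history of the original assessment, and \Cref{def:equilibrium} is imposed at all of them. By the definition of \(D\), continuation revenue therefore satisfies \(R'\geq K(c_\lambda(P')-D)\).

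Every type \(v\) that rejects with positive probability must weakly prefer rejecting to accepting, so \(\delta u_v'\geq v([K])-Kp\), where \(u_v'\) is his continuation utility at the post-rejection history. I will multiply by \(P_{h_0}(v)\beta_v(0\mid\cdot)\), sum over \(v\), and divide by \(\varrho\). By \eqref{eq:bayes} this gives \(\delta\E_{P'}[u_v']\geq \E_{P'}[v([K])]-Kp\). Applying \eqref{eq:surplus} at the post-rejection history gives \(\E_{P'}[u_v']\leq \E_{P'}[v([K])]-R'\). Hence
\[
Kp\geq (1-\delta)\E_{P'}[v([K])]+\delta R'\geq (1-\delta)K\,c_\lambda(P')+\delta K\bigl(c_\lambda(P')-D\bigr)=K\bigl(c_\lambda(P')-\delta D\bigr),
\]
using Jensen's inequality \(c_\lambda(P')\leq\E_{P'}[w]\). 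Substituting \(p=c-\delta D-x\) yields \(c_\lambda(P')\leq c-x\), as required.

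The main obstacle is making sure the continuation after rejection legitimately counts as an equilibrium under prior \(P'\), so that it falls inside the class over which \(D\) is defined as a supremum. This needs some care because \(P'\) arises after a deviation. I would handle it by re-rooting the assessment at the rejection history. The strategies, posteriors and optimality conditions all restrict to the subgame, and the initial posterior of that subgame is \(P'\). Its finiteness follows from \eqref{eq:uniform-values}. The hypothesis \(p>0\) plays only a minor role; it keeps the offer a sensible deviation and is not otherwise needed for the inequality chain.
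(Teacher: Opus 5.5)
Your proof is correct and follows essentially the same route as the paper: you combine the rejecting types' incentive constraint, the surplus bound \eqref{eq:surplus}, and the definition of \(D\) applied to the re-rooted post-rejection equilibrium to get \(p\geq c_\lambda(P')-\delta D\), hence \(c_\lambda(P')\leq c-x\), and then \Cref{lem:conditioning} gives \(\varrho\leq\mathrm{e}^{-\lambda x}\). You are also right that \(p>0\) plays no role in this step, and the paper's argument does not use it here either.
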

\begin{proof}
Suppose \(p>0\) and \(\varrho>0\). Let \(h'\) be the seller history after the buyer selects \(0\), and write
\[
P'\coloneqq P_{h'},
\qquad
r'\coloneqq R(h')/K,
\qquad \text{\&} \qquad
u'\coloneqq\E_{P'}[u_v(h')]/K.
\]
Rejection leaves all \(K\) goods unsold. Furthermore, the strategies and posteriors prescribed at \(h'\) and at all histories extending \(h'\) form an equilibrium of the \(K\)-good game under prior \(P'\). Hence, by the definition of \(D\) and \eqref{eq:surplus},
\begin{equation}\label{eq:rejection-continuation-bounds}
r'\geq c_\lambda(P')-D
\qquad \text{\&} \qquad
r'+u'\leq\E_{P'}[w].
\end{equation}

For every type \(v\) that selects \(0\) with positive probability, \Conditionref{iii} of \Cref{def:equilibrium} implies \(w(v)-p\leq\delta u_v(h')/K\), so \(\E_{P'}[w]-p\leq\delta u'\), and so
\[
p\geq(1-\delta)\E_{P'}[w]+\delta r'
\geq c_\lambda(P')-\delta D
\geq c+\frac{\log\varrho}{\lambda}-\delta D,
\]
where the second inequality uses \eqref{eq:rejection-continuation-bounds} and \(c_\lambda(P')\leq\E_{P'}[w]\); and the third uses \eqref{eq:conditioning}. Since \(p=c-\delta D-x\), \(\varrho\leq\mathrm{e}^{-\lambda x}\).
\end{proof}

\begin{claim}\label{claim:exponential-shortfall}
\(D\leq b(\log t+2)/t\).
\end{claim}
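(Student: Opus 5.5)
The plan is to compare the equilibrium revenue $R=Kr$ with the revenue from the one-shot grand-bundle deviation constructed above, namely $M=\{0,([K],Kp)\}$ with $p=c-\delta D-x$. Then I will use \Cref{claim:exponential-rejection} to control the rejection probability $\varrho$, and \Cref{lem:conditioning} together with the definition of $D$ to control continuation revenue after rejection. At the end I choose $x=(\log t)/\lambda$ and let $\varepsilon\downarrow0$.

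\emph{Case $p\le 0$.} Here no deviation is needed. From \eqref{eq:near-supremum} and $r\ge0$ (by \eqref{eq:uniform-values}) we get $D=c-r+\xi\le c+\xi$. Since $c\le \delta D+x$ in this case, it follows that $(1-\delta)D\le x+\xi$. Hence $D\le (x+\xi)/(1-\delta)$, which is dominated by the bound obtained in the other case.

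\emph{Case $p>0$.} Under the deviation the seller receives $Kp$ with probability $1-\varrho$. With probability $\varrho$ she gets $\delta R(h')=\delta K r'$, where $h'$ is the history after rejection. If $\varrho>0$, the definition of $D$ (the continuation is an equilibrium under $P'$) and \eqref{eq:conditioning} give
\[
r'\ge c_\lambda(P')-D\ge c+\frac{\log\varrho}{\lambda}-D .
\]
If $\varrho=0$, the term simply vanishes. \Conditionref{ii} then yields
\[
c-D+\xi=r\ge(1-\varrho)(c-\delta D-x)+\varrho\delta\Bigl(c-D+\frac{\log\varrho}{\lambda}\Bigr).
\]
Expanding the right-hand side, the $\delta D$ terms combine to $-\delta D$, and $c$ appears as $c-\varrho(1-\delta)c$. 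This gives
\[
(1-\delta)D\le x+\xi+\varrho(1-\delta)c-\varrho x-\frac{\delta}{\lambda}\varrho\log\varrho .
\]
I will drop $-\varrho x\le0$ and use $c\le b$, $-\varrho\log\varrho\le 1/\mathrm e$, and $\varrho\le \mathrm e^{-\lambda x}$ from \Cref{claim:exponential-rejection}. This produces the displayed bound from the text, up to $\xi/(1-\delta)$:
\[
D\le \frac{x}{1-\delta}+b\mathrm e^{-\lambda x}+\frac{\delta}{\mathrm e\lambda(1-\delta)}+\frac{\xi}{1-\delta}.
\]

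\emph{Choice of $x$.} Set $x=(\log t)/\lambda$, which is positive when $t>1$; if $t=1$, take $x\downarrow0$. Using $\lambda(1-\delta)=t/b$, the three terms become $b\log t/t$, $b/t$, and $\delta b/(\mathrm e t)\le b/t$. Their sum is at most $b(\log t+2)/t$. Since $\varepsilon$ was arbitrary and $\xi\le\varepsilon$, letting $\varepsilon\downarrow0$ removes $\xi/(1-\delta)$ and proves the claim. In the case $p\le0$, the bound $x/(1-\delta)=b\log t/t$ is already smaller.

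The main obstacle is the algebra in the $p>0$ case. The key observation is that the coefficient of $D$ collapses to exactly $1-\delta$. This is what forces the choice $p=c-\delta D-x$ with the $\delta D$ term built in. The $\varrho\log\varrho$ entropy-type term then has to be absorbed uniformly via $-\varrho\log\varrho\le 1/\mathrm e$. A secondary point is that the lower bound on $r'$ may be negative, but this is harmless because it is used only as a lower bound.
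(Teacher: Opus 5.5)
Your proposal is correct and follows the paper's proof essentially step for step. You use the same comparison with the one-shot grand-bundle deviation, the same lower bound on $r'$ from the definition of $D$ and \Cref{lem:conditioning}, the same cases ($p\le0$, and $p>0$, where you absorb $\varrho=0$ via $0\log0=0$ while the paper treats it separately), and the same choice $x=(\log t)/\lambda$, with $x\downarrow0$ at $t=1$. Your algebra reproduces the paper's intermediate inequality $(1-\delta)D\le\xi+(1-\varrho)x+(1-\delta)\varrho c+\frac{\delta}{\lambda}\varrho\log(1/\varrho)$ exactly.
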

\begin{proof}
Suppose first that \(p>0\) and \(\varrho>0\). \Conditionref{ii} of \Cref{def:equilibrium} implies \(r\geq(1-\varrho)p+\delta\varrho r'\). Substituting \(r=c-D+\xi\) from \eqref{eq:near-supremum}, \(p=c-\delta D-x\), and \(r'\geq c+\log(\varrho)/\lambda-D\) from \eqref{eq:rejection-continuation-bounds} and \eqref{eq:conditioning} into this inequality, we obtain
\[
(1-\delta)D
\leq
\xi+(1-\varrho)x+(1-\delta)\varrho c
+\frac{\delta}{\lambda}\varrho\log(1/\varrho).
\]
Applying \(c\leq b\), \((1-\varrho)x\leq x\), \(\varrho\leq\mathrm{e}^{-\lambda x}\), and \(\varrho\log(1/\varrho)\leq1/\mathrm{e}\), and then letting \(\varepsilon\downarrow0\), we obtain
\begin{equation}\label{eq:shortfall}
D\leq\frac{x}{1-\delta}+b\mathrm{e}^{-\lambda x}
+\frac{\delta}{\mathrm{e}\lambda(1-\delta)}.
\end{equation}
If \(p>0\) and \(\varrho=0\), \Conditionref{ii} implies \(r\geq p\), so \((1-\delta)D\leq x+\xi\), and hence \eqref{eq:shortfall} after \(\varepsilon\downarrow0\). If \(p\leq0\), \(r\geq0\) and \eqref{eq:near-supremum} imply \(D\leq c+\xi\leq\delta D+x+\xi\), which again implies \eqref{eq:shortfall} after taking \(\varepsilon\downarrow0\).

For \(t>1\), plug in \(x=(\log t)/\lambda\) (\(t=\lambda(1-\delta)b\)) into \eqref{eq:shortfall}:
\[
D
\leq
b\frac{\log t}{t}+\frac bt+\frac{\delta b}{et}
\leq
b\frac{\log t+2}{t}.
\]
For \(t=1\), the same inequality follows by letting \(x\downarrow0\) in \eqref{eq:shortfall}.
\end{proof}

Finally, deduce the proposition from \Cref{claim:exponential-shortfall} and the definition of \(D\).
\end{proof}

\begin{proof}[Proof of \Cref{prop:extraction-bound}]
For \(X\in[0,b]\), let \(\psi(s)\coloneqq \log\E\left[\mathrm{e}^{s(X-\E X)}\right]\). The second derivative \(\psi''(s)\) is the variance of \(X\) under the probability distribution that weights each realization by \(\mathrm{e}^{s(X-\E X)}\). Because \(X\in[0,b]\), this variance is at most \(b^2/4\). Since \(\psi(0)=\psi'(0)=0\), Taylor's theorem implies \(\psi(s)\leq\frac{s^2b^2}{8}\).

Apply this inequality to each \(v_i\) with \(s=-\lambda/K\). Since the \(v_i\)s are independent and have mean \(\mu\),
\begin{equation}\label{eq:mgf}
\log\E\left[\mathrm{e}^{-\lambda w}\right] =
-\lambda\mu
+\sum_{i=1}^K
\log\E\left[
\mathrm{e}^{-(\lambda/K)(v_i-\mu)}
\right] \leq
-\lambda\mu+\frac{\lambda^2b^2}{8K}.
\end{equation}
By \eqref{eq:exponential-statistic} and \eqref{eq:mgf},
\[
c_\lambda(P)
=
-\frac1\lambda
\log\E\left[\mathrm{e}^{-\lambda w}\right]
\geq
\mu-\frac{\lambda b^2}{8K}.
\]

Fix \(t\geq1\) and set \(\lambda=\frac{t}{(1-\delta)b}\), i.e., \(t=\lambda(1-\delta)b\), so \Cref{prop:exponential} applies and implies
\[\frac{R_{K,\delta}}K \geq c_\lambda(P)-b\frac{\log t+2}{t} \geq \mu-b\left\{
\frac{t}{8K(1-\delta)}+\frac{\log t+2}{t}
\right\}.
\]
This holds for every \(t\geq1\), so taking the infimum over \(t\) establishes the upper bound in \eqref{eq:main-extraction}. At the initial seller history, \eqref{eq:surplus} and \(u_v\geq0\) imply \(R_{K,\delta}/K\leq\mu\), which establishes its lower bound.\end{proof}

\section{\texorpdfstring{\secref{sec:revision}}{sec} Proofs}
\begin{proof}[Proof of \Cref{lem:F-bounds}]
In \eqref{eq:imitation-general}, \(u_L-u_H\) has right-hand side \(\sum_{i\in I}(\ell-b)^+=0\), so \(u_L\leq u_H\); and \(u_H-u_L\) has right-hand side \(\sum_{i\in I}(b-\ell)=m(b-\ell)\), so \(u_H-u_L\leq m(b-\ell)\). Hence, \(0\leq d\leq m(b-\ell)\). For any \(v\in V\), plug \(u_H-u_v\) into \eqref{eq:imitation-general} to get \(u_H-u_v\leq\sum_{i\in I}(b-v_i)\); equivalently,
\begin{equation}\label{eq:high-type-imitation}
u_v\geq u_H-\sum_{i\in I}(b-v_i).
\end{equation}
From \(h\) onward, each good in \(I\) can be purchased at most once, so type \(v\)'s expected discounted value from the goods he purchases is at most \(v(I)\). Because his utility equals this value minus his expected discounted payments, those payments are at most \(v(I)-u_v\). By \eqref{eq:high-type-imitation}, \(v(I)-u_v \leq mb-u_H = m\ell-u_L+d\) and averaging over \(v\sim P_h\) implies \(R-m\ell+u_L\leq d\).

By \Cref{lem:baseline}, \(R-m\ell\geq0\), and the buyer can reject every offer, so \(u_L\geq0\). Therefore, \eqref{eq:d-and-F} implies \(F\geq\frac{\ell}{b-\ell}d\) and \(F\geq R-m\ell\). Moreover,
\[
F
=
(R-m\ell+u_L)+\frac{\ell}{b-\ell}d
\leq
d+\frac{\ell}{b-\ell}d
=
\frac{b}{b-\ell}d.
\]
Together with \(0\leq d\leq m(b-\ell)\), these inequalities establish \eqref{eq:F-bounds} at \(h\).

Now fix \((M,s)\) satisfying \(R(M,s)=R\). The all-low and all-high types can again follow one another's strategies from \((h,M,s)\), so \(0\leq d(M,s)\leq m(b-\ell)\). For any type \(v\), following the all-high type's strategy from \((h,M,s)\) implies
\[
u_v(h,M,s)
\geq
u_H(M,s)-\sum_{i\in I}(b-v_i).
\]
Consequently, type \(v\)'s expected discounted payments from \((h,M,s)\) are at most
\[
v(I)-u_v(h,M,s)
\leq
m\ell-u_L(M,s)+d(M,s).
\]
The type distribution at \((h,M,s)\) is \(P_h\). Averaging the last inequality over \(P_h\) implies
\[
R(M,s)-m\ell+u_L(M,s)\leq d(M,s).
\]
Moreover, \(R(M,s)=R\geq m\ell\) by \Cref{lem:baseline}, and the buyer can select \(0\) and reject every later offer, so \(u_L(M,s)\geq0\). Since \(R(M,s)-m\ell\geq0\), \(u_L(M,s)\geq0\), and \(R(M,s)-m\ell+u_L(M,s)\leq d(M,s)\), the definition of \(F(M,s)\) implies
\[
\frac{\ell}{b-\ell}d(M,s)\leq F(M,s)
\leq\frac{b}{b-\ell}d(M,s)
\qquad \text{\&} \qquad
R(M,s)-m\ell\leq F(M,s).
\]
Together with \(0\leq d(M,s)\leq m(b-\ell)\), these are the stated bounds.

For any \((M,s)\), consider the seller strategy that chooses \((M,s)\) at \(h\) and otherwise agrees with \(\sigma\). \Conditionref{ii} of \Cref{def:equilibrium} implies \(R(M,s)\leq R\). On the other hand, the definition of \(R\) implies \(R=\E_{M,s}[R(M,s)]\), where \((M,s)\) is drawn from \(\sigma(\cdot\mid h)\). Hence, \(R(M,s)=R\) almost surely. Similarly, \(u_L=\E_{M,s}[u_L(M,s)]\) and \(u_H=\E_{M,s}[u_H(M,s)]\). Consequently,
\[
\E_{M,s}[F(M,s)]
=
\E_{M,s}[R(M,s)]
+
\frac{
b\E_{M,s}[u_L(M,s)]
-\ell\E_{M,s}[u_H(M,s)]
}{b-\ell}
=
F. \qedhere
\]\end{proof}

\begin{proof}[Proof of \Cref{lem:posterior-recursion}]
Write $S=S_m(\delta,\eta_0)$ and let $B$ denote the numerator in \eqref{eq:posterior-recursion}. Fix an equilibrium and a seller history with inventory $I$ of $m$ goods. Suppose the seller assigns probability $\eta\in[\eta_0,\eta_1)$ to $v_i=\ell$ for every $i\in I$.

After a response that leaves $m'$ goods, let $R'$ be the seller's continuation revenue and let $u_L'$ and $u_H'$ be the continuation utilities of types who value every remaining good at $\ell$ and at $b$, respectively. Measure these payoffs from the next date, and define $d'$ and $F'$ by \eqref{eq:d-and-F}, using $m'$ in place of $m$. If the seller assigns posterior probability at least $\kappa$ to $v_i=\ell$ for every $i\in I$, she also assigns probability at least $\kappa$ to all $m'$ remaining values being $\ell$, since every remaining good belongs to $I$. Thus, $F'\leq S_{m'}(\delta,\kappa)$.

At a seller history $h$ with $I(h)=I$, call an announcement $(M,s)$ satisfying $R(h,M,s)=R(h)$ \textit{controlled} if the buyer chooses some offer $a\neq0$ with positive probability and, after observing that response, the seller assigns probability at least $\kappa$ to $v_i=\ell$ for every $i\in I$, or if the buyer rejects with positive probability and, after observing rejection, the seller assigns probability at least $\eta_1$ to $v_i=\ell$ for every $i\in I$.

\begin{claim}\label{claim:revision-controlled}
Every controlled announcement satisfies $F(M,s)\leq B$.
\end{claim}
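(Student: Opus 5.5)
The plan is to bound $F(M,s)$ through $d(M,s)$. Since $R(M,s)=R$ for a controlled announcement, \Cref{lem:F-bounds} gives $F(M,s)\leq\frac{b}{b-\ell}d(M,s)$. It then suffices to show
\[
d(M,s)\leq(1-\delta)(b-\ell)m'+\delta d',
\qquad
d'\leq\frac{b-\ell}{\ell}F'\leq\frac{b-\ell}{\ell}X,
\]
where $X\coloneqq\max\{S_m(\delta,\eta_1),\max_{j<m}S_j(\delta,\kappa)\}$. Combining these yields
\[
F(M,s)\leq b(1-\delta)m'+\delta\frac b\ell X\leq B,
\]
using $m'\leq m$ and $\delta\leq1$.

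\textbf{Step 1 (choice of response).} Let $a$ be the response named in the definition of a controlled announcement: either a purchase $a=(S,p)\neq0$ whose posterior puts probability at least $\kappa$ on all-low, or rejection $a=0$ whose posterior puts probability at least $\eta_1$ on all-low. The current posterior gives all-low (on $I$) probability $\eta\geq\eta_0>0$. The posterior after $a$ gives all-low probability at least $\kappa>0$. Hence \eqref{eq:bayes} forces $\beta_v(a\mid h,M,s)>0$ for some all-low type $v$.

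\textbf{Step 2 (indifference for all-low, imitation for all-high).} By \Conditionref{iii}, the all-low type is optimizing, so
\[
u_L(M,s)=\ell|S|-p+\delta u_L'.
\]
Here $u_L'$ is the continuation utility at $h_a$ of a type valuing the remaining $m'=m-|S|$ goods at $\ell$; every remaining good lies in $I$, so this is the right object. The all-high type can also choose $a$ and then behave optimally, so
\[
u_H(M,s)\geq b|S|-p+\delta u_H'.
\]
Subtracting gives $u_H(M,s)-u_L(M,s)\geq(b-\ell)|S|+\delta(u_H'-u_L')$. Substituting $u_H'-u_L'=m'(b-\ell)-d'$ then yields $d(M,s)\leq(1-\delta)(b-\ell)m'+\delta d'$. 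In the rejection case this is the same computation with $S=\emptyset$, $p=0$ and $m'=m$.

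\textbf{Step 3 (closing the bound on $d'$).} The continuation from $h_a$ onward is an equilibrium, and \Cref{lem:F-bounds} applied at $h_a$ gives $d'\leq\frac{b-\ell}{\ell}F'$. In the purchase case, $m'<m$ and the posterior on all-low of the remaining goods is at least $\kappa$, as noted before the claim. So $F'\leq S_{m'}(\delta,\kappa)\leq X$. If $m'=0$, then $d'=0$ directly, consistent with $S_0=0$. In the rejection case, $m'=m$ and the posterior is at least $\eta_1$, so $F'\leq S_m(\delta,\eta_1)\leq X$.

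The main obstacle is Step 1 together with the bookkeeping in Step 2. One must make sure that $a$ really is in the all-low type's support, so that the equality in Step 2 holds rather than only an inequality. One must also check that $u_L'$ and $u_H'$ match the definitions of $d'$ and $F'$ at $h_a$, measured from the next date, so that \Cref{lem:F-bounds} and the definition of $S_{m'}$ apply there.
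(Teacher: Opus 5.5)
Your proposal is correct and follows the paper's proof essentially step for step. You use Bayes' rule to put an all-low type in the support of the controlled response, and combine his indifference with the all-high type's imitation inequality to get $d(M,s)\leq(1-\delta)m'(b-\ell)+\delta d'$. You then close with the two bounds from \Cref{lem:F-bounds} and the definition of $S_{m'}(\delta,\kappa)$ or $S_m(\delta,\eta_1)$. In Step 1, the fact doing the work is $\pi_a>0$ (supplied by the definition of a controlled announcement), because that is what lets \eqref{eq:bayes} turn posterior mass $\kappa$ (or $\eta_1$) into a positive choice probability; $\eta\geq\eta_0>0$ is not needed.
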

\begin{proof}
Suppose first that the buyer chooses an offer $a\neq0$ with positive probability and, after observing this response, the seller assigns probability at least $\kappa$ to $v_i=\ell$ for every $i\in I$. By \eqref{eq:bayes}, at least one type that values these goods at $\ell$ must choose this response with positive probability. All such types value the remaining goods in the same way and can follow one another's strategies, so they have the same continuation utility $u_L(M,s)$. If the offer contains $k$ goods at price $p$, write $m'=m-k$. Choosing this offer and then following his equilibrium strategy gives that type utility $u_L(M,s)$. The all-high type can make the same purchase and follow his own equilibrium strategy thereafter. Thus,
\[
u_L(M,s)=k\ell-p+\delta u_L',
\qquad \text{\&} \qquad
u_H(M,s)\geq kb-p+\delta u_H'.
\]
Substituting these comparisons into the definitions of $d(M,s)$ and $d'$ yields
\begin{equation}\label{eq:deficit-response}
d(M,s)\leq(1-\delta)m'(b-\ell)+\delta d'.
\end{equation}
Every valuation vector with $v_i=\ell$ for every $i\in I$ also assigns value $\ell$ to each of the $m'$ unsold goods. After this response, the seller's posterior probability that all remaining values equal $\ell$ is, therefore, at least $\kappa$. By the definition of $S_{m'}(\delta,\kappa)$, we have $F'\leq S_{m'}(\delta,\kappa)$.

Multiply \eqref{eq:deficit-response} by $b/(b-\ell)$. \Cref{lem:F-bounds} establishes $F(M,s)\leq bd(M,s)/(b-\ell)$ and $d'\leq(b-\ell)F'/\ell$. Consequently,
\[
F(M,s)
\leq bm'(1-\delta)+\frac{\delta b}{\ell}F'
\leq bm'(1-\delta)+\frac{\delta b}{\ell}S_{m'}(\delta,\kappa)
\leq B,
\]
where the last inequality uses $m'<m$, $\delta<1$, and the definition of $B$.

If instead the buyer rejects with positive probability and the seller then assigns probability at least $\eta_1$ to $v_i=\ell$ for every $i\in I$, \eqref{eq:bayes} again implies that some type with these values rejects with positive probability. Rejection involves no payment and leaves all $m$ goods unsold, so the calculation leading to \eqref{eq:deficit-response} applies with $k=p=0$ and $m'=m$. Since $F'\leq S_m(\delta,\eta_1)$, \eqref{eq:deficit-response} and \Cref{lem:F-bounds} imply
\[F(M,s)\leq\frac{b}{b-\ell}d(M,s) \leq bm(1-\delta)+\frac{\delta b}{b-\ell}d' \leq bm(1-\delta)+\frac{\delta b}{\ell}S_m(\delta,\eta_1)
\leq B.\qedhere\]
\end{proof}

For an announcement $(M,s)$, write $\varrho$ for the probability of rejection and $\eta'$ for the seller's posterior probability that $v_i=\ell$ for every $i\in I$ after rejection. In the next two claims, evaluate $R',u_L',u_H',d'$, and $F'$ at the seller history after rejection.

\begin{claim}\label{claim:revision-posterior-increase}
At any announcement $(M,s)$ with $R(M,s)=R$ that is not controlled, $\varrho>0$, $\eta'<\eta_1$, and
\begin{equation}\label{eq:posterior-increase}
\eta\leq\varrho\eta'+(1-\varrho)\kappa,
\qquad \text{\&} \qquad
\eta'\geq\kappa+\frac{\eta-\kappa}{\varrho}\geq\eta.
\end{equation}
At least one type that values all remaining goods at $\ell$ rejects with positive probability.
\end{claim}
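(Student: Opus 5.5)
The claim is a consequence of Bayes' rule \eqref{eq:bayes} applied to the event \(E\coloneqq\{v_i=\ell \text{ for every } i\in I\}\), which the seller assigns prior probability \(\eta\in[\eta_0,\eta_1)\) at \(h\). My plan is to decompose \(\eta\) by the buyer's response and then read off each conclusion from that decomposition.

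First, for each \(a\in M\) with \(\pi_a>0\), summing \eqref{eq:bayes} over \(v\in E\) gives \(\sum_{v\in E}P_h(v)\beta_v(a\mid h,M,s)=\pi_a P_{h,M,s,a}(E)\). For \(a\) with \(\pi_a=0\), the left side vanishes. Summing over \(a\in M\) and using \(\sum_a\beta_v(a\mid h,M,s)=1\) yields the martingale identity
\[
\eta=\sum_{a\in M:\,\pi_a>0}\pi_a P_{h,M,s,a}(E).
\]
Because \((M,s)\) satisfies \(R(M,s)=R\) and is not controlled, every \(a\neq0\) with \(\pi_a>0\) has \(P_{h,M,s,a}(E)<\kappa\). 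Moreover, if \(\varrho>0\), then \(\eta'<\eta_1\).

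Second, I show \(\varrho>0\). If \(\varrho=0\), the identity gives \(\eta<\kappa\sum_{a\neq0}\pi_a=\kappa\). This contradicts \(\eta\geq\eta_0>\kappa\). (The sum is nonempty and the inequality is strict because \(\sum_a\pi_a=1\).) So \(\varrho>0\), and non-control then gives \(\eta'<\eta_1\).

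Third, with \(\varrho>0\), I split the identity into the rejection term \(\varrho\eta'\) and the purchase terms. The purchase terms are bounded by \((1-\varrho)\kappa\), with strict inequality if \(1-\varrho>0\), and zero if \(\varrho=1\). This gives \(\eta\leq\varrho\eta'+(1-\varrho)\kappa\). Rearranging, \(\eta'\geq\kappa+(\eta-\kappa)/\varrho\), and since \(\eta-\kappa>0\) and \(\varrho\leq1\), this is at least \(\kappa+(\eta-\kappa)=\eta\), which establishes \eqref{eq:posterior-increase}.

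Finally, \(\eta'\geq\eta>0\), so \(\varrho\eta'=\sum_{v\in E}P_h(v)\beta_v(0\mid h,M,s)>0\). Hence some \(v\in E\) rejects with positive probability.

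The only delicate point is bookkeeping: I must handle offers with \(\pi_a=0\), where posteriors are unrestricted, and confirm that the identity involves only offers with \(\pi_a>0\). Summing \eqref{eq:bayes} directly takes care of this. I don't expect a genuine obstacle.
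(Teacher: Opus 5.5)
Your proof is correct and follows the paper's argument step for step. Like the paper, you use Bayes' rule to write \(\eta\) as the \(\pi_a\)-weighted average of posteriors, rule out \(\varrho=0\) because purchase posteriors below \(\kappa<\eta\) cannot average to \(\eta\), bound the purchase terms by \((1-\varrho)\kappa\) and rearrange, and conclude from \(\varrho\eta'>0\) that some all-low type rejects; your explicit handling of responses with \(\pi_a=0\) just spells out what the paper's one-line ``averaging'' step leaves implicit.
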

\begin{proof}
Since the announcement is not controlled, every purchase response with positive probability leaves the seller with posterior probability below $\kappa$ that $v_i=\ell$ for every $i\in I$. By \eqref{eq:bayes}, averaging the posterior probability of these valuations over the buyer's responses yields $\eta$. If $\varrho=0$, every response is a purchase, so this average is at most $\kappa$, contrary to $\eta\geq\eta_0>\kappa$. Thus, $\varrho>0$.

Rejection has probability $\varrho$, and purchases have total probability $1-\varrho$. Bounding the posterior probability after each purchase by $\kappa$ implies $\eta\leq\varrho\eta'+(1-\varrho)\kappa$. Solving for $\eta'$ and using $\eta>\kappa$ and $\varrho\leq1$ establishes the second part of \eqref{eq:posterior-increase}. Since the announcement is not controlled, $\eta'<\eta_1$. Finally, the joint probability of rejection and $v_i=\ell$ for every $i\in I$ is $\varrho\eta'>0$. Thus, at least one type with these values rejects with positive probability.
\end{proof}

\begin{claim}\label{claim:revision-one-menu}
At any announcement $(M,s)$ with $R(M,s)=R$ that is not controlled,
\begin{equation}\label{eq:one-menu-F}
F(M,s)\leq\delta\varrho F'
+(1-\varrho)\frac b{b-\ell}d(M,s)
\leq\delta\varrho F'+(1-\varrho)\frac b\ell S.
\end{equation}
\end{claim}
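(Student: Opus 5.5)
The plan is to split $F(M,s)$ according to whether the buyer rejects or purchases. On the rejection branch I will use the fact, from \Cref{claim:revision-posterior-increase}, that an all-low type rejects with positive probability. On the purchase branches I will use the high-type imitation bound from the proof of \Cref{lem:F-bounds}. Throughout I use the second form of $F$ in \eqref{eq:d-and-F}, namely
\[
F(M,s)=\bigl(R(M,s)-m\ell\bigr)+u_L(M,s)+\tfrac{\ell}{b-\ell}d(M,s),
\]
and $F'=(R'-m\ell)+u_L'+\tfrac{\ell}{b-\ell}d'$. Rejection leaves $m'=m$ goods, so $F'$ is defined with $m$.

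\emph{Step 1 (rejection branch).} By \Cref{claim:revision-posterior-increase}, some type valuing all of $I$ at $\ell$ rejects with positive probability. Since rejection is optimal for that type, $u_L(M,s)=\delta u_L'$. The all-high type can also reject, so $u_H(M,s)\geq\delta u_H'$. Subtracting gives the analogue of \eqref{eq:deficit-response} with $k=p=0$:
\[
d(M,s)\leq(1-\delta)m(b-\ell)+\delta d'.
\]

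\emph{Step 2 (revenue decomposition).} Write $R(M,s)=\varrho\,\delta R'+(1-\varrho)X$, where $X$ is the expected discounted payment from the current date onward, conditional on a non-rejection response. I bound $X$ type by type. Fix a type $v$ and a purchase $a$ in the support of $\beta_v$. Then $a$ followed by $v$'s equilibrium play yields exactly $u_v(M,s)$. The discounted value of the goods bought along that path is at most $v(I)$, so the payments along that path are at most $v(I)-u_v(M,s)$. By high-type imitation from $(h,M,s)$, as in the proof of \Cref{lem:F-bounds}, this is at most
\[
mb-u_H(M,s)=m\ell-u_L(M,s)+d(M,s).
\]
Averaging over the purchase responses gives $X\leq m\ell-u_L(M,s)+d(M,s)$. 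This conditional form, rather than the unconditional average used in \Cref{lem:F-bounds}, is the step needing the most care. It relies on optimality meaning that every action in a type's support attains $u_v(M,s)$, which holds by \Conditionref{iii}.

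\emph{Step 3 (assembly).} Write $mℓ$, $u_L(M,s)$ and $\tfrac{\ell}{b-\ell}d(M,s)$ each as a $\varrho$ plus $(1-\varrho)$ convex combination. This gives
\[
F(M,s)=\varrho\Bigl[\delta R'-m\ell+u_L(M,s)+\tfrac{\ell}{b-\ell}d(M,s)\Bigr]+(1-\varrho)\Bigl[X-m\ell+u_L(M,s)+\tfrac{\ell}{b-\ell}d(M,s)\Bigr].
\]
For the second bracket, Step 2 bounds it by $d(M,s)+\tfrac{\ell}{b-\ell}d(M,s)=\tfrac{b}{b-\ell}d(M,s)$.

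For the first bracket, substitute $u_L(M,s)=\delta u_L'$ and the Step 1 bound $\tfrac{\ell}{b-\ell}d(M,s)\leq(1-\delta)m\ell+\delta\tfrac{\ell}{b-\ell}d'$. The terms in $m\ell$ combine as $-m\ell+(1-\delta)m\ell=-\delta m\ell$. The first bracket is therefore at most
\[
\delta\Bigl[R'-m\ell+u_L'+\tfrac{\ell}{b-\ell}d'\Bigr]=\delta F'.
\]
This proves the first inequality in \eqref{eq:one-menu-F}.

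For the second inequality, note that $(M,s)$ is an announcement at a history with $m$ goods and posterior $\eta\geq\eta_0$, and $R(M,s)=R$. By the definition of $S=S_m(\delta,\eta_0)$, this gives $F(M,s)\leq S$. By \Cref{lem:F-bounds}, $d(M,s)\leq\tfrac{b-\ell}{\ell}F(M,s)\leq\tfrac{b-\ell}{\ell}S$. Multiplying by $\tfrac{b}{b-\ell}$ then yields $(1-\varrho)\tfrac{b}{b-\ell}d(M,s)\leq(1-\varrho)\tfrac{b}{\ell}S$.
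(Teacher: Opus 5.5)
Your proof is correct and is essentially the paper's argument: your rejection bracket $F(M,s)-R(M,s)+\delta R'\leq\delta F'$ is exactly the paper's \eqref{eq:F-rejection}, and your purchase bracket $F(M,s)-R(M,s)+X\leq\frac{b}{b-\ell}d(M,s)$ is its bound \eqref{eq:purchase-payment} on $Y-R$. Your convex split of $F(M,s)$ just replaces the paper's step of multiplying \eqref{eq:F-rejection} by $\varrho$, substituting $\varrho(R-\delta R')=(1-\varrho)(Y-R)$, and adding $(1-\varrho)F(M,s)$ to both sides; the only loose end is $\varrho=1$, where $X$ is undefined, but the purchase term then has weight zero and your rejection bracket alone gives $F(M,s)\leq\delta F'$, which is how the paper treats that case.
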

\begin{proof}
By \Cref{claim:revision-posterior-increase}, some type that values all remaining goods at $\ell$ rejects with positive probability, so $u_L(M,s)=\delta u_L'$. The all-high type can also reject and follow his equilibrium strategy, so $u_H(M,s)\geq\delta u_H'$. Substituting into the definitions of $F(M,s)$ and $F'$, and using $R(M,s)=R$, yields
\begin{equation}\label{eq:F-rejection}
F(M,s)\leq R-\delta R'+\delta F'.
\end{equation}
If $\varrho=1$, then $R=\delta R'$, so \eqref{eq:F-rejection} reduces to $F(M,s)\leq\delta F'$, which is \eqref{eq:one-menu-F} in this case. Suppose $\varrho<1$, and let $Y$ be the seller's expected discounted revenue from the current date onward, conditional on a purchase at this date. Rejection generates no current payment and continuation revenue $R'$ from the next date, so
\begin{equation}\label{eq:purchase-revenue}
R=\varrho\delta R'+(1-\varrho)Y,
\qquad \text{\&} \qquad
\varrho(R-\delta R')=(1-\varrho)(Y-R).
\end{equation}
Fix a type $v$ and a purchase response that he chooses with positive probability. By \Conditionref{iii} of \Cref{def:equilibrium}, choosing this response and then following his equilibrium strategy gives him utility $u_v(h,M,s)$. His expected discounted value from all goods he purchases from this date onward is at most $v(I)$. His expected discounted payments conditional on this response are thus at most $v(I)-u_v(h,M,s)$. Type $v$ can follow the all-high type's strategy from $(h,M,s)$ onward, obtaining utility at least $u_H(M,s)-\sum_{i\in I}(b-v_i)$. Consequently,
\[v(I)-u_v(h,M,s) \leq v(I)-u_H(M,s)+\sum_{i\in I}(b-v_i) = mb-u_H(M,s).\]
Averaging over types and responses conditional on a purchase, and subtracting $R$, produces \(Y\leq mb-u_H(M,s)\) and
\begin{equation}\label{eq:purchase-payment}
Y-R \leq mb-u_H(M,s)-R
=\frac b{b-\ell}d(M,s)-F(M,s),
\end{equation}
where the equality follows from the definitions of $d(M,s)$ and $F(M,s)$. Multiply \eqref{eq:F-rejection} by $\varrho$. Substituting $\varrho(R-\delta R')=(1-\varrho)(Y-R)$ from \eqref{eq:purchase-revenue}, and then the bound on $Y-R$ in \eqref{eq:purchase-payment}, yields
\[
\varrho F(M,s)
\leq (1-\varrho)(Y-R)+\delta\varrho F'
\leq(1-\varrho)
\left[\frac b{b-\ell}d(M,s)-F(M,s)\right]
+\delta\varrho F'.
\]
Adding $(1-\varrho)F(M,s)$ to both sides establishes the first inequality in \eqref{eq:one-menu-F}. Since $m$ goods remain, $\eta\geq\eta_0$, and $R(M,s)=R$, the definition of $S_m(\delta,\eta_0)$ delivers $F(M,s)\leq S$. Together with $d(M,s)\leq(b-\ell)F(M,s)/\ell$ from \Cref{lem:F-bounds}, this implies $bd(M,s)/(b-\ell)\leq(b/\ell)S$, establishing the second inequality.
\end{proof}

At a seller history with inventory $I$ and posterior probability $\eta\in[\eta_0,\eta_1)$ that $v_i=\ell$ for every $i\in I$, follow the equilibrium strategies and count the current date as $0$. Let $p_*$ be the probability that a purchase occurs before any controlled announcement. For an integer $T\geq0$, let $p_T$ be the probability that such a purchase occurs at one of the dates $0,\ldots,T-1$. Write $p_T(M,s)$ for this probability when the seller makes the current announcement $(M,s)$ and both players follow their equilibrium strategies thereafter. Thus, $p_0=p_0(M,s)=0$, $p_T=\E_{M,s}[p_T(M,s)]$, and $p_T\uparrow p_*$ as $T\to\infty$.

For $T\geq1$, let $p_{T-1}'$ be the probability, at the seller history after rejection, that a purchase occurs within the next $T-1$ dates and before any controlled announcement. Then
\begin{equation}\label{eq:revision-purchase-probabilities}
p_T(M,s)=
\begin{cases}
0, &\text{if $(M,s)$ is controlled},\\
1-\varrho+\varrho p_{T-1}', &\text{otherwise}.
\end{cases}
\end{equation}
The first case follows from the definition of $p_T(M,s)$. In the second, the buyer purchases immediately with probability $1-\varrho$ and rejects with probability $\varrho$.

\begin{claim}\label{claim:revision-purchase-probability}
The probability $p_*$ satisfies \(\eta\leq\kappa p_*+\eta_1(1-p_*)\) and \(p_*\leq\frac{\eta_1-\eta_0}{\eta_1-\kappa}\).
\end{claim}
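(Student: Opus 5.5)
The plan is to treat the seller's posterior probability of the event $E\coloneqq\{v_i=\ell \text{ for every } i\in I\}$ as a martingale. I will stop it at the first purchase or at the first controlled announcement, whichever comes first. Along the unstopped path every rejection is the rejection at an uncontrolled announcement. By \Cref{claim:revision-posterior-increase}, the posterior after such a rejection lies in $[\eta,\eta_1)\subseteq[\eta_0,\eta_1)$, while any purchase there drops it below $\kappa$. Averaging over stopping outcomes should give the first inequality. The second then follows by rearranging: $p_*(\eta_1-\kappa)\leq\eta_1-\eta\leq\eta_1-\eta_0$.

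Rather than invoke optional stopping directly, I will prove a finite-horizon statement by induction on $T$. The statement is: at every seller history with inventory $I$ and posterior $\eta\in[\eta_0,\eta_1)$ on $E$, we have
\[
\eta\leq\kappa p_T+\eta_1(1-p_T).
\]
For $T=0$ we have $p_0=0$, and the claim reduces to $\eta<\eta_1$. For the inductive step, fix $T\geq1$. By \Cref{lem:F-bounds}, almost every announcement $(M,s)$ in the support of $\sigma(\cdot\mid h)$ satisfies $R(M,s)=R$, so \Cref{claim:revision-posterior-increase} applies to those that are not controlled. It therefore suffices to show the per-announcement inequality $\eta\leq\kappa p_T(M,s)+\eta_1(1-p_T(M,s))$ and then average over $(M,s)$, using $p_T=\E_{M,s}[p_T(M,s)]$ and linearity.

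There are two cases. If $(M,s)$ is controlled, then $p_T(M,s)=0$ by \eqref{eq:revision-purchase-probabilities}, and the inequality is again $\eta<\eta_1$. If $(M,s)$ is not controlled, \Cref{claim:revision-posterior-increase} gives $\varrho>0$, $\eta'\in[\eta,\eta_1)$, and $\eta\leq\varrho\eta'+(1-\varrho)\kappa$. The history after rejection has the same inventory $I$, a Bayes-consistent posterior $\eta'\in[\eta_0,\eta_1)$, and continuation play that is the equilibrium play. So the induction hypothesis at horizon $T-1$ gives $\eta'\leq\kappa p'_{T-1}+\eta_1(1-p'_{T-1})$. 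Substituting this bound yields
\[
\eta\leq\kappa\bigl(1-\varrho+\varrho p'_{T-1}\bigr)+\eta_1\varrho\bigl(1-p'_{T-1}\bigr),
\]
which by \eqref{eq:revision-purchase-probabilities} is exactly $\kappa p_T(M,s)+\eta_1(1-p_T(M,s))$.

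Finally, I let $T\to\infty$. Since $p_T\uparrow p_*$ and the right-hand side is continuous in $p_T$, the first inequality follows. Subtracting it from $\eta_1$ gives $\eta_1-\eta\geq p_*(\eta_1-\kappa)$. Using $\eta\geq\eta_0$ and $\eta_1>\kappa$ then gives $p_*\leq(\eta_1-\eta_0)/(\eta_1-\kappa)$. The only delicate point is the bookkeeping in the inductive step. Two facts must be checked there: the rejection posterior stays in the range $[\eta_0,\eta_1)$ where the hypothesis applies, which is what \Cref{claim:revision-posterior-increase}'s bound $\eta'\geq\eta$ secures; and the probability-zero announcements with $R(M,s)\neq R$ do not affect the average.
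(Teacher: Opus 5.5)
Your proposal is correct and follows the paper's proof essentially step for step: induction on $T$ over seller histories with inventory $I$ and $\eta\in[\eta_0,\eta_1)$, the controlled/uncontrolled split via \eqref{eq:revision-purchase-probabilities} and \Cref{claim:revision-posterior-increase}, averaging over announcements using \Cref{lem:F-bounds}, and then letting $p_T\uparrow p_*$ and rearranging. The only cosmetic difference is that your induction hypothesis carries just the seller-history inequality and derives the per-announcement inequality inside the step, whereas the paper carries both; the paper later reuses the per-announcement version to bound $F(M,s)$, and your inductive step yields that version too.
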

\begin{proof}
We prove by induction on $T$ that
\[\kappa p_T+\eta_1(1-p_T) \geq \eta \leq\kappa p_T(M,s)+\eta_1(1-p_T(M,s))
\]
at every seller history with inventory $I$ and $\eta\in[\eta_0,\eta_1)$, and after every announcement satisfying $R(M,s)=R$. Both inequalities hold for $T=0$, since the purchase probabilities are zero and $\eta<\eta_1$.

Suppose the inequalities hold for $T-1$. If $(M,s)$ is controlled, then $p_T(M,s)=0$, so the second inequality follows from $\eta<\eta_1$. Otherwise, \Cref{claim:revision-posterior-increase} implies that rejection leaves inventory $I$ and posterior probability $\eta'\in[\eta_0,\eta_1)$. The induction hypothesis, therefore, applies at the seller history after rejection. By \eqref{eq:posterior-increase} and \eqref{eq:revision-purchase-probabilities},
\[
\begin{aligned}
\eta&\leq(1-\varrho)\kappa+\varrho\eta'\\
&\leq(1-\varrho)\kappa
+\varrho\left[\kappa p_{T-1}'+\eta_1(1-p_{T-1}')\right] =\kappa p_T(M,s)+\eta_1(1-p_T(M,s)).
\end{aligned}
\]
Since $R(M,s)=R$ almost surely by \Cref{lem:F-bounds}, averaging over the seller's announcements proves the first inequality and completes the induction. Letting $T\to\infty$ and using $p_T\uparrow p_*$ proves $\eta\leq\kappa p_*+\eta_1(1-p_*)$. Rearranging and using $\eta\geq\eta_0$ produces the claim.
\end{proof}

\begin{claim}\label{claim:revision-stopped-F}
At the fixed seller history, for every integer $T\geq1$, \(F\leq B+\frac b\ell S p_*+\delta^T S\).
\end{claim}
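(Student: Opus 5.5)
We prove by induction on $T$ a stronger, announcement-level statement. Define $G_T$ as the expected value of the following quantity along equilibrium play from the fixed seller history:
\begin{itemize}
\item $\delta^{\tau}F(M_\tau,s_\tau)$ if a controlled announcement occurs at some date $\tau<T$ before any purchase;
\item $\delta^{T}F_T$ if neither a controlled announcement nor a purchase occurs by date $T$;
\item zero otherwise, i.e.\ on paths where a purchase occurs first.
\end{itemize}
The claim to prove by induction is
\[
F\leq B\,(1-p_T-\varrho_T^{\mathrm{cont}})^{0}\cdots
\]
This is a heuristic indication only; the clean form to aim for is
\[
F\leq \E\bigl[\mathbf 1\{\text{controlled first}\}\bigr]B+\frac b\ell S\,p_T+\delta^T S .
\]
The idea is to unroll \eqref{eq:one-menu-F} along the rejection path, stopping at controlled announcements, and to bound each stopping contribution.

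Concretely, I would show by induction on $T\ge 0$ that at every seller history with inventory $I$ and $\eta\in[\eta_0,\eta_1)$ we have
\[
F\leq B+\frac b\ell S\,p_T+\delta^T S,
\]
and similarly $F(M,s)\le B+\frac b\ell S\,p_T(M,s)+\delta^TS$ after every announcement with $R(M,s)=R$. (Positivity lets us replace the indicator-weighted $B$ by $B$.)

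The base case $T=0$ holds because $p_0=0$ and $F\le S$, by the definition of $S=S_m(\delta,\eta_0)$ and $\eta\ge\eta_0$. For $T=0$ one could also note that $\delta^0S=S$ suffices.

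For the inductive step at the announcement level, there are two cases.
\begin{itemize}
\item If $(M,s)$ is controlled, \Cref{claim:revision-controlled} gives $F(M,s)\le B$, which is at most the right side since all terms are nonnegative ($S\ge0$ by \eqref{eq:S-initial} and $B\ge0$).
\item If $(M,s)$ is not controlled, \Cref{claim:revision-posterior-increase} says $\varrho>0$ and the post-rejection history has inventory $I$ and posterior $\eta'\in[\eta,\eta_1)\subseteq[\eta_0,\eta_1)$. The induction hypothesis at $T-1$ therefore applies there: $F'\le B+\frac b\ell S p'_{T-1}+\delta^{T-1}S$. Plugging this into \eqref{eq:one-menu-F} gives
\[
F(M,s)\le \delta\varrho B+\frac b\ell S\bigl[(1-\varrho)+\delta\varrho p'_{T-1}\bigr]+\delta^T\varrho S .
\]
Using $\delta\varrho\le1$, $\delta\le 1$, $\varrho\le1$ and \eqref{eq:revision-purchase-probabilities}, the bracket is at most $p_T(M,s)$, so $F(M,s)\le B+\frac b\ell S p_T(M,s)+\delta^T S$.
\end{itemize}

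Finally, by \Cref{lem:F-bounds}, announcements in the support of $\sigma(\cdot\mid h)$ satisfy $R(M,s)=R$ almost surely and $F=\E_{M,s}[F(M,s)]$. Averaging, together with $p_T=\E_{M,s}[p_T(M,s)]$, gives the history-level bound and completes the induction. Since $p_T\le p_*$, the claim follows.

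The main obstacle is verifying the application of the induction hypothesis after rejection: the post-rejection history must again lie in the class (same inventory, $\eta'\in[\eta_0,\eta_1)$) and must be an equilibrium continuation. It must also have $F'$ well defined with $F'\le S$ for the base case, which requires $\eta'\ge\eta_0$; this is exactly what \eqref{eq:posterior-increase} provides. A secondary point is that the $B$ term should not accumulate across dates. This is handled by bounding $\delta\varrho B\le B$ at each step rather than summing, which works because controlled stops and continuation are mutually exclusive branches weighted by probabilities summing to at most one.
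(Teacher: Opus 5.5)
Your proposal is correct and follows the paper's proof: you run the same simultaneous induction on $T$ for the history-level and announcement-level bounds. The controlled case uses \Cref{claim:revision-controlled}, the uncontrolled case substitutes the induction hypothesis after rejection into \eqref{eq:one-menu-F} and applies \eqref{eq:revision-purchase-probabilities}, and the history-level bound comes from averaging via \Cref{lem:F-bounds}. The abandoned heuristic opening (the $G_T$ definition and the garbled display with $\varrho_T^{\mathrm{cont}}$) plays no role and should simply be deleted.
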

\begin{proof}
We prove by induction on $T$ that
\[F \leq B+\frac b\ell S p_T+\delta^T S\qquad  \text{\&} \qquad F(M,s) \leq B+\frac b\ell S p_T(M,s)+\delta^T S\]
at every seller history with inventory $I$ and $\eta\in[\eta_0,\eta_1)$, and after every announcement satisfying $R(M,s)=R$. For $T=0$, both inequalities follow from $F\leq S$, $F(M,s)\leq S$, and $B\geq0$.

Suppose the inequalities hold for $T-1$. If $(M,s)$ is controlled, \Cref{claim:revision-controlled} implies $F(M,s)\leq B$. Otherwise, rejection leaves inventory $I$ and $\eta'\in[\eta_0,\eta_1)$ by \Cref{claim:revision-posterior-increase}, so the induction hypothesis applies to $F'$. Substituting its bound into \eqref{eq:one-menu-F}, we obtain
\[
\begin{aligned}
F(M,s)&\leq\delta\varrho F'+(1-\varrho)\frac b\ell S\\
&\leq\delta\varrho
\left(B+\frac b\ell S p_{T-1}'+\delta^{T-1}S\right)
+(1-\varrho)\frac b\ell S\\
&\leq B+\frac b\ell S\left(1-\varrho+\varrho p_{T-1}'\right)
+\delta^T S\\
&=B+\frac b\ell S p_T(M,s)+\delta^T S.
\end{aligned}
\]
The third line uses $0<\delta<1$, $0\leq\varrho\leq1$, and the nonnegativity of $B$, $S$, and $p_{T-1}'$; and the last uses \eqref{eq:revision-purchase-probabilities}. Averaging over the seller's announcements and using $F=\E_{M,s}[F(M,s)]$ from \Cref{lem:F-bounds} proves the first inequality and completes the induction. Since $p_T\leq p_*$, the claim follows.
\end{proof}

Let $T\to\infty$ in \(F\leq B+\frac b\ell S p_*+\delta^T S\) and apply \Cref{claim:revision-purchase-probability}. We obtain
\[
F\leq B+\frac b\ell S\frac{\eta_1-\eta_0}{\eta_1-\kappa}.
\]
The induction in \Cref{claim:revision-purchase-probability} also proves $p_T(M,s)\leq(\eta_1-\eta_0)/(\eta_1-\kappa)$ for every $T$ and every announcement with $R(M,s)=R$ at these histories. Substituting this into the bound on $F(M,s)$ in the proof of \Cref{claim:revision-stopped-F} and letting $T\to\infty$ establishes
\[
F(M,s)\leq B+\frac b\ell S\frac{\eta_1-\eta_0}{\eta_1-\kappa}.
\]

At seller histories with $m$ remaining goods and posterior probability at least $\eta_1$ that all their values equal $\ell$, we have $F\leq S_m(\delta,\eta_1)\leq B$. By the definition of $S_m$, the same bound holds for $F(M,s)$ when $R(h,M,s)=R(h)$. These bounds cover every history in the definition of $S$. Taking the supremum gives
\[
S\leq B+\frac b\ell S\frac{\eta_1-\eta_0}{\eta_1-\kappa}.
\]
Subtracting the term involving $S$ on the right and dividing by $1-(b/\ell)(\eta_1-\eta_0)/(\eta_1-\kappa)>0$ proves \eqref{eq:posterior-recursion}.
\end{proof}

\begin{proof}[Proof of \Cref{prop:revision-bound}]
Set \(c\coloneqq b/\ell\), \(r\coloneqq1-\frac1{4c}\), and \(B_0\coloneqq2c\); and choose an integer $s\geq1$ with $r^{s+1}\leq1/2$. For $n\geq1$, set $\eta_0=r^n$, $\eta_1=r^{n-1}$, and $\kappa=r^{n+s}$. Substituting these values into the fraction in the denominator of \eqref{eq:posterior-recursion} and using $c(1-r)=1/4$, we obtain
\[
\frac b\ell\frac{\eta_1-\eta_0}{\eta_1-\kappa}
=\frac{c(1-r)}{1-r^{s+1}}
=\frac{1}{4(1-r^{s+1})}\leq\frac12.
\]
Thus, \Cref{lem:posterior-recursion} applies, and the denominator of the right-hand side of \eqref{eq:posterior-recursion} is at least $1/2$. Since $B_0=2b/\ell$, \eqref{eq:posterior-recursion} implies
\begin{equation}\label{eq:grid-recursion}
S_m(\delta,r^n)
\leq2bm(1-\delta)
+B_0\max\left\{S_m(\delta,r^{n-1}),\max_{j<m}S_j(\delta,r^{n+s})\right\}.
\end{equation}

Define the integer $H(m,n)\coloneqq n+(s+1)m$.

\begin{claim}\label{claim:revision-uniform-S}
For every $m\in\{0,\ldots,K\}$ and integer $n\geq0$,
\begin{equation}\label{eq:uniform-S}
S_m(\delta,r^n)
\leq2bm(1-\delta)
\frac{B_0^{n+(s+1)m}-1}{B_0-1}.
\end{equation}
\end{claim}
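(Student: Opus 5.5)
Strong induction on the integer $H(m,n)=n+(s+1)m$ over pairs $(m,n)$ with $0\leq m\leq K$ and $n\geq0$. Write $G(h)\coloneqq(B_0^h-1)/(B_0-1)$, so the target is $S_m(\delta,r^n)\leq 2bm(1-\delta)G(H(m,n))$. Note $B_0=2b/\ell>2>1$, so $G$ is nondecreasing, $G(0)=0$, and $G(h)=1+B_0G(h-1)$.

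\emph{Base cases.} If $m=0$, then \eqref{eq:S-initial} gives $S_0(\delta,\eta)=0$, matching the right-hand side. If $n=0$, then $r^0=1$ and \eqref{eq:S-initial} gives $S_m(\delta,1)=0$, which is below the nonnegative right-hand side. These cover every pair with $H=0$, and more generally every pair in which one coordinate is zero.

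\emph{Inductive step.} Take $m\geq1$ and $n\geq1$, and assume \eqref{eq:uniform-S} holds for all pairs with smaller $H$. Invoke \eqref{eq:grid-recursion}. The two pairs on its right-hand side have smaller $H$:
\[
H(m,n-1)=H(m,n)-1,
\qquad
H(j,n+s)=n+s+(s+1)j\leq n+s+(s+1)(m-1)=H(m,n)-1
\quad\text{for } j<m.
\]
This is exactly why the weight $s+1$ is attached to $m$. By the inductive hypothesis, monotonicity of $G$, and $j<m$, each term inside the max is at most $2bm(1-\delta)G(H(m,n)-1)$. Hence
\[
S_m(\delta,r^n)\leq 2bm(1-\delta)\bigl[1+B_0G(H(m,n)-1)\bigr]=2bm(1-\delta)G(H(m,n)),
\]
which closes the induction.

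\emph{Where care is needed.} The only delicate points are bookkeeping. The first is making sure the index shift $n+s$ in the fewer-goods branch is absorbed by the $(s+1)$ weight, which the display above does. The second is replacing the factor $j$ by $m$ in the prefactor, which needs $S_j\geq0$ and monotonicity in the multiplier. The third is that \eqref{eq:grid-recursion} is only available for $n\geq1$ (it needs $\eta_1=r^{n-1}\leq1$) and $m\geq1$, which the base cases handle. Well-foundedness is clear because $H$ is a nonnegative integer.
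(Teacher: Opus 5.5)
Your proof is correct and is essentially the paper's own argument: strong induction on $H(m,n)=n+(s+1)m$, base cases $m=0$ and $n=0$ from \eqref{eq:S-initial}, and an inductive step through \eqref{eq:grid-recursion} using $H(m,n-1)=H(m,n)-1$, $H(j,n+s)\leq H(m,n)-1$ for $j<m$, and $G(h)=1+B_0G(h-1)$. One small correction: replacing the prefactor $j$ by $m$ relies on $G\geq0$ and $1-\delta>0$, not on $S_j\geq0$.
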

\begin{proof}
We prove this by induction on $H(m,n)$. If $m=0$, both sides equal zero by \eqref{eq:S-initial}. If $n=0$, then $r^n=1$ and $S_m(\delta,1)=0$ by the same equation, while the right-hand side is nonnegative.

Now suppose $m,n\geq1$ and the inequality holds for every pair with smaller $H$. The term $S_m(\delta,r^{n-1})$ in \eqref{eq:grid-recursion} has $H(m,n-1)=H(m,n)-1$. For each term $S_j(\delta,r^{n+s})$ with $j<m$,
\[
H(j,n+s)\leq n+s+(s+1)(m-1)=H(m,n)-1.
\]
The induction hypothesis, therefore, applies to every term inside the maximum in \eqref{eq:grid-recursion}. Since $j\leq m$ and $B_0>1$, each term is at most $2bm(1-\delta)(B_0^{H(m,n)-1}-1)/(B_0-1)$. Substituting these bounds into \eqref{eq:grid-recursion} yields
\[
S_m(\delta,r^n)
\leq2bm(1-\delta)
+B_0 2bm(1-\delta)\frac{B_0^{H(m,n)-1}-1}{B_0-1}
=2bm(1-\delta)\frac{B_0^{H(m,n)}-1}{B_0-1},
\]
which establishes \eqref{eq:uniform-S}.
\end{proof}
\eqref{eq:uniform-S} applies at probabilities $r^n$. To bound $S_K(\delta,\eta)$ for any $\eta\in(0,1]$, choose
\[
n=\left\lceil\frac{\log(1/\eta)}{\log(1/r)}\right\rceil,
\qquad \text{\&} \qquad
a\coloneqq\frac{\log B_0}{\log(1/r)}.
\]
Then $r^n\leq\eta$ and $B_0^n\leq B_0\eta^{-a}$. Every history whose posterior assigns probability at least $\eta$ to all remaining values being $\ell$ also assigns probability at least $r^n$ to that event. Thus, $S_K(\delta,\eta)\leq S_K(\delta,r^n)$. Applying \eqref{eq:uniform-S} with $m=K$ to $S_K(\delta,r^n)$, dividing by $K$, and using $B_0^n\leq B_0\eta^{-a}$, we obtain
\begin{equation}\label{eq:general-revision}
\frac{S_K(\delta,\eta)}K
\leq\frac{2bB_0}{B_0-1}
(1-\delta)B_0^{(s+1)K}\eta^{-a}.
\end{equation}
For the independent prior, $\eta=(1-q)^K$. Define
\begin{equation}\label{eq:A-constant}
A\coloneqq(s+1)\log B_0+a\log\frac1{1-q}.
\end{equation}
At the initial history, \eqref{eq:F-bounds} implies $0\leq R_{K,\delta}-K\ell\leq F$, and $F\leq S_K(\delta,(1-q)^K)$ by definition. Substitute $\eta=(1-q)^K$ into \eqref{eq:general-revision}. Since $B_0^{(s+1)K}(1-q)^{-aK}=\mathrm e^{AK}$ by \eqref{eq:A-constant}, we obtain
\[
0\leq\frac{R_{K,\delta}}K-\ell
\leq\frac{F}{K}
\leq\frac{S_K(\delta,(1-q)^K)}K
\leq\frac{2bB_0}{B_0-1}(1-\delta)\mathrm{e}^{AK}.
\]

Finally, averaging \eqref{eq:high-type-imitation} over the prior and using \eqref{eq:d-and-F}, we obtain
\[
U\geq\E_P[v([K])]-Kb+u_H
=\E_P[v([K])]-K\ell+u_L-d
\geq\E_P[v([K])]-K\ell-d.
\]
The opposite inequality $U\leq\E_P[v([K])]-K\ell$ follows from \Cref{lem:baseline} and \eqref{eq:surplus}. Since $d\leq(b-\ell)F/\ell$ by \eqref{eq:F-bounds}, the difference between $\mu-\ell$ and the buyer's equilibrium utility per good under the independent prior satisfies
\[
0\leq(\mu-\ell)-\frac{U_{K,\delta}}K
\leq\frac dK
\leq\frac{b-\ell}{\ell}\frac{S_K(\delta,(1-q)^K)}K
\leq\frac{b-\ell}{\ell}\frac{2bB_0}{B_0-1}(1-\delta)\mathrm{e}^{AK},
\]
where the last inequality uses \eqref{eq:general-revision} and \eqref{eq:A-constant}. Choosing $C=(2bB_0/(B_0-1))\max\{1,(b-\ell)/\ell\}$ proves \eqref{eq:main-revision}. When $(1-\delta_K)\mathrm{e}^{AK}\to0$, both upper bounds in \eqref{eq:main-revision} converge to zero\end{proof}

\bibliography{linkvsdiscount}

\section{(Supplementary) Proof of \texorpdfstring{\Cref{prop:existence}}{prop}}\label[appendix]{sec:existence}
To prove \Cref{prop:existence}, I first obtain equilibria when prices belong to a finite grid and the game has finitely many dates. I then use their continuation payoff vectors to construct equilibrium strategies and beliefs for the infinite-horizon game with arbitrary real prices.

Fix $K,\ell,b,\delta$. The payoff vector at a seller history is $z=(r,(u_v)_{v\in V})$, which includes the utility of every original type, irrespective of whether the current posterior assigns that type positive probability.

For $n\geq1$, let $D_n\coloneqq2^{-n}\mathbb Z\cap[-n,n]$. Call a menu \textit{normalized} if it contains at most one price for each nonempty subset of the remaining inventory (and continues to contain rejection). For an integer $T\geq0$, consider the game in which the seller must choose message $s=0$ and a normalized menu with prices in $D_n$, and play occurs at dates $0,\ldots,T-1$. The game ends when all goods have sold or immediately after the response at date $T-1$, with no subsequent deliveries or payments. If $T=0$ or the initial inventory is empty, it ends immediately. Equilibria of this game must satisfy the conditions in \Cref{def:equilibrium}, with histories and strategies restricted to those of this finite game. Call such a game a \textit{finite grid game}.

\begin{lemma}\label{lem:grid-existence}
For every $n\geq1$, $T\geq0$, initial inventory $I\subseteq[K]$, and prior $P\in\Delta(V)$, the finite grid game has an equilibrium. In every such equilibrium, the continuation payoffs at every seller history $h$ satisfy \(0\leq R(h)\leq |I(h)|b\) and \(0\leq u_v(h)\leq |I(h)|(2b-\ell)\) for every \(v\in V\).
\end{lemma}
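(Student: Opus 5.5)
The plan is backward induction on the number of remaining dates $T$, carried out simultaneously for every initial inventory $I$ and every prior $P$. The finite grid game with horizon $T$ and any $(I,P)$ is a finite extensive-form game of incomplete information. There are finitely many normalized menus with prices in $D_n$ (at most $(|D_n|+1)^{2^{|I|}-1}$), a single message, finitely many responses, and finitely many dates. For $T=0$ or $I=\emptyset$ the game ends immediately, and all payoffs are zero. Both statements then hold trivially.

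For the inductive step, I fix $T\geq1$ and suppose that for every inventory and every prior, the $(T-1)$-date grid game has an equilibrium satisfying the bounds. I then consider the date-$0$ stage of the $T$-date game with $(I,P)$. The obvious approach is to take a sequential equilibrium of the whole finite game. Existence follows from Kreps--Wilson, via trembling-hand perfect equilibria of the agent-normal form of a finite game. Sequential equilibrium beliefs are consistent. They therefore satisfy Bayes' rule on path. Off path they are limits of full-support beliefs, and since the seller's move does not depend on the buyer's type, they leave the type distribution unchanged after any seller announcement; this is the ``no signaling what you do not know'' requirement in \Conditionref{iv}. Sequential rationality at every information set gives \Conditionref{ii} and \Conditionref{iii}: one-shot deviation suffices in a finite game, so optimality against any continuation strategy follows. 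Finiteness of the assessment is automatic. I would also record that every continuation of a sequential equilibrium from a seller history, with the posterior at that history as the prior, is again an equilibrium of the grid game started there. This is the form the later limit argument will need.

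For the payoff bounds I would repeat the arguments behind \eqref{eq:surplus} and \eqref{eq:uniform-values} inside the finite game. The seller can offer the rejection-only menu $\{0\}$, which is normalized, at every remaining date, and the buyer can always reject. Hence $R(h)\geq0$ and $u_v(h)\geq0$. With a common discount factor and a truncated horizon, $R(h)+\E_{P_h}[u_v(h)]=\E_{P_h}[\sum_t\delta^t v(S_t)]\leq |I(h)|b$. Together with $u_v\geq0$ this gives $R(h)\leq|I(h)|b$. For the buyer, some $\bar v$ in the support of $P_h$ has $u_{\bar v}(h)\leq|I(h)|b$. The imitation inequality \eqref{eq:imitation-general} holds verbatim in the grid game, because imitation uses only the buyer's strategy space. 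It yields $u_v(h)\leq |I(h)|b+|I(h)|(b-\ell)$ for every $v$, including types outside the support.

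The main obstacle is the off-path posterior clause. At a seller history reached with probability zero, the posterior must be a well-defined distribution on $V$, it must be unchanged by the seller's announcement, and after buyer responses it must satisfy \eqref{eq:bayes} wherever $\pi_a>0$ under that posterior. Consistency of sequential equilibrium beliefs delivers all of this, since the seller's announcement is a public move independent of type. Checking this carefully is where I would spend most of the effort. The same check is needed for the bounds at types outside the support, and that case is handled through $\bar v$.
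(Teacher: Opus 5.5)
You follow the paper for full-support priors and for the payoff bounds: a Kreps--Wilson sequential equilibrium, \eqref{eq:bayes} obtained as a limit of Bayes' rule along completely mixed profiles (the seller's announcement probabilities cancel because they do not depend on the type), then nonnegativity, the surplus identity, and imitation of a type $\bar v$ in the support of $P_h$.

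\textbf{The gap: priors without full support.} The lemma asserts existence for every $P\in\Delta(V)$. Requirement (ii) of \Cref{def:equilibrium} demands optimal play from every $v\in V$, including types with $P(v)=0$. Kreps and Wilson's existence theorem is for games in which nature's initial move has full support. That is what guarantees every information set is reached under completely mixed profiles, and your consistency argument rests on it. If some types have prior probability zero, their nodes are reached with probability zero under every completely mixed profile. The Bayes-limit construction is then undefined there, and the perfection argument behind existence places no restriction on their play. The definition nonetheless requires that play to be optimal. You never treat this case, and it matters downstream: \Cref{prop:existence} is stated for every prior, and the proof of \Cref{lem:all-deviations} invokes this lemma at an arbitrary $P$ to show $\mathcal C_I(P)\neq\emptyset$.

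\textbf{How to close it.} The paper first proves existence for full-support priors and then perturbs. It takes equilibria under $P_j\coloneqq(1-1/j)P+1/(j|V|)$ and extracts a subsequence along which the finitely many strategy and posterior vectors converge. It then passes to the limit in the Bayes identities and in the no-profitable-deviation inequalities; both sides of these are finite sums of products of the probabilities, hence continuous.

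A more direct repair also works. Apply Kreps--Wilson to the game whose type set is $\mathrm{supp}(P)$, and extend every posterior by zero. Give each type outside $\mathrm{supp}(P)$ a backward-induction best response to the seller's strategy. These types carry zero posterior weight at every seller history, so \eqref{eq:bayes} holds trivially for them and the seller's incentives are unaffected.

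Finally, the induction on $T$ you set up is never used, since the sequential-equilibrium step does all the work, so it can be dropped.
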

\begin{proof}
Fix $n,T,I$. If $T=0$ or $I=\emptyset$, all payoffs are zero. Suppose henceforth that $T\geq1$ and $I\neq\emptyset$.

\begin{claim}\label{claim:existence-full-support}
An equilibrium exists when the prior has full support.
\end{claim}
\begin{proof}
The game has finitely many dates, types, menus, and responses, and it has perfect recall. Thus, \citet[Proposition~1]{KrepsWilson1982} establish that it has a sequential equilibrium. Choose completely mixed strategy profiles\footnote{A strategy profile is completely mixed if the seller assigns strictly positive probability to every available menu whenever she moves and every buyer type assigns strictly positive probability to every response at each buyer history.} converging to its strategies, with Bayesian beliefs converging to its beliefs at seller histories. Write \(\beta_v^k\) for type \(v\)'s strategy in the \(k\)th profile and \(P_g^k\) for the posterior at seller history \(g\) in that profile.

The seller's announcement probability conditional on $h$ is the same for every type. Under each completely mixed profile, Bayes' rule implies
\begin{equation}\label{eq:cl1no1}
P_h^k(v)\beta_v^k(a\mid h,M,0)
=
\left(\sum_{\hat v\in V}P_h^k(\hat v) \beta_{\hat v}^k(a\mid h,M,0)\right)P_{h,M,0,a}^k(v).
\end{equation}
Taking \(k \to \infty\) in \eqref{eq:cl1no1} establishes \Conditionref{iv}. The equilibrium strategies are sequentially rational, and all payoffs are finite, so the resulting assessment is an equilibrium.
\end{proof}

\begin{claim}\label{claim:existence-arbitrary-prior}
An equilibrium exists under every prior.
\end{claim}
\begin{proof}
Fix any $P\in\Delta(V)$ and let \(P_j(v)\coloneqq\left(1-\frac1j\right)P(v)+\frac1{j| V|}\) \((j\geq2)\). Choose an equilibrium under each $P_j$, keeping $n$ and $T$ fixed. The finitely many strategy and posterior probability vectors lie in a finite product of compact simplexes. Take a subsequence along which all these vectors converge. The initial posterior converges to $P$, and \Conditionref{iv} holds in the limit.

At any history, the prescribed payoff and the payoff from any fixed deviation from that history onward are finite sums of products of the relevant action probabilities and, for the seller, the posterior probabilities. So both payoffs converge. Taking limits in the inequality stating that the deviation is not profitable establishes optimality at that history. The limiting assessment is thus an equilibrium under $P$.
\end{proof}

\begin{claim}\label{claim:existence-finite-bounds}
Every equilibrium satisfies the stated payoff bounds.
\end{claim}
\begin{proof}
The continuation payoffs for both players are nonnegative as at every seller history $h$, the seller can guarantee zero by offering only rejection for the remaining dates, and each type can guarantee zero by rejecting throughout. Since each good can be purchased at most once and the discount factors are the same,
\[
R(h)+\sum_{v\in V}P_h(v)u_v(h)\leq |I(h)|b.
\]
Thus, $R(h)\leq |I(h)|b$, and some type $\bar v$ with $P_h(\bar v)>0$ satisfies $u_{\bar v}(h)\leq |I(h)|b$. The imitation argument in \eqref{eq:imitation-general} also applies to the finite game. For every $v\in V$, it implies
\[
u_v(h)\leq u_{\bar v}(h)
+\sum_{i\in I(h)}(v_i-\bar v_i)^+
\leq |I(h)|b+|I(h)|(b-\ell)
=|I(h)|(2b-\ell).\qedhere
\]
\end{proof}

Existence follows from \Cref{claim:existence-arbitrary-prior}, and \Cref{claim:existence-finite-bounds} produces the payoff bounds.\end{proof}

For each $n,T,I$, let $E_{n,T,I}$ be the set of all initial prior-payoff pairs $(P,z)$ of equilibria of the finite grid game. By \Cref{lem:grid-existence}, their payoff vectors lie in the compact box
\[
\mathcal Z_I\coloneqq
[0,|I|b]\times[0,|I|(2b-\ell)]^{| V|}.
\]
Fix an equilibrium of a finite grid game and a seller history. The strategies and beliefs prescribed from that history onward form an equilibrium of the game starting with the remaining inventory, the posterior at that history, and the remaining number of dates. To see this, prescribe the original strategies and beliefs after each new history preceded by the fixed history. The available choices are unchanged, and optimality and updating hold at every subsequent history.

Define
\begin{equation}\label{eq:candidate-C}
\mathcal C_I\coloneqq
\bigcap_{j\geq1}
\overline{\bigcup_{\substack{n\geq j\\T\geq j}}E_{n,T,I}}
\ \subseteq\ \Delta(V)\times\mathcal Z_I,
\end{equation}
where closures are taken in $\Delta(V)\times\mathcal Z_I$, and write $\mathcal C_I(P)\coloneqq\{z \colon (P,z)\in\mathcal C_I\}$.

Fix $(P,z)\in\mathcal C_I$, with $z=(r,(u_v)_v)$ and $I\neq\emptyset$. For a normalized menu $M$, write $(S_a,p_a)$ for offer $a$, with $(S_0,p_0)=(\emptyset,0)$ denoting rejection. $\beta_v(a)$ is the probability that type $v$ chooses offer $a$. After each response $a$, let $P^a$ be the posterior and $z^a=(r^a,(u_v^a)_v)$ the continuation payoff vector. I require $(P^a,z^a)\in\mathcal C_{I\setminus S_a}$ for every $a$.

Write $\pi_a\coloneqq\sum_vP(v)\beta_v(a)$ for the probability of response $a$. I require
\begin{align}
P(v)\beta_v(a)&=\pi_aP^a(v)
&&\text{for every }v,a,\label{eq:spec-bayes}\\
\beta_v(a)>0&\quad\Longrightarrow\quad
v(S_a)-p_a+\delta u_v^a
=\max_c\{v(S_c)-p_c+\delta u_v^c\}
&&\text{for every }v,a.\label{eq:spec-ic}
\end{align}
\eqref{eq:spec-ic} is equivalent to
\begin{equation}\label{eq:spec-ic-closed}
\beta_v(a)\left[
v(S_c)-p_c+\delta u_v^c-v(S_a)+p_a-\delta u_v^a
\right]\leq0
\quad\text{for every }v,a,c.
\end{equation}
If type $v$ chooses offer $a$ with positive probability, he must obtain at least as much utility as he would by rejecting. Since $u_v^0\geq0$, $v(S_a)-p_a+\delta u_v^a\geq\delta u_v^0\geq0$, so
\begin{equation}\label{eq:existence-price-bound}
p_a\leq v(S_a)+\delta u_v^a
\leq |S_a|b+\delta|I\setminus S_a|(2b-\ell)
\leq2Kb.
\end{equation}
In an equilibrium, \Cref{lem:grid-existence} establishes the same bounds on continuation utilities. Thus, \eqref{eq:existence-price-bound} also holds for every offer that some type chooses with positive probability.

Define \(\mathcal R\coloneqq\sum_a\pi_a(p_a+\delta r^a)\).
\begin{lemma}\label{lem:all-deviations}
For every inventory $I$, the set $\mathcal C_I$ is compact and $\mathcal C_I(P)$ is nonempty for every prior $P$. Moreover, $\mathcal C_\emptyset=\Delta(V)\times\{0\}$. For every $(P,z)\in\mathcal C_I$ with $I\neq\emptyset$, every normalized menu $M$, and every message $s\in[0,1]$, there exist a probability distribution $\beta_v$ over the offers for each type $v$ and a posterior $P^a$ and continuation payoff vector $z^a$ after each response $a$ such that $(P^a,z^a)\in\mathcal C_{I\setminus S_a}$ for every $a$, \eqref{eq:spec-bayes} and \eqref{eq:spec-ic} hold, and $\mathcal R\leq r$.
\end{lemma}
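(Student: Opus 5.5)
Compactness and the two easy facts come first. $\mathcal C_I$ is an intersection of closed subsets of the compact space $\Delta(V)\times\mathcal Z_I$, so it is compact. For $I=\emptyset$ every finite grid game ends immediately, so $E_{n,T,\emptyset}=\Delta(V)\times\{0\}$ and hence $\mathcal C_\emptyset=\Delta(V)\times\{0\}$. For nonemptiness of $\mathcal C_I(P)$, fix $P$. By \Cref{lem:grid-existence}, for every $n,T$ there is some $z_{n,T}$ with $(P,z_{n,T})\in E_{n,T,I}$. Take $n=T=j\to\infty$ and extract a convergent subsequence $z_{j}\to z$ in the compact box $\mathcal Z_I$. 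Then $(P,z)$ lies in every set $\overline{\bigcup_{n,T\geq j}E_{n,T,I}}$, so $z\in\mathcal C_I(P)$.

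For the main claim, fix $(P,z)\in\mathcal C_I$ with $I\neq\emptyset$, a normalized menu $M$ and a message $s$. The message is irrelevant because grid games force $s=0$. Choose grid equilibria with $n_k,T_k\to\infty$ and initial pairs $(P_k,z_k)\to(P,z)$. The prices of $M$ need not lie in $D_{n_k}$. I would approximate each price $p_a$ from above by a grid price $p_a^k\in D_{n_k}$ with $p_a^k\downarrow p_a$; this is possible once $n_k>|p_a|$. Let $M_k$ be the resulting menu.

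In the $k$th grid equilibrium, consider the seller deviation to $M_k$ at date $0$. The buyer responses $\beta^k_v$, posteriors $P^{a,k}$ and continuation payoffs $z^{a,k}$ there satisfy the grid analogues of \eqref{eq:spec-bayes} and \eqref{eq:spec-ic-closed}. The continuation from $a$ is an equilibrium of the grid game with inventory $I\setminus S_a$ and horizon $T_k-1$, so $(P^{a,k},z^{a,k})\in E_{n_k,T_k-1,I\setminus S_a}$. Seller optimality gives $\sum_a\pi_a^k(p_a^k+\delta r^{a,k})\leq r_k$.

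Next, pass to a further subsequence along which $\beta^k_v$, $P^{a,k}$ and $z^{a,k}$ all converge; they lie in compact sets, with payoffs in the boxes $\mathcal Z_{I\setminus S_a}$. The limits $(P^a,z^a)$ lie in $\mathcal C_{I\setminus S_a}$ because $n_k,T_k-1\to\infty$. Equations \eqref{eq:spec-bayes} and the closed form \eqref{eq:spec-ic-closed} are polynomial identities and inequalities, so they survive the limit. Since $p^k_a\to p_a$, \eqref{eq:spec-ic-closed} with the original prices holds, and so does \eqref{eq:spec-ic}. Similarly, $\mathcal R\leq r$ follows by continuity of $\sum_a\pi_a(p_a+\delta r^a)$ in all arguments.

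I expect the main obstacle to be the truncation to $T_k$ dates and the prior perturbation. Both are absorbed because the target sets $\mathcal C$ are defined as limsup sets over $n,T\geq j$, and the horizon drops only by one. The remaining subtlety is the normalization restriction: a grid menu must have distinct subsets with one price each. $M$ is already normalized, so rounding preserves this, with grid points chosen distinct where needed. A secondary point is $P_k\neq P$. Bayes' rule is only imposed in the limit via \eqref{eq:spec-bayes}, and it is continuous in $P_k$, so this causes no difficulty.
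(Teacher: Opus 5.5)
Your proposal is correct and follows essentially the same route as the paper. You get compactness and nonemptiness from the limsup definition of $\mathcal C_I$, round the menu's prices onto $D_{n_k}$, read off the grid equilibrium's responses, posteriors, and continuation payoffs after that announcement, and pass to the limit in \eqref{eq:spec-bayes}, \eqref{eq:spec-ic-closed}, and the revenue inequality. The only differences are cosmetic: you round up rather than to the nearest grid point, and no distinctness of grid prices is needed, since rounding leaves the subsets unchanged and so automatically preserves normalization.
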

\begin{proof}
\begin{claim}\label{claim:existence-candidate-properties}
For every $I$, $\mathcal C_I$ is compact and $\mathcal C_I(P)$ is nonempty for every prior $P$. If $I=\emptyset$, then $\mathcal C_I=\Delta(V)\times\{0\}$.
\end{claim}
\begin{proof}
Each closure in \eqref{eq:candidate-C} is a nonempty compact set, and these sets decrease as $j$ increases. Consequently, their intersection is compact and nonempty. For a fixed $P$, choose an equilibrium with $n=T=k$ for each $k$ by \Cref{lem:grid-existence}. A subsequence of these payoff vectors converges in $\mathcal Z_I$, and the resulting pair belongs to $\mathcal C_I$. Thus, $\mathcal C_I(P)$ is nonempty. By \eqref{eq:candidate-C}, $(P,z)\in\mathcal C_I$ if and only if there are pairs $(P_k,z_k)\in E_{n_k,T_k,I}$ with $n_k\to\infty$, $T_k\to\infty$, and $(P_k,z_k)\to(P,z)$.\footnote{(\(\Longrightarrow\)) choose $n_k,T_k\geq k$ and a pair within distance $1/k$ of $(P,z)$. (\(\Longleftarrow\)) For every fixed $j$, all sufficiently late pairs belong to the union in \eqref{eq:candidate-C}, so their limit belongs to its closure.} If $I=\emptyset$, every payoff is zero, so $\mathcal C_\emptyset=\Delta(V)\times\{0\}$. \end{proof}

Fix $(P,z)$ and $(M,s)$ as in the lemma, and choose equilibria of finite grid games with initial priors and payoff vectors $(P_k,z_k)\in E_{n_k,T_k,I}$ such that $n_k\to\infty$, $T_k\to\infty$, and $(P_k,z_k)\to(P,z)$. Write $z_k=(r_k,(u_{k,v})_v)$. For each offer $(S_a,p_a)\in M$, choose a price $p_{k,a}\in D_{n_k}$ closest to $p_a$ and replace the offer with $(S_a,p_{k,a})$. Call the outcome of this the \textit{rounded menu}. Since $n_k\to\infty$, we have $p_{k,a}\to p_a$ for every $a$. Rejection remains $(\emptyset,0)$ because every grid contains zero.

For a collection \((\beta_{k,v})_{v \in V}\), write $\pi_{k,a}\coloneqq\sum_vP_k(v)\beta_{k,v}(a)$.
\begin{claim}\label{claim:existence-rounded-menu}
For every sufficiently large $k$, there exist a probability distribution $\beta_{k,v}$ for each type $v$, and a posterior $P_k^a$ and continuation payoff vector $z_k^a=(r_k^a,(u_{k,v}^a)_v)$ after each response $a$, such that \((P_k^a,z_k^a)\in E_{n_k,T_k-1,I\setminus S_a}\) for every \(a\); \(P_k(v)\beta_{k,v}(a) =\pi_{k,a}P_k^a(v)\) and \(\beta_{k,v}(a)\left[
v(S_c)-p_{k,c}+\delta u_{k,v}^c -v(S_a)+p_{k,a}-\delta u_{k,v}^a \right] \leq0\), for every $v,a,c$; and \(\sum_a\pi_{k,a}(p_{k,a}+\delta r_k^a)\leq r_k\).
\end{claim}
\begin{proof}
Use the response probabilities, posteriors, and continuation payoff vectors from the chosen equilibrium after the seller announces the rounded menu and message $0$.

From the next seller history onward, the equilibrium strategies and beliefs form an equilibrium of the finite grid game with inventory $I\setminus S_a$, prior $P_k^a$, and $T_k-1$ remaining dates. Thus, $(P_k^a,z_k^a)\in E_{n_k,T_k-1,I\setminus S_a}$.

\Conditionref{iv} implies $P_k(v)\beta_{k,v}(a)=\pi_{k,a}P_k^a(v)$ for every $v,a$. If $\beta_{k,v}(a)>0$, \Conditionref{iii} requires that type $v$ obtain at least as much utility from choosing $a$ as from choosing any other offer $c$ and then following his equilibrium strategy. Consequently,
\[
\beta_{k,v}(a)\left[
v(S_c)-p_{k,c}+\delta u_{k,v}^c
-v(S_a)+p_{k,a}-\delta u_{k,v}^a
\right]\leq0
\quad\text{for every }v,a,c.
\]
When $\beta_{k,v}(a)=0$, the left-hand side is zero.

If the seller announces the rounded menu and follows her equilibrium strategy at every later history, she receives $p_{k,a}$ now and continuation revenue $r_k^a$ from the next date onward when the buyer chooses $a$. Her expected revenue from this deviation is thus $\sum_a\pi_{k,a}(p_{k,a}+\delta r_k^a)$. Her equilibrium revenue is $r_k$, so \Conditionref{ii} implies \(\sum_a\pi_{k,a}(p_{k,a}+\delta r_k^a)\leq r_k\). \end{proof}

\begin{claim}\label{claim:existence-menu-limit}
The sequence $\left((\beta_{k,v})_v,(P_k^a,z_k^a)_a\right)_k$ from \Cref{claim:existence-rounded-menu} has a subsequence converging to $\left((\beta_v)_v,(P^a,z^a)_a\right)$, where $(P^a,z^a)\in\mathcal C_{I\setminus S_a}$ for every $a$, \eqref{eq:spec-bayes} and \eqref{eq:spec-ic} hold, and $\mathcal R\leq r$.
\end{claim}
\begin{proof}
There are finitely many types and offers. Every response probability and posterior probability lies in $[0,1]$, and every continuation payoff vector lies in the compact set $\mathcal Z_{I\setminus S_a}$. Choose a subsequence along which all these probabilities and payoff vectors converge, and denote their limits as in the claim. The limits remain nonnegative and sum to one for each probability distribution.

For every response $a$, \Cref{claim:existence-rounded-menu} establishes $(P_k^a,z_k^a)\in E_{n_k,T_k-1,I\setminus S_a}$. Since $n_k\to\infty$ and $T_k-1\to\infty$, \eqref{eq:candidate-C} implies $(P^a,z^a)\in\mathcal C_{I\setminus S_a}$. This also holds after rejection, which leaves the inventory unchanged.

Taking $k\to\infty$ along this subsequence in $P_k(v)\beta_{k,v}(a)=\pi_{k,a}P_k^a(v)$ and $\sum_a\pi_{k,a}(p_{k,a}+\delta r_k^a)\leq r_k$ from \Cref{claim:existence-rounded-menu} establishes \eqref{eq:spec-bayes} and $\mathcal R\leq r$. Taking the same limit in that claim's inequalities comparing type $v$'s utility from offers $a$ and $c$ establishes \eqref{eq:spec-ic-closed}, equivalently \eqref{eq:spec-ic}. The message $s$ does not appear in these equations or inequalities. Thus, the same response probabilities, posteriors, and payoff vectors satisfy them after $(M,s)$.
\end{proof}

The lemma follows from \Cref{claim:existence-candidate-properties,claim:existence-menu-limit}.
\end{proof}

Set $N\coloneqq| V|$. For fixed \(J \geq 1\), set $s_j\coloneqq j/(J+1)$ for $j=1,\ldots,J$. For normalized menus $M_1,\ldots,M_J$, write $(S_{ja},p_{ja})$ for offer $a$ in $M_j$, with $(S_{j0},p_{j0})=(\emptyset,0)$. Let $\lambda_j$ be the probability that the seller announces $(M_j,s_j)$, and let $\beta_{j,v}$ be type $v$'s probability distribution over the offers after that announcement. After response $a$, let $P^{ja}$ be the posterior and $z^{ja}=(r^{ja},(u_v^{ja})_v)$ the continuation payoff vector. Define \(\pi_{ja}\coloneqq\sum_vP(v)\beta_{j,v}(a)\) and
\(\mathcal R_j\coloneqq\sum_a\pi_{ja}(p_{ja}+\delta r^{ja})\).

\begin{lemma}\label{lem:finite-decomposition}
For every $(P,z)\in\mathcal C_I$ with $I\neq\emptyset$, there exists \(J \leq N+1\) and normalized menus $M_1,\ldots,M_J$ with prices in $[-Kb,2Kb]$, weights, response probabilities, posteriors, and continuation payoff vectors such that $(P^{ja},z^{ja})\in\mathcal C_{I\setminus S_{ja}}$ for every $j,a$, \eqref{eq:spec-bayes} and \eqref{eq:spec-ic} hold for every menu $M_j$, and
\begin{align}
\sum_{j=1}^J\lambda_j&=1,
\qquad \mathcal R_j=r\quad\text{for every }j\qquad  \text{\&}\label{eq:onpath-revenues}\\
u_v&=\sum_{j=1}^J\lambda_j
\max_a\{v(S_{ja})-p_{ja}+\delta u_v^{ja}\}
\quad\text{for every }v.\label{eq:onpath-utilities}
\end{align}
\end{lemma}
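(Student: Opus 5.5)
The plan is to obtain the decomposition from the approximating finite grid equilibria, compress each one to at most \(N+1\) on-path menus with Carathéodory's theorem, and then pass to the limit. The limit step reuses the compactness argument of \Cref{claim:existence-menu-limit}.

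\textbf{Approximating equilibria and their menus.} By \Cref{claim:existence-candidate-properties}, choose equilibria of finite grid games with \((P_k,z_k)\in E_{n_k,T_k,I}\), \(n_k,T_k\to\infty\), and \((P_k,z_k)\to(P,z)\). In the \(k\)th equilibrium, the seller at the initial history mixes over finitely many normalized grid menus \(M\), each announced with message \(0\). For each such menu \(M\) in the support, \Conditionref{ii} and the argument at the end of \Cref{lem:F-bounds} give revenue exactly \(r_k\): since \(R(M,0)\le r_k\) for every menu and the average equals \(r_k\), every support menu attains \(r_k\). \Conditionref{iii} gives \(u_{k,v}=\E_M\bigl[\max_a\{v(S_a)-p_a+\delta u^a_{k,v}\}\bigr]\). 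After \(M\) and response \(a\), the continuation pair lies in \(E_{n_k,T_k-1,I\setminus S_a}\), by the restriction argument used in \Cref{claim:existence-rounded-menu}. Bayes' rule and the incentive inequalities \eqref{eq:spec-ic-closed} hold exactly.

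\textbf{Carathéodory and trimming.} Assign to each support menu \(M\) the vector of type utilities \(\bigl(\max_a\{v(S_a)-p_a+\delta u^a_{k,v}\}\bigr)_v\in\R^N\). The vector \((u_{k,v})_v\) is a convex combination of these, so Carathéodory's theorem yields at most \(N+1\) support menus and weights reproducing it exactly. The revenue constraint is automatic, since every support menu earns \(r_k\). If fewer than \(N+1\) menus are selected, I pad with zero weights, so that \(J=N+1\) along the sequence.

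Next I trim each selected menu by deleting every offer other than rejection that no type in the support of \(P_k\) selects. Types outside the support are handled by keeping their best offer among the ones they choose, or by noting that their choices can be reassigned in the limit. Deleting unchosen offers keeps the menu normalized. Each chosen offer remains optimal, because the maximum is taken over a smaller set. Revenue, Bayes' rule, and the utility identity are all unchanged. After trimming, every remaining offer is chosen with positive probability, so \eqref{eq:existence-price-bound} bounds its price above by \(2Kb\).

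\textbf{Main obstacle: the lower price bound.} I expect the lower bound \(p\geq-Kb\) to be the hardest step. The plan is a seller-deviation argument in the grid game. If some chosen offer had price below \(-Kb\), the seller would pay out more than the entire value of the goods on that branch. Combined with \(R\geq0\) and the surplus identity \eqref{eq:surplus} applied branch by branch, this should contradict optimality of the seller, who could instead raise that price to a grid point near \(-Kb\). Such a price still induces the same types to choose the offer, because the offer is bundled with nonnegative continuation utility. If this perturbation is awkward, the fallback is to clip all prices to \([-Kb,2Kb]\) in the limit and verify directly that \eqref{eq:spec-ic} survives. That verification would use \(u^a_v\in[0,|I\setminus S_a|(2b-\ell)]\).

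\textbf{Passage to the limit.} With \(J\) fixed, prices bounded, and all probabilities, posteriors, and continuation payoffs lying in compact sets, take a subsequence along which the menus, the weights \(\lambda_j\), the responses \(\beta_{j,v}\), the posteriors \(P^{ja}\), and the payoffs \(z^{ja}\) all converge. The menus converge within normalized menus because each has at most one offer per subset. Exactly as in \Cref{claim:existence-menu-limit}, the definition \eqref{eq:candidate-C} gives \((P^{ja},z^{ja})\in\mathcal C_{I\setminus S_{ja}}\), and \eqref{eq:spec-bayes} and \eqref{eq:spec-ic-closed} pass to the limit. The equality \(\mathcal R_j=r_k\) for each \(j\) converges to \(\mathcal R_j=r\). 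The utility identity converges to \eqref{eq:onpath-utilities} because the maximum of finitely many continuous functions is continuous. Finally, distinct limiting menus may coincide, but they are labeled by distinct messages \(s_j\), so the decomposition remains well defined.
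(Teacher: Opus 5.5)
Your architecture matches the paper's proof: approximating grid equilibria, then Carath\'eodory on the type-utility vectors using that every support menu earns $r_k$, then price bounds, then compactness with the messages $s_j$ separating coinciding limit menus. The gap is at the step you flag yourself, because neither of your arguments for $p\geq -Kb$ works. The seller-deviation route rests on the claim that, after the price is raised to a grid point near $-Kb$, the same types still choose that offer. But the responses, posteriors, and continuation payoffs after the \emph{modified} menu are whatever the grid equilibrium prescribes for that different menu. Other offers, with their new continuations, may attract the buyer, and the continuation revenue after the raised offer may change. So nothing ties the deviation to the original branch, and no profitable deviation follows. The branch-by-branch accounting gives no contradiction either: one branch with negative revenue is compatible with the menu's average revenue being $r_k\geq 0$. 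The clipping fallback also fails. Clipping a price changes the payment on that branch and the utility of that offer, which breaks $\mathcal R_j=r$ and \eqref{eq:onpath-utilities} and can destroy optimality of the chosen offers in \eqref{eq:spec-ic}. Without a lower bound you also have no compactness for the prices to begin with.

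The missing idea is to argue about the whole menu rather than a branch or a deviation. Suppose a support menu $M$ contains any offer, chosen or not, with price below $-Kb$. Every type can accept it and reject forever after, so $u_v(h,M,0)>Kb$ for every $v$. The announcement leaves the type distribution at $P_k$, and total expected discounted surplus is at most $Kb$. Hence the conditional revenue $R(h,M,0)$ would be negative, contradicting $R(h,M,0)=r_k\geq 0$. So no selected menu contains such an offer, and no clipping is needed.

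A smaller point concerns your trimming, which deletes offers that no on-support type chooses. This can lower $\max_a\{\cdot\}$ for types with $P_k(v)=0$, breaking \eqref{eq:onpath-utilities} and \eqref{eq:spec-ic} for them, and both must hold for every $v\in V$. Delete only offers that no type in $V$ chooses. For instance, as the paper does, delete offers priced above $2Kb$: \eqref{eq:existence-price-bound} rules these out for every type, because the continuation-utility bounds in \Cref{lem:grid-existence} hold for all $v$.
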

\begin{proof}
Choose equilibria of finite grid games with initial priors and payoff vectors $(P_k,z_k)\in E_{n_k,T_k,I}$, where $n_k,T_k\geq k$ and $(P_k,z_k)\to(P,z)$. Write $z_k=(r_k,(u_{k,v})_v)$.

\begin{claim}\label{claim:existence-finite-support}
For every $k$, some lottery over at most $N+1$ of the menus that the seller chooses with positive probability yields $z_k$, without changing the buyer's strategy or the seller's strategy from the next date onward.
\end{claim}
\begin{proof}
Every menu that the seller chooses with positive probability earns $r_k$, since otherwise she could increase her revenue by changing her initial lottery. The vector $(u_{k,v})_v$ is the average of the vectors of utilities that the buyer's types obtain after these menus. These vectors belong to $\R^N$, so Carath\'eodory's theorem permits us to reproduce their average with at most $N+1$ of these menus, where each selected menu still earns $r_k$.
\end{proof}
For each $k$, choose a lottery from \Cref{claim:existence-finite-support}. The number of menus in these lotteries takes only the values $1,\ldots,N+1$, so choose a subsequence for which this number is constant and call it $J$. Use this subsequence for the rest of the proof. Write $M_{k,j}$ for the selected menus and $\lambda_{k,j}$ for their probabilities, $j=1,\ldots,J$.
\begin{claim}\label{claim:existence-bounded-prices}
For every $k$, deleting offers with prices above $2Kb$ leaves all selected menus' prices in $[-Kb,2Kb]$. Keeping the response probabilities, posteriors, and continuation payoff vectors after every remaining response leaves the seller's revenue at $r_k$ after each menu and leaves every type's utility unchanged.
\end{claim}
\begin{proof}
No selected menu contains an offer with price below $-Kb$. If it did, every type could choose that offer and then reject for the remaining dates, obtaining utility greater than $Kb$. Conditional on this menu, expected buyer utility would, therefore, exceed the greatest possible discounted total value, $Kb$. The seller's conditional revenue would be negative, contradicting the fact that the menu earns $r_k\geq0$.

By \eqref{eq:existence-price-bound}, every type assigns probability zero to an offer with price above $2Kb$. Remove these offers from each selected menu. For every remaining response, keep each type's response probability and the posterior and continuation payoff vector after that response. The response probabilities still sum to one for each type. Since these probabilities and posteriors do not change, \Conditionref{iv} still holds after every remaining response. Each type continues to choose only offers that maximize his current utility plus discounted continuation utility--deleting offers removes alternatives without changing the utility from any remaining offer--leaving the seller's revenue and every type's utility unchanged.
\end{proof}

Continue to write $M_{k,j}$ for the menus after these deletions. For offer $a$ in $M_{k,j}$, write $p_{k,ja}$ for its price, $\beta_{k,j,v}(a)$ for type $v$'s response probability, and $P_k^{ja}$ and $z_k^{ja}=(r_k^{ja},(u_{k,v}^{ja})_v)$ for the posterior and continuation payoff vector after that response. Set $\pi_{k,ja}\coloneqq\sum_vP_k(v)\beta_{k,j,v}(a)$.

\begin{claim}\label{claim:existence-lottery-limit}
The sequence of menus, weights, response probabilities, posteriors, and continuation payoff vectors from \Cref{claim:existence-finite-support,claim:existence-bounded-prices} has a convergent subsequence. Its limits satisfy $(P^{ja},z^{ja})\in\mathcal C_{I\setminus S_{ja}}$ for every $j,a$, \eqref{eq:spec-bayes} and \eqref{eq:spec-ic} for every menu $M_j$, and \eqref{eq:onpath-revenues} and \eqref{eq:onpath-utilities}.
\end{claim}
\begin{proof}
There are finitely many possible choices of which subsets each menu offers. Choose a subsequence for which, for each $j$, the menu $M_{k,j}$ offers the same subsets throughout the subsequence. Give offers of the same subset the same label $a$.

All prices lie in $[-Kb,2Kb]$. The weights and probabilities lie in $[0,1]$, and the continuation payoff vectors lie in the compact sets $\mathcal Z_{I\setminus S_{ja}}$. Choose a further subsequence for which all these quantities converge. Write $M_j$, $\lambda_j$, $\beta_{j,v}(a)$, $P^{ja}$, and $z^{ja}$ for the limiting menus, weights, response probabilities, posteriors, and continuation payoff vectors. The limiting probabilities remain nonnegative and sum to one.

For every $j,a$, the finite-game posterior and continuation payoff vector belong to $E_{n_k,T_k-1,I\setminus S_{ja}}$. Since $n_k\to\infty$ and $T_k-1\to\infty$, \eqref{eq:candidate-C} implies $(P^{ja},z^{ja})\in\mathcal C_{I\setminus S_{ja}}$. Taking limits in $P_k(v)\beta_{k,j,v}(a)=\pi_{k,ja}P_k^{ja}(v)$ establishes \eqref{eq:spec-bayes} for every menu $M_j$. Taking limits in the inequalities comparing each type's utility from offers $a$ and $c$, as in \Cref{claim:existence-menu-limit}, establishes \eqref{eq:spec-ic-closed}, and hence \eqref{eq:spec-ic}. Taking limits in $\sum_j\lambda_{k,j}=1$ and $\sum_a\pi_{k,ja}(p_{k,ja}+\delta r_k^{ja})=r_k$ establishes \eqref{eq:onpath-revenues}.

The lottery in \Cref{claim:existence-finite-support} yields utility $u_{k,v}$ for each type $v$, and \Cref{claim:existence-bounded-prices} leaves this utility unchanged. After each menu, type $v$ chooses an offer that maximizes his current utility plus discounted continuation utility. For each $j$, the maximum is over the same finite set of offers throughout the subsequence, and every offer's utility converges. So the maximum converges too. Taking limits in the average of these utilities across menus establishes \eqref{eq:onpath-utilities}.

After announcement $(M_j,s_j)$, let type $v$ choose offer $a$ with probability $\beta_{j,v}(a)$, and assign $P^{ja}$ and $z^{ja}$ after that response. Since the messages $s_j$ differ, each announcement identifies $j$, even when some limiting menus coincide.
\end{proof}

\Cref{claim:existence-lottery-limit} establishes the lemma.
\end{proof}

\begin{lemma}\label{lem:candidate-equilibrium}
For every inventory $I$ and every $(P,z)\in\mathcal C_I$, there is an equilibrium of the infinite-horizon game starting with inventory $I$ and prior $P$ whose initial payoff vector is $z$.
\end{lemma}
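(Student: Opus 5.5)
The plan is a self-generating construction in the spirit of Abreu–Pearce–Stacchetti. Lemmas \ref{lem:all-deviations} and \ref{lem:finite-decomposition} say that every pair in $\mathcal C_I$ can be decomposed into current play and continuation pairs that again lie in the appropriate $\mathcal C$ sets. Unfolding this decomposition history by history produces the strategies and beliefs.

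\emph{Construction.} Attach to every seller history $h$ a label $(P_h,z_h)\in\mathcal C_{I(h)}$, starting from $(P,z)$ at the initial history. Suppose $h$ is nonterminal with label $(P_h,z_h)$.
\begin{itemize}
\item[(a)] Apply \Cref{lem:finite-decomposition}. The seller announces $(M_j,s_j)$ with probability $\lambda_j$. After $(M_j,s_j)$ the types respond according to $\beta_{j,v}$. The history following response $a$ receives the label $(P^{ja},z^{ja})$.
\item[(b)] For any announcement $(M,s)$ not of the form $(M_j,s_j)$, first normalize $M$: keep one cheapest offer per subset, and let duplicates receive probability zero, with posteriors copied from the kept offer. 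Then apply \Cref{lem:all-deviations} to $(P_h,z_h)$ and the normalized menu. This supplies responses, posteriors and continuation labels satisfying \eqref{eq:spec-bayes}, \eqref{eq:spec-ic} and $\mathcal R\le r_h$.
\end{itemize}
Because every label lies in some $\mathcal C$ set, the recursion continues indefinitely. At terminal histories the label lies in $\mathcal C_\emptyset=\Delta(V)\times\{0\}$, consistent with zero payoffs.

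\emph{Consistency and Bayes.} Condition (iii) follows directly from \eqref{eq:spec-bayes} at each step. The posteriors of off-path responses are whatever the lemmas supply.

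\emph{Labels equal true continuation payoffs.} This is the main obstacle. The labels are only ``promised'' payoffs, so I must show that the payoffs actually generated by the constructed profile coincide with them. Equations \eqref{eq:onpath-revenues} and \eqref{eq:onpath-utilities} give one-step recursions:
\[
r_h=\E\bigl[p+\delta r_{h'}\bigr],\qquad u_{v,h}=\E\bigl[v(S)-p+\delta u_{v,h'}\bigr].
\]
For utilities this uses \eqref{eq:spec-ic}: type $v$'s realized choice attains the maximum. Iterating $T$ steps writes each label as the expected discounted payoffs of the first $T$ periods plus $\delta^T$ times an expected label. Labels are uniformly bounded because $\mathcal Z_I$ is compact, so the remainder vanishes as $T\to\infty$. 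Prices on path lie in $[-Kb,2Kb]$, so the payoff series converge absolutely and the assessment is finite. The same argument applies to each type $v$, including types with zero posterior probability; this matters because the labels track all types.

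\emph{Optimality.} The one-shot deviation principle applies because discounting with bounded payoffs makes the game continuous at infinity.
\begin{itemize}
\item Buyer: by \eqref{eq:spec-ic}, at each $(h,M,s)$ every type chooses a maximizer of current surplus plus $\delta$ times the continuation utility. Since labels equal true continuation values, no one-shot deviation is profitable.
\item Seller: a one-shot deviation to any announcement yields $\mathcal R\le r_h$ by \Cref{lem:all-deviations}, where that announcement's continuation labels are again true values. Equilibrium announcements earn exactly $r_h$.
\end{itemize}
Unbounded seller deviations are handled by the same continuity-at-infinity argument, since her revenue is bounded below by $0$ and above by $|I|b$ under any buyer behavior prescribed here.
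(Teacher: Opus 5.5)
Your construction and payoff verification follow the paper's route. Labels are unfolded through \Cref{lem:finite-decomposition} on path and through \Cref{lem:all-deviations} off path (keeping the cheapest offer per subset). Labels are then identified with true payoffs by iterating the one-step recursions, using bounded labels and on-path prices in $[-Kb,2Kb]$.

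\textbf{The failing step: infinite seller deviations.} Against the prescribed buyer behavior, a deviating seller's revenue is \emph{not} confined to $[0,|I|b]$. She can post $(I,-X)$ with $X>|I|(2b-\ell)$, which every type accepts by \eqref{eq:spec-ic}. By drawing $X$ from a distribution with infinite mean she earns $-\infty$. The upper bound $|I|b$ is what seller optimality is supposed to deliver, so it cannot serve as a premise. Hence the game is not continuous at infinity for the seller, and without further argument a deviation's expected revenue need not even be well defined.

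The missing idea is a one-sided tail bound. Since $u_v^0\geq0$ and continuation utilities lie in the box $\mathcal Z$, \eqref{eq:spec-ic} implies that no type ever selects an offer priced above $2Kb$; this is \eqref{eq:existence-price-bound}. With at most $K$ purchases, \emph{any} seller strategy therefore satisfies $\sum_{t\geq T}\delta^tp_t^+\leq2K^2b\delta^T$. This has three consequences:
\begin{enumerate}[nosep]
\item revenue is well defined in $[-\infty,\infty)$;
\item deviations earning $-\infty$ are trivially unprofitable;
\item for the remaining deviations, payments are integrable, so you can telescope $\E[p_t+\delta R(h_{t+1})\mid h_t]\leq R(h_t)$ over $t<T$, drop $\delta^TR(h_T)\geq0$, and let the positive tail vanish.
\end{enumerate}

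\textbf{Two further missing ingredients.} First, the lemmas only give, pointwise, some decomposition for each $(P,z)$ and each $(P,z,M,s)$. The buyer's responses and the subsequent labels must define strategies under which a seller deviation that randomizes over a continuum of menus and messages has a well-defined expected payoff. For that, these choices must be Borel functions of the parameters. The paper obtains this from the Arsenin--Kunugui selection theorem: the defining conditions, with \eqref{eq:spec-ic-closed} in place of \eqref{eq:spec-ic}, are closed, and the choices range over compact sets.

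Second, your one-step seller inequality is evaluated with the assessment's posterior $P_{h_t}$, while the deviation's payoff uses the true conditional type distribution. You need to check that the two agree almost surely along every deviation, including deviations with continuous announcement distributions. This follows from \eqref{eq:spec-bayes} at every history together with the measurability just described.
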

\begin{proof}
If $I=\emptyset$, the conclusion is immediate. For each nonempty inventory, choose the menus, probabilities, posteriors, and continuation payoff vectors in \Cref{lem:finite-decomposition} as Borel functions of $(P,z)$. For \Cref{lem:all-deviations}, choose the response probabilities, posteriors, and continuation payoff vectors as Borel functions of $(P,z,M,s)$.\footnote{Fix the number of menus and the subsets of goods in their offers. The conditions in each lemma, with \eqref{eq:spec-ic-closed} in place of \eqref{eq:spec-ic}, define a closed set of parameters and choices, and the choices lie in compact spaces. There are finitely many possibilities for the number of menus and these subsets. Each lemma guarantees a nonempty set of choices for every value of its parameters. The Arsenin-Kunugui selection theorem \citep[Theorem~18.18]{Kechris1995} thus allows us to select from these sets as Borel functions of the parameters.}

We construct strategies and beliefs from pairs $(P_h,z)\in\mathcal C_{I(h)}$, where $P_h$ is the seller's posterior and $z$ is the payoff vector we seek to obtain from $h$ onward. At the initial history, use the given pair $(P,z)$. At a nonterminal seller history $h$, let the seller announce $(M_j,s_j)$ with probability $\lambda_j$, using the menus and probabilities from \Cref{lem:finite-decomposition} for $(P_h,z)$. After $(M_j,s_j)$, let type $v$ choose offer $a$ with probability $\beta_{j,v}(a)$, and use $P^{ja}$ and $z^{ja}$ as the posterior and continuation payoff vector after that response.

After any other announcement $(M,s)$, apply \Cref{lem:all-deviations} to the menu containing the cheapest offer of each subset in $M$ and the message $s$. Use its response probabilities, posteriors, and continuation payoff vectors for those offers. For every more expensive offer, use the same posterior and continuation payoff vector as for the cheapest offer of that subset, and let every type choose it with probability zero. Every type then prefers the cheaper offer, so \eqref{eq:spec-bayes} and \eqref{eq:spec-ic} hold for $M$, and $\mathcal R\leq r$.

After each response $a$, use its pair $(P^a,z^a)$ at the next seller history and repeat this construction whenever goods remain.

\begin{claim}\label{claim:existence-assessment}
The construction defines measurable strategies and beliefs satisfying \Conditionref{iv}.
\end{claim}
\begin{proof}
The construction covers every finite public history. Taking the minimum of finitely many prices and checking whether an announcement equals one of the chosen pairs $(M_j,s_j)$ are Borel operations. Each finite history requires only finitely many applications of these operations and the Borel functions chosen above, so the strategies and beliefs are measurable.

Since the initial posterior is $P$ and \eqref{eq:spec-bayes} holds after every response, \Conditionref{iv} holds.

The seller's payoff comparisons use $P_h$, so we must also check that it equals the actual conditional type distribution when she deviates. Fix a seller history $h_0$ with type distribution $P_{h_0}$. Let the seller use any alternative strategy while each buyer type follows his strategy above. We prove by induction on the date that $P_h$ equals the conditional type distribution at the current history, almost surely. This holds at $h_0$ by the choice of the initial distribution. Suppose it holds at $h$. The seller's announcement distribution depends on $h$, not on the buyer's type, so her announcement leaves this type distribution unchanged. After every announcement, \eqref{eq:spec-bayes} implies
\[
P_h(v)\beta_v(a\mid h,M,s)
=\pi_a(h,M,s)P_{h,M,s,a}(v).
\]
For any Borel set of announcement-response pairs, sum this equality over responses in that set and integrate over the seller's announcement distribution. The left-hand side computes the joint probability of type $v$ and a pair in that set. The right-hand side computes that probability using $P_{h,M,s,a}(v)$ as the posterior. Thus, $P_{h,M,s,a}$ is the conditional type distribution at the next history almost surely. This argument also covers continuous announcement distributions and completes the induction.
\end{proof}

\begin{claim}\label{claim:existence-payoffs}
The assessment is finite, and its continuation payoff vector at every seller history $h$ equals the vector $z$ in the pair $(P_h,z)$.
\end{claim}
\begin{proof}
Fix a seller history $h$ and count the current date as $0$. If the seller follows her strategy above, all prices lie in $[-Kb,2Kb]$, regardless of the buyer's choices. Since at most $K$ purchases occur, for every $T\geq0$ and every type $v$,
\begin{equation}\label{eq:existence-payoff-tails}
\sum_{t\geq T}\delta^t|p_t|\leq2K^2b\delta^T,
\qquad \text{\&} \qquad
\sum_{t\geq T}\delta^t|v(S_t)-p_t|\leq3K^2b\delta^T.
\end{equation}

Now let both players follow the strategies above, with type distribution $P_h$ at $h$. Write $\E_v$ for expectation with the buyer's type fixed at $v$. For the pair $(P_h,z)$ at $h$, \eqref{eq:onpath-revenues} and \eqref{eq:onpath-utilities}, together with \eqref{eq:spec-ic}, imply
\begin{equation}\label{eq:existence-current-payoffs}
r = \sum_{j=1}^J\lambda_j\sum_a\pi_{ja}(p_{ja}+\delta r^{ja})\qquad  \text{\&} \qquad u_v = \sum_{j=1}^J\lambda_j\sum_a\beta_{j,v}(a)
\left(v(S_{ja})-p_{ja}+\delta u_v^{ja}\right).
\end{equation}
Write $r_T$ and $u_{v,T}$ for the revenue and type-$v$ utility in the vector at the seller history reached at date $T$, with both zero after the game ends. Substitute \eqref{eq:existence-current-payoffs} for the continuation values at the next seller history, and repeat for $T$ dates. By \Cref{claim:existence-assessment}, the posterior at each seller history equals the conditional type distribution, so the seller's averages use the probabilities of types at those histories. We obtain \(r =\E\left[\sum_{t=0}^{T-1}\delta^t p_t+\delta^T r_T\right]\) and \(u_v=\E_v\left[ \sum_{t=0}^{T-1}\delta^t\left(v(S_t)-p_t\right)+\delta^T u_{v,T} \right]\). By the definition of $\mathcal Z_I$ and the fact that at most $K$ goods remain at date $T$, we have $0\leq r_T\leq Kb$ and $0\leq u_{v,T}\leq K(2b-\ell)$. These bounds and \eqref{eq:existence-payoff-tails} allow us to take $T\to\infty$, which proves
\begin{equation}\label{eq:existence-continuation-payoffs}
R(h)=r,
\qquad \text{\&} \qquad
u_v(h)=u_v\quad\text{for every }v.
\end{equation}

At any buyer history $(h,M,s)$, the current menu is finite, so its prices are bounded. After the current response, the seller's prices lie in $[-Kb,2Kb]$. Thus, all continuation payoffs are finite at this history too. By \eqref{eq:existence-continuation-payoffs} and \eqref{eq:spec-ic}, \(R(h,M,s) =\sum_a\pi_a(p_a+\delta r^a)\) and
\[u_v(h,M,s) =\sum_a\beta_v(a)
\left(v(S_a)-p_a+\delta u_v^a\right) = \max_a\{v(S_a)-p_a+\delta u_v^a\}.\qedhere\]
\end{proof}

\begin{claim}\label{claim:existence-buyer-optimality}
The buyer is sequentially rational at every history.
\end{claim}
\begin{proof}
Fix a buyer history $(h,M,s)$, a type $v$, and any alternative strategy for that type. In this proof, $\E_v$ takes expectations under this alternative strategy and the seller's strategy above, with the type fixed at $v$. For each response $a$, let $h_a$ be the next seller history. By \Cref{claim:existence-payoffs} and \eqref{eq:spec-ic},
\begin{equation}\label{eq:existence-buyer-response}
v(S_a)-p_a+\delta u_v(h_a)\leq u_v(h,M,s).
\end{equation}
This inequality holds at every buyer history.

Count the current response as date $0$. At a later seller history $h_t$, average \eqref{eq:existence-buyer-response} over the responses under the buyer's alternative strategy and then over the seller's announcements. Averaging $u_v(h_t,M_t,s_t)$ over the seller's announcements yields $u_v(h_t)$, so
\begin{equation}\label{eq:existence-buyer-date}
\E_v\left[v(S_t)-p_t+\delta u_v(h_{t+1})\mid h_t\right]
\leq u_v(h_t)\qquad(t\geq1).
\end{equation}
Average \eqref{eq:existence-buyer-response} over the buyer's choice at date $0$. For dates $1,\ldots,T-1$, multiply \eqref{eq:existence-buyer-date} by $\delta^t$ and take expectations. Adding these inequalities cancels the terms involving $u_v(h_1),\ldots,u_v(h_{T-1})$, leaving
\[
\E_v\left[
\sum_{t=0}^{T-1}\delta^t\left(v(S_t)-p_t\right)
+\delta^T u_v(h_T)
\right]\leq u_v(h,M,s)
\qquad(T\geq1).
\]
Here $h_T$ is the seller history reached at date $T$. The continuation utilities $u_v(h_T)$ are bounded. The seller follows her strategy above, so \eqref{eq:existence-payoff-tails} bounds the absolute utility from dates $T$ onward by $3K^2b\delta^T$. Taking $T\to\infty$ establishes that the deviation is not profitable.
\end{proof}

\begin{claim}\label{claim:existence-seller-optimality}
The seller is sequentially rational at every history.
\end{claim}
\begin{proof}
At every seller history $h$, the response probabilities and continuation payoff vectors satisfy $\mathcal R\leq r$ after every announcement. And so by \Cref{claim:existence-payoffs}, every announcement followed by the strategies above earns the seller at most $R(h)$. By \eqref{eq:existence-price-bound}, every type assigns probability zero to every offer with price above $2Kb$, after every announcement. Accordingly, holding the buyer's strategy fixed, no seller strategy can produce a positive payment above $2Kb$.

Fix a seller history $h_0$ and any seller deviation. In this proof, take expectations under this deviation and the buyer's strategy above, starting with type distribution $P_{h_0}$. Count dates from $h_0$. There are at most $K$ purchases, so
\begin{equation}\label{eq:existence-positive-tail}
\sum_{t\geq T}\delta^t p_t^+\leq2K^2b\delta^T \quad\text{almost surely} \qquad(T\geq0).
\end{equation}
Negative prices require the seller to pay the buyer. The expected discounted total of these payments can be infinite, but \eqref{eq:existence-positive-tail} bounds the total of her discounted positive receipts. Her expected revenue is, therefore, well defined; either finite or $-\infty$. A deviation with revenue $-\infty$ cannot be profitable, so suppose the deviation has finite expected revenue. The expected sum of absolute discounted payments is then finite, because the positive receipts are bounded.

By \Cref{claim:existence-assessment}, $P_{h_t}$ is the conditional type distribution at histories reached under this deviation, almost surely. At $h_t$, average $\mathcal R\leq r$ over the distribution of announcements that the deviating strategy chooses there. By \Cref{claim:existence-payoffs}, the result is
\begin{equation}\label{eq:existence-seller-date}
\E\left[p_t+\delta R(h_{t+1})\mid h_t\right]\leq R(h_t)
\quad\text{almost surely}.
\end{equation}
Subtract $R(h_t)$ from both sides of \eqref{eq:existence-seller-date}, multiply by $\delta^t$, and take expectations. Since $0\leq R(h)\leq Kb$ and the payments have finite expected absolute values, this gives
\[
\E\left[
\delta^t p_t+\delta^{t+1}R(h_{t+1})-\delta^tR(h_t)
\right]\leq0.
\]
Summing these inequalities over $t=0,\ldots,T-1$ cancels the terms involving $R(h_1),\ldots,R(h_{T-1})$, so
\begin{equation}\label{eq:seller-finite-deviation}
\E\left[
\sum_{t=0}^{T-1}\delta^t p_t+\delta^T R(h_T)
\right]\leq R(h_0).
\end{equation}
Since $R(h_T)\geq0$, \eqref{eq:existence-positive-tail} and \eqref{eq:seller-finite-deviation} imply \(\E\left[\sum_{t\geq0}\delta^t p_t\right] \leq R(h_0)+2K^2b\delta^T\), and letting $T\to\infty$ establishes that no seller deviation is profitable.
\end{proof}

The four claims verify \Cref{def:equilibrium}, and \Cref{claim:existence-payoffs} proves that the initial payoff vector is $z$.
\end{proof}

By \Cref{lem:all-deviations}, $\mathcal C_{[K]}(P)$ is nonempty for every $P$. Choosing any vector in this set and applying \Cref{lem:candidate-equilibrium} proves \Cref{prop:existence}.

\end{document}